\newtheorem{definition}{Definition}
\newtheorem{lemma}{Lemma}
\newtheorem{theorem}{Theorem}
\newtheorem{corollary}{Corollary}
\newtheorem{example}{Example}
\begin{document}

\title{A Characterization of Nash Equilibrium in Behavioral Strategies through Local Sequential Rationality\footnote{This work was partially supported by GRF: CityU 11215123 of Hong Kong SAR Government.}}

\author{Yiyin Cao\textsuperscript{\ref{fnote1}} and Chuangyin Dang\textsuperscript{\ref{fnote2}}\footnote{Corresponding Author}}
\date{ }

\maketitle

\footnotetext[1]{School of Management, Xi'an Jiaotong University, Xi'an, China, yiyincao2-c@my.cityu.edu.hk\label{fnote1}}

\footnotetext[2]{Department of Systems Engineering, City University of Hong Kong, Kowloon, Hong Kong, mecdang@cityu.edu.hk \label{fnote2}}

\date{ }
\maketitle

\begin{abstract}

The concept of Nash equilibrium in behavioral strategies (NashEBS) was formulated By Nash~\cite{Nash (1951)} for an extensive-form game through global rationality of nonconvex payoff functions. Kuhn's payoff equivalence theorem resolves the nonconvexity issue, but it overlooks that one Nash equilibrium of the associated normal-form game can correspond to infinitely many NashEBSs of an extensive-form game. To remedy this multiplicity,  the traditional approach as documented in Myerson~\cite{Myerson (1991)}  involves a two-step process: identifying a Nash equilibrium of the agent normal-form representation, followed by verifying whether the corresponding mixed strategy profile is a Nash equilibrium of the associated normal-form game, which often scales exponentially with the size of the extensive-form game tree. In response to these challenges, this paper develops a characterization of NashEBS through local sequential rationality of linear payoff functions, which is achieved with the introduction of an extra behavioral strategy profile and an application of self-independent beliefs. This characterization allows one to analytically determine all NashEBSs for small extensive-form games and directly prove the existence of NashEBS. Building upon this characterization, we acquire a polynomial system to serve as a necessary and sufficient condition for determining whether a behavioral strategy profile is a NashEBS. When the extra strategy profile is identical to the original strategy profile, we gain a strict refinement of NashEBS, which is named as semi-sequential equilibrium.  An application of the characterization yields differentiable path-following methods for computing such an equilibrium.
 
%The concept of Nash equilibrium in behavioral strategies (NEBS) was formulated for an extensive-form game by invoking dynamic programming. To compute a NEBS in an extensive-form game, the traditional approach invokes a transformation to its associated normal-form representation. However, the associated normal-form game is often exponentially large in the size of the game tree and essentially independent of the structure of game trees and players' beliefs. To address these limitations, this paper develops a characterization of NEBS by introducing an auxiliary behavioral strategy profile and a self-independent belief system. The characterization just requires local sequential rationality defined by linear payoff functions at every information set,  ensuring global rationality. The characterization enables us to analytically determine all Nash equilibria for small extensive-form games and establish the existence of Nash equilibrium without relying on the associated normal-form games. Building upon this characterization, we derive a polynomial system that serves as a necessary and sufficient condition for Nash equilibrium. Moreover, as a byproduct, we secure a differentiable path-following method for computing a Nash equilibrium from the characterization. Comprehensive numerical experiments further confirm the effectiveness and efficiency of the method.

\end{abstract}

{\bf Keywords}: Game Theory, Extensive-Form Game, Nash Equilibrium in Behavioral Strategies, Differentiable Path-Following Method

{\bf JEL codes}: C61, C72

\section{Introduction}

An extensive-form game is a broad class of noncooperative games in game theory.  Its mathematical formulation has a tree structure and specifies the physical order of play, the actions available to a player whenever it is the player's turn to move, the rules for determining whose turn to move at any point, the information a player has whenever it is the player's turn to move, the payoffs to the players as the functions of the actions they take, and all the possible actions the chance player has. Nash equilibrium in behavioral strategies (NashEBS), as conceptualized by Nash~\cite{Nash (1951)}, stands as a pivotal and elegant concept in game theory. It delineates a standard of rational behavior within an extensive-form game, incorporating global rationality characterized by nonconvex payoff functions. This rationality condition requires each player to maximize his expected payoff by adapting his actions in response to the actions of other players. Notably, Nash's~\cite{Nash (1951)} definition is fundamentally independent of the tree structure of an extensive-form game and does not involve any players' beliefs about the probability of reaching a specific history within an information set.
Due to the nonconvexity in the rationality condition, the existence of NashEBS of an extensive-form game was established in the existing literature through the associated normal-form game. Furthermore, to address this nonconvexity in computing a NashEBS for an extensive-form game, some studies resort to finding Nash equilibria of the associated normal-form representation of the extensive-form game. However, the number of pure strategy profiles in the associated normal-form game and the reduced normal-form game increases exponentially with the size of the game tree. Besides,  the associated normal-form game and the reduced normal-form game may exhibit a vast array of equilibria, all of which are behaviorally equivalent. To overcome this limitation, Koller and Megiddo~\cite{Koller and Megiddo (1992)} and von Stengel~\cite{von Stengel (1996)} introduced a strategic description of an extensive-form game known as the sequence form. The sequence form resembles a matrix scheme similar to the normal form, but instead of pure strategies, it employs sequences of consecutive moves (referred to as realization plans) that are fewer in number. Although the sequence form offers computational advantages in computing equilibria, the practical significance of realization plans may not be as profound as behavioral strategies, which assign at each information set a probability distribution over the set of possible actions. Additionally, a realization plan in the sequence form could correspond to an infinite number of behavioral strategies and the computational advantage of the sequence form vanishes in the computation of the conditional expected payoffs on information sets.
 
Nash equilibrium places no restriction on the behavior of players at the information sets reached with probability zero. This gives rise to equilibria supported by unreasonable behavior off the equilibrium path. To alleviate this deficiency with the concept of Nash equilibrium, Selten~\cite{Selten (1965), Selten (1975)} initiated the program of refining Nash equilibrium by placing restrictions on the behavior of players at the information sets off the equilibrium path. This led to the development of the concepts of subgame perfect equilibrium and perfect equilibrium. Specifically, a Nash equilibrium is subgame perfect if the players' strategies constitute a Nash equilibrium in every subgame. Selten~\cite{Selten (1975)} introduced the concept of perfect equilibrium as a refinement of both Nash equilibrium and subgame perfect equilibrium. This notion emphasizes sequential rationality in every information set. Specifically, a perfect equilibrium is defined as a limit point of a sequence of $\varepsilon$-perfect equilibria. Another approach to address unreasonable equilibria is to impose two key requirements. Firstly, at each information set, the player with the move must possess a belief about which history within the information set has been reached. Secondly, given these beliefs, the players' actions must align with the principles of sequential rationality. By incorporating the concept of sequential rationality and formulating diverse beliefs, Fudenberg and Tirole~\cite{Fudenberg and Tirole (1991)} and Kreps and Wilson~\cite{Kreps and Wilson (1982)} introduced the concepts of perfect Bayesian equilibrium and sequential equilibrium, which serve as refinements of Nash equilibrium. Several notions of belief systems have been put forward in the existing literature to model how players' beliefs evolve upon receiving new information, including the notion of strong belief proposed by Battigalli and Siniscalchi~\cite{Battigalli and Siniscalchi (2002)}, the belief associated with the principle of ``updating previous beliefs whenever possible" in Ben-Porath~\cite{Ben-Porath (1997)}, and the belief within a sequential communication equilibrium as outlined by Myerson~\cite{Myerson (1986)}.
Given the significance of beliefs in defining equilibria in extensive-form games, Mas-Colell et al.~\cite{Mas-Colell et al. (1995)} proposed a necessary and sufficient condition in their Proposition 9.C.1 to characterize Nash equilibrium with beliefs.
While the characterization by Mas-Colell et al.~\cite{Mas-Colell et al. (1995)} establishes a connection between the concept of Nash equilibrium and the notion of a belief system, it necessitates global rationality within nonconvex payoff functions, thereby requiring the utilization of the associated normal-form game of an extensive-form game.

To bridge this gap, our paper proposes a characterization of NashEBS. This characterization is achieved by introducing an extra behavioral strategy profile and beliefs, leveraging the principles of local sequential rationality and self-independent consistency. In our characterization, local sequential rationality plays a crucial role. It is defined by linear payoff functions at every information set and necessitates that each player maximizes their expected payoff at every information set, given the actions chosen at all other information sets and the beliefs. Another key aspect is the notion of a self-independent belief system, which allows us to express a player's belief in an information set as a probability distribution. The core feature of a self-independent belief system is that each player's belief in an information set is induced by previous actions along a history in the information set, regardless of one's own actions and beliefs along that history.\footnote{The requirement of a self-independent belief system aligns with the condition known as ``same information about others implies same beliefs" in the existing literature, which necessitates that beliefs regarding other players rely solely on information about others' behavior and not on one's own previous actions.} The combination of local sequential rationality and a self-independent belief system in the characterization ensures global rationality. This characterization enables us to analytically determine all NashEBSs for small extensive-form games and establish the existence of NashEBS without relying on associated normal-form games but instead directly leveraging the extensive-form game structure.
As an application of the characterization, we acquire a polynomial system as a necessary and sufficient condition for a NashEBS, which can be exploited in the development of a general method for computing a NashEBS. Additionally, a characterization of subgame perfect equilibrium in behavioral strategies is achieved. 

%Drawing inspiration from Reny's~\cite{Reny (1992)} notion of weak sequential rationality and substituting Kreps and Wilson's consistency with self-independent consistency within Reny's weak sequential rationality, we introduce the concept of self-rational Nash equilibrium and present a characterization of self-rational Nash equilibrium based on local rationality. This concept, weaker than Reny's weak sequential rationality, still serves as a refinement of Nash equilibrium.

To the best of our knowledge, the existing literature has not established a general method that does not rely on the associated normal form or the sequence form for computing a NashEBS in $n$-person extensive-form games. Nevertheless, several notable contributions have been made in the computation of NashEBSs, specifically in two-person extensive-form games. Wilson~\cite{Wilson (1972)} extended the Lemke-Howson pivoting method to compute NashEBSs in two-person extensive-form games. von Stengel~\cite{von Stengel (1996)} introduced the sequence form as a replacement for the associated normal-form game. The sequence form, coupled with the Lemke-Howson method, facilitates the computation of Nash equilibria for two-person extensive-form games with zero-sum payoffs. This technique was further applied in Koller et al.~\cite{Koller et al. (1996)} to compute Nash equilibria in general two-person extensive-form games. For a comprehensive review of path-following methods for computing Nash equilibria in extensive-form games, Herings and Peeters~\cite{Herings and Peeters (2010)} provided an excellent resource. Given the favorable attributes of path-following methods demonstrated in existing literature, particularly their ability to tackle global tasks by iteratively employing local approximations, we exploit our characterizations to develop differentiable path-following methods for computing Nash equilibria and subgame perfect equilibria in $n$-person extensive-form games. These methods incorporate logarithmic-barrier terms into payoff functions to constitute logarithmic-barrier extensive-form games in which each player at each of his information sets solves a strictly convex optimization problem. By applying optimality conditions to these optimization problems alongside the equilibrium condition, we derive polynomial equilibrium systems for the barrier games. These systems specify a smooth path to a Nash equilibrium and a subgame perfect equilibrium, respectively. 

The rest of this paper is organized as follows. In Section 2, we introduce notations for extensive-form games and revisit the original definition of NashEBS. We present the characterization of NashEBS in Section 3. We give the characterization of subgame perfect equilibrium in behavioral strategies in Section 4. We exemplify the application of our characterizations by presenting three illustrative examples in Section 5. 
We develop differentiable path-following methods to compute Nash equilibria and subgame perfect equilibria in Section 6. Comprehensive numerical experiments are reported in Section 7 to further confirm the efficiency of the methods. The concluding remarks are made in Section 8.

\section{Extensive-Form Games and Nash Equilibrium in Behavioral Strategies}

This paper is concerned with finite extensive-form games with perfect recall.  To describe such a game in accordance with Osborne and Rubinstein~\cite{Osborne and Rubinstein (1994)}, we need some necessary notations, which are summarized in Table~\ref{Table} for ease of reference. A selection of these notations is illustrated in Fig.~\ref{Notation}.
\begin{table}[H]\setlength{\abovedisplayskip}{1.2pt}
\setlength{\belowdisplayskip}{1.2pt}
\linespread{1.2} 
\footnotesize
\centering
\caption{Notation for Extensive-Form Games}
\label{Table}
\begin{tabular*}{\hsize}{@{}@{\extracolsep{\fill}}l|l@{}}
\hline
Notation & Terminology\\
\hline
$N=\{1,2,\ldots,n\}$ & Set of players without the chance player\\ 
$h=\langle a_1,a_2,\ldots,a_L\rangle$ &  A history, which is a sequence of actions taken by players\\ 
$H$, $\emptyset\in H$ &  Set of histories, $\langle a_1,\ldots,a_L\rangle\in H$ if $\langle a_1,\ldots,a_K\rangle\in H$ and $L<K$\\ 
$Z$ & Set of terminal histories\\ 
$A(h)=\{a:\langle h,a\rangle\in H\}$ & Set of actions after a nonterminal history $h\in H$\\ 
$P(h)$ & Player who takes an action after a history $h\in H$\\ 
${\cal I}_i$ & Collection of information partitions of player $i$\\ 
$I_i^j\in {\cal I}_i$, $j\in M_i=\{1,\ldots,m_i\}$ & $j$th information set of player $i$, $A(h)=A(h')$ whenever $h,h'\in I_i^j$\\ 
$A(I^j_i)=A(h)$ for any $h\in I^j_i$ & Set of actions of player $i$ at information set $I^j_i$\\ 
$\beta=(\beta^i_{I_i^j}(a):i\in N,I_i^j\in{\cal I}_i,a\in A(I_i^j))$ & Profile of behavioral strategies\\ 
$\beta^i=(\beta^i_{I_i^j}:j\in M_i)$ & Behavioral strategy of player $i$\\  
$\beta^{-i}=(\beta^p_{I^p_q}:p\in N\backslash\{i\},q\in M_p)$ & Profile of behavioral strategies without $\beta^i$\\ 
$\beta^i_{I_i^j}=(\beta^i_{I_i^j}(a):a\in A(I_i^j))^\top$ & Probability measure over $A(I_i^j)$ and $\beta^i_h=\beta^i_{I^j_i}$ for any $h\in I^j_i$\\
$\beta^{-I^j_i}=(\beta^p_{I^p_q}:p\in N,q\in M_p,I^q_p\ne I^j_i)$ & Profile of behavioral strategies without $\beta^i_{I^j_i}$\\ 
$f_c(\cdot|h)=(f_c(a|h):a\in A(h))^\top$ & Probability measure of the chance player $c$ over $A(h)$ \\ 
$\mu=(\mu^i_{I_i^j}(h):i\in N,I_i^j\in{\cal I}_i,h\in I_i^j)$ & Belief system, $\sum_{h\in I_i^j}\mu^i_{I_i^j}(h)=1$ and $\mu^i_{I_i^j}(h)\ge 0$ for all $h\in I_i^j$\\ 
$h\cap A(I^j_i)$; $a\in h$ & $\{a_1,\ldots,a_L\}\cap A(I^j_i)$; $a\in \{a_1,\ldots,a_L\}$ for $h=\langle a_1,\ldots,a_L\rangle$\\
$u^i:Z\rightarrow\mathbb{R}$ & Payoff function of player $i$\\ \hline
\end{tabular*}
\end{table}
\begin{figure}[ht]
    \centering
    \begin{minipage}{0.40\textwidth}
        \centering
        \includegraphics[width=0.9\textwidth, height=0.20\textheight]{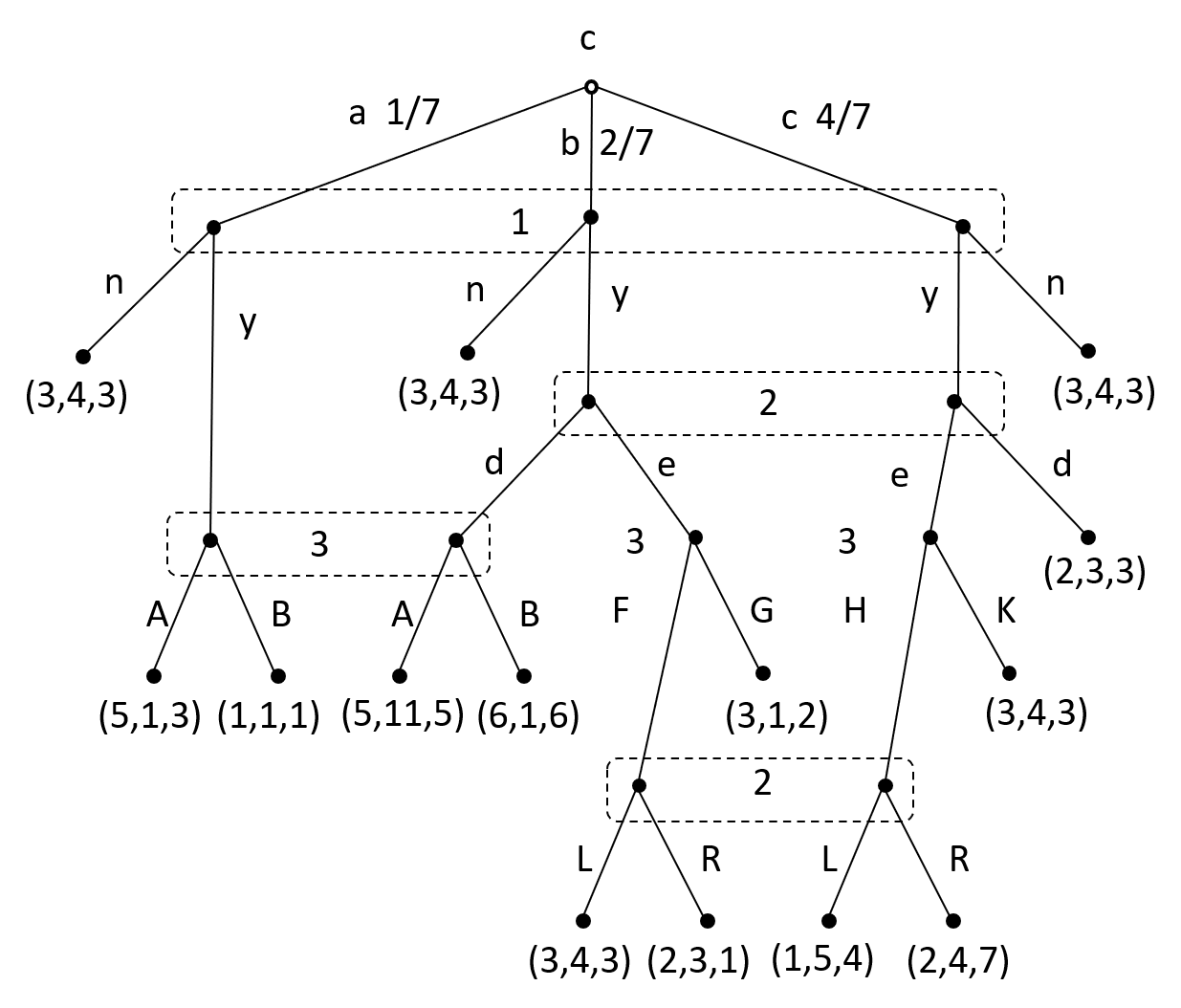}
        % first figure itself
\end{minipage}\hfill
    \begin{minipage}{0.60\textwidth}
       {\scriptsize
       Information sets: ${\cal I}_1=\{I^1_1\}$, ${\cal I}_2=\{I^1_2,I^2_2, I^3_2\}$,  ${\cal I}_3=\{I^1_3,I^2_3\}$, $I^1_1=\{\langle a\rangle,\langle b\rangle,\langle c\rangle\}$, $I^1_2=\{\langle b, y\rangle, \langle c, y\rangle\}$, $I^2_2=\{\langle b, y, e, F\rangle, \langle c, y, e, H\rangle\}$, $I^3_2=\{\langle a,  y \rangle, \langle b,  y, d \rangle\}$, $I^1_3=\{\langle b,  y, e \rangle\}$, $I^2_3=\{\langle c,  y, e \rangle\}$.\newline Player who takes an action after a history: $P(\emptyset)=c$, $P(\langle b\rangle)=1$, $P(\langle b, y\rangle)= 2$, $P(\langle c, y, e, H\rangle)=2$, $P(\langle b, y, d\rangle)=2$, $P(\langle b, y, e\rangle)=3$.\newline
       Action sets: $A(\emptyset)=\{a,b,c\}$, $A(I^1_1)=\{n, y\}$,  $A(I^2_2)=\{L, R\}$, $A(I^2_3)=\{H, K\}$. \newline
       Behavioral strategies: $f_c(\cdot|\emptyset)=(f_c(a|\emptyset), f_c(b|\emptyset),f_c(c|\emptyset))^\top=(\frac{1}{7},\frac{2}{7},\frac{4}{7})^\top$,
       $\beta^1_{I^1_1}=(\beta^1_{I^1_1}(n),\beta^1_{I^1_1}(y))^\top$, $\beta^2_{I^2_2}=(\beta^2_{I^2_2}(L),\beta^2_{I^2_2}(R))^\top$,  $\beta^3_{I^2_3}=(\beta^3_{I^2_3}(H),\beta^3_{I^2_3}(K))^\top$.\newline
       Beliefs: $\mu^2_{I^1_2}=(\mu^2_{I^1_2}(\langle b, y\rangle), \mu^2_{I^1_2}(\langle c, y\rangle))^\top$, $\mu^2_{I^3_2}=(\mu^2_{I^3_2}(\langle a, y\rangle), \mu^2_{I^3_2}(\langle b, y, d\rangle))^\top$.}
        \end{minipage}
\caption{\label{Notation}\scriptsize Illustrations of Some Notations}
\end{figure}

Given these notations, we represent  by $\Gamma=\langle N, H, P, f_c, \{{\cal I}_i\}_{i\in N}\rangle$ an extensive-form game. A finite extensive-form game means an extensive-form game with a finite number of histories. Let $X_i(h)$ be the record of player $i$'s experience along the history $h$. Then $X_i(h)$ is the sequence consisting of the information sets that player $i$ encounters in the history $h$ and the actions he takes at them in the order that these events occur. An extensive-form game has {\bf perfect recall} if, for each player $i$, we have $X_i(h')=X_i(h'')$ whenever the histories $h'$ and $h''$ are in the same information set of player $i$. 
When $P(\langle a_1,\ldots,a_k\rangle)=c$, $\beta^c_{\langle a_1,\ldots,a_k\rangle}(a_{k+1})=f_c(a_{k+1}|\langle a_1,\ldots,a_k\rangle)$.  For $i\in N$ and $j\in M_i$, we often write a behavioral strategy profile $\beta$ as $(\beta^i_{I_i^j},\beta^{-I^j_i})$. 
 The probability that the moves along a history $h=\langle a_1,\ldots,a_L\rangle\in H$ are played when $\beta$ is taken by players is represented as \begin{equation}\setlength{\abovedisplayskip}{1.2pt}
\setlength{\belowdisplayskip}{1.2pt}
\label{omega1}\omega(h|\beta)=\prod\limits_{k=0}^{L-1}\beta^{P(\langle a_1,\ldots,a_k\rangle)}_{\langle a_1,\ldots,a_k\rangle}(a_{k+1}).\end{equation} 
 Then, for $i\in N$, $j\in M_i$, $h=\langle a_1,\ldots,a_L\rangle\in H$, and $a\in A(I^j_i)$, we have \begin{equation}\label{omega1A}
 \setlength{\abovedisplayskip}{1.2pt}
\setlength{\belowdisplayskip}{1.2pt}
 \omega(h|a,\beta^{-I^j_i})=\mathop{\prod\limits_{k=0}^{L-1}}_{\langle a_1,\ldots,a_k\rangle\notin I^j_i}\beta^{P(\langle a_1,\ldots,a_k\rangle)}_{\langle a_1,\ldots,a_k\rangle}(a_{k+1}).\end{equation}
 For $i\in N$ and $j\in M_i$, let 
 \begin{equation}\label{omega1B}\setlength{\abovedisplayskip}{1.2pt}
\setlength{\belowdisplayskip}{1.2pt}
\omega(I^j_i|\beta)=\sum\limits_{h\in I^j_i}\omega(h|\beta),
\end{equation}
which equals the probability that information set $I^j_i$ is reached when $\beta$ is played.
 For the game in Fig.~\ref{Notation}, we have $\omega(\langle b,y,e,F \rangle|\beta)=\frac{2}{7}\beta^1_{\langle b\rangle}(y)$ $\beta^2_{\langle b, y\rangle}(e)\beta^3_{\langle b, y, e\rangle}(F)=\frac{2}{7}\beta^1_{I^1_1}(y)\beta^2_{I^1_2}(e)\beta^3_{I^1_3}(F)$ and $\omega(\langle b,y,e,F \rangle|y,\beta^{-I^1_1})=\frac{2}{7}\beta^2_{I^1_2}(e)\beta^3_{I^1_3}(F)$.
 The expected payoff of player $i$ at a behavioral strategy profile $\beta$ is given by
\begin{equation}\setlength{\abovedisplayskip}{1.2pt}
\setlength{\belowdisplayskip}{1.2pt}
\label{payoffs}u^i(\beta)=\sum\limits_{h\in Z}u^i(h)\omega(h|\beta).\end{equation}
With these notations, Nash~\cite{Nash (1951)} formulated the definition of Nash equilibrium in behavioral strategies as follows.
\begin{definition}[{\bf Nash Equilibrium in Behavioral Strategies (NashEBS)}, Nash~\cite{Nash (1951)}]\label{ned1} {\em A behavioral strategy profile $\beta^*$ is 
a NashEBS of an extensive-form game if $u^i(\beta^*)\ge u^i(\beta^i,\beta^{*-i})$ for any $\beta^i$.  
}\end{definition}
A profile of behavioral strategies represents a NashEBS when no player can improve their expected payoff by unilaterally deviating to a different behavioral strategy. In this paper, the requirement of $\beta^*$ as outlined in Definition~\ref{ned1} is termed as {\bf global rationality}:  

\textit{A behavioral strategy profile  $\beta^*$ possesses global rationality if, for every player $i$, we have 
\(u^i(\beta^*)\ge u^i(\beta^i,\beta^{*-i})\) for every $\beta^i$ of player $i$.}

In other words, a behavioral strategy profile $\beta^*$ exhibits global rationality when each player maximizes his expected payoff by adapting his actions given the actions of other players. To deal with global rationality in the computation of a NashEBS of an extensive-form game, some studies invoke its associated normal-form or reduced normal-form representation, given that $u^i(\beta^i,\beta^{*-i})$ is a nonconvex function of $\beta^i$. However, the number of pure strategy profiles in the associated normal-form and reduced normal-form game grows exponentially with the size of the extensive-form game. Besides, the associated normal-form and reduced normal-form game could have an enormous multiplicity of NashEBSs, all of which are payoff equivalent.  One may think that a Nash equilibrium of the agent normal-form representation of an extensive-form game yields a NashEBS. However, as revealed on pp. 161 in Myerson~\cite{Myerson (1991)} and on pp. 374 in Bonanno~\cite{Bonanno (2018)}, a Nash equilibrium of the agent normal-form representation of an extensive-form game may not be a NashEBS according to Definition~\ref{ned1}. As pointed out in Myerson~\cite{Myerson (1991)}, a NashEBS of an extensive-form game is defined to be any Nash equilibrium $\beta$ of the agent normal-form representation such that the mixed representation of $\beta$ is also a Nash equilibrium of the associated normal-form game. Additionally, Definition~\ref{ned1} is essentially detached from the tree structure of an extensive-form game and does not entail any beliefs. 

Given the significance of beliefs in defining equilibria in extensive-form games, Mas-Colell et al.~\cite{Mas-Colell et al. (1995)} proposed a characterization of Nash equilibrium with beliefs.
To aid in this characterization, the following notations are introduced.
For $i\in N$ and $j\in M_i$, when $\beta$ and $(a,\beta^{-I^j_i})$ are taken, the expected payoffs of player $i$ along terminal histories that intersect with the action set of player $i$ at the information set $I^j_i$ can be represented as  \begin{equation}\setlength{\abovedisplayskip}{1.2pt}
\setlength{\belowdisplayskip}{1.2pt}
\label{ispayoffs}u^i(\beta\land I^j_i)=\sum\limits^
{h\cap A(I^j_i)\ne\emptyset}_{h\in Z}u^i(h)\omega(h|\beta)\text{\;\;\;\;and\;\;\;\;}
 u^i((a,\beta^{-I^j_i})\land I^j_i)=\sum\limits_{a\in h\in Z}u^i(h)\omega(h|a,\beta^{-I^j_i}).\end{equation} 
In line with Kreps and Wilson~\cite{Kreps and Wilson (1982)}, we refer to $\mu$ as a belief system, which assigns to every information set a probability measure on the set of histories in the information set. A further elaboration of $\mu$ is that $\mu^i_{I^j_i}(h)$ denotes the probability assigned to the occurrence of history $h$ when it is player $i$'s turn to choose an action at $I^j_i$. To represent conditional expected payoffs at each information set using a belief system $\mu$, for $i\in N$, $j\in M_i$ and $h=\langle a_1,\ldots,a_K\rangle\in Z$, let {\small
\begin{equation}\label{cps1a}\setlength{\abovedisplayskip}{1.2pt}
\setlength{\belowdisplayskip}{1.2pt}
\nu^i_{I^j_i}(h|\beta,\mu)=\left\{\begin{array}{ll} 
\mu^i_{I^j_i}(\hat h)\prod\limits_{k=L}^{K-1}\beta^{P(\langle a_1,\ldots,a_k\rangle)}_{\langle a_1,\ldots,a_k\rangle}(a_{k+1}) & \text{if $\hat h=\langle a_1,\ldots,a_L\rangle\in I^j_i$,}\\

0 & \text{if there is no subhistory of $h$ in $I^j_i$,}
\end{array}\right.\end{equation}}
and {\small
\begin{equation}\label{cps1b}\setlength{\abovedisplayskip}{1.2pt}
\setlength{\belowdisplayskip}{1.2pt}
\nu^i_{I^j_i}(h|a,\beta^{-I^j_i},\mu)=\left\{\begin{array}{ll} 
\mu^i_{I^j_i}(\hat h)\prod\limits_{k=L+1}^{K-1}\beta^{P(\langle a_1,\ldots,a_k\rangle)}_{\langle a_1,\ldots,a_k\rangle}(a_{k+1}) & \text{if $\hat h=\langle a_1,\ldots,a_L\rangle\in I^j_i$,}\\

0 & \text{if there is no subhistory of $h$ in $I^j_i$.}
\end{array}\right.\end{equation}}$\nu^i_{I^j_i}(h|\beta,\mu)$ and $\nu^i_{I^j_i}(h|a,\beta^{-I^j_i},\mu)$ represent the probabilities that the moves along $h$ are played given that $I^j_i$ has been reached when $(\beta,\mu)$ or $(a,\beta^{-I^j_i},\mu)$ is taken, in which $\mu^i_{I^j_i}(\hat h)$ denotes the probability that $\hat h\in I^j_i$ along $h$ has occurred conditional on $I^j_i$ has been reached.
Given these notations, the conditional expected payoffs of player $i$ on information set $I^j_i$ at $(\beta,\mu)$ and $(a,\beta^{-I^j_i},\mu)$ can be denoted as  {\small\begin{equation}\setlength{\abovedisplayskip}{1.2pt}
\setlength{\belowdisplayskip}{1.2pt}
\label{cepayoffs}
u^i(\beta,\mu|I^j_i)=\sum\limits^
{h\cap A(I^j_i)\ne\emptyset}_{h\in Z}u^i(h)\nu^i_{I^j_i}(h|\beta,\mu)
\text{ and }
u^i(a,\beta^{-I^j_i},\mu|I^j_i)=\sum\limits_{a\in h\in Z}u^i(h)\nu^i_{I^j_i}(h|a,\beta^{-I^j_i},\mu).\end{equation}}
%To establish a connection between the concept of NashEBS with a belief system $\mu$, Mas-Colell et al.~\cite{Mas-Colell et al. (1995)} proved the following theorem.
\begin{theorem}[Mas-Colell et al.~\cite{Mas-Colell et al. (1995)}]\label{ned2}{\em
 A behavioral strategy profile $\beta^*$ is 
a NashEBS of a finite extensive-form game with perfect recall if there exists a system of beliefs $\mu^*$ such that\\
(i). For $i\in N$ and $j\in M_i$ with $\omega(I^j_i|\beta^*)>0$,   $u^i(\beta^*,\mu^*|I^j_i)\ge u^i(\beta^i,\beta^{*-i},\mu^*|I^j_i)$ for all $\beta^i$.\\
(ii). $\mu^*$ is derived from $\beta^*$ through Bayes' rule whenever possible.
 }
\end{theorem}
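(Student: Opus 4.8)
The plan is to prove the stated (sufficiency) direction: assuming a belief system $\mu^*$ satisfying (i) and (ii) exists, I would show that $u^i(\beta^*)\ge u^i(\beta^i,\beta^{*-i})$ for every player $i$ and every $\beta^i$, so that $\beta^*$ meets the global rationality requirement of Definition~\ref{ned1}. The guiding idea is to write the global payoff $u^i(\beta^i,\beta^{*-i})$ as a weighted sum of exactly the conditional payoffs $u^i(\beta^i,\beta^{*-i},\mu^*|I^j_i)$ appearing in (i), so that the global inequality follows term by term from local conditional optimality together with the Bayes consistency in (ii).

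First I would classify player $i$'s information sets by their position relative to $i$'s own moves. Call $I^j_i$ \emph{first-reachable} if, along every history $\hat h\in I^j_i$, player $i$ takes no action before reaching $\hat h$; by perfect recall the record $X_i(\hat h)$ is common to all $\hat h\in I^j_i$, so this is well defined for the whole information set. Let $F_i$ be the collection of first-reachable information sets and $Z_\emptyset$ the terminal histories along which player $i$ never moves. Every terminal history either lies in $Z_\emptyset$ or passes through a unique first-reachable information set $I^j_i\in F_i$ (uniqueness again by perfect recall, since two first-reachable sets cannot both lie on one path); moreover, for $I^j_i\in F_i$ the set $\{h\in Z: h\cap A(I^j_i)\ne\emptyset\}$ coincides with the histories whose first player-$i$ move occurs at $I^j_i$. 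This yields the partition of $Z$ that splits $u^i$.

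The key structural fact from perfect recall is that the reach probability of a first-reachable information set is independent of player $i$'s own strategy: for $I^j_i\in F_i$ and $\hat h\in I^j_i$ the product in \eqref{omega1} defining $\omega(\hat h|\beta)$ contains no factor chosen by player $i$, so $w_j:=\omega(I^j_i|\beta^*)=\omega(I^j_i|\beta^i,\beta^{*-i})$ for all $\beta^i$. Substituting the Bayes-consistent beliefs $\mu^*_{I^j_i}(\hat h)=\omega(\hat h|\beta^*)/\omega(I^j_i|\beta^*)$ from (ii) into \eqref{cps1a}, I would verify $\sum_{h\in Z:\,h\cap A(I^j_i)\ne\emptyset}u^i(h)\omega(h|\beta^i,\beta^{*-i})=w_j\,u^i(\beta^i,\beta^{*-i},\mu^*|I^j_i)$, and likewise with $\beta^{*i}$ in place of $\beta^i$. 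Summing over the partition then gives
\[
u^i(\beta^i,\beta^{*-i})=C_i+\sum_{I^j_i\in F_i} w_j\,u^i(\beta^i,\beta^{*-i},\mu^*|I^j_i),
\]
where $C_i=\sum_{h\in Z_\emptyset}u^i(h)\omega(h|\beta^{*-i})$ is constant in $\beta^i$, together with the analogous expression for $u^i(\beta^*)$.

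Finally I would compare the two expressions term by term. When $w_j=0$ the corresponding terms vanish on both sides, so no constraint is needed there; when $w_j>0$ the set is on the equilibrium path ($\omega(I^j_i|\beta^*)>0$), so condition (i) gives $u^i(\beta^*,\mu^*|I^j_i)\ge u^i(\beta^i,\beta^{*-i},\mu^*|I^j_i)$. Multiplying by $w_j\ge 0$ and summing over $F_i$ yields $u^i(\beta^*)\ge u^i(\beta^i,\beta^{*-i})$. The main obstacle, and the step deserving the most care, is the treatment of information sets off the equilibrium path, where (i) is silent: a naive telescoping that rewrites $\beta^{*i}$ into $\beta^i$ one information set at a time would force one to invoke optimality at sets reachable only through $i$'s own deviation, which (i) does not supply. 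Decomposing over first-reachable information sets circumvents this, since their reach probability is player-$i$-independent, so off-path ones contribute zero to both payoffs while on-path ones are governed by (i); crucially, the conditional payoff in (i) is quantified over all $\beta^i$ and hence already bundles the entire continuation, including any exploitation of deeper off-path information sets.
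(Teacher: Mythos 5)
Your argument is correct. The paper itself gives no proof of Theorem~\ref{ned2} --- it is cited from Mas-Colell et al.~\cite{Mas-Colell et al. (1995)} --- so there is no in-text proof to compare against; but your decomposition is exactly the machinery the authors deploy inside their proof of Theorem~\ref{ednethm1}: your ``first-reachable'' information sets are the sets with $d(I^j_i)=1$, your $Z_\emptyset$ is their $Z^i$, and your payoff split is their Eq.~(\ref{edneK}). The two points that carry the proof are both handled properly: (1) for a first-reachable $I^j_i$ the product defining $\omega(\hat h|\beta)$ contains no factor controlled by player $i$, so $\omega(I^j_i|\beta^i,\beta^{*-i})=\omega(I^j_i|\beta^*)$ and the off-path cells ($\omega(I^j_i|\beta^*)=0$) contribute zero to \emph{both} $u^i(\beta^*)$ and $u^i(\beta^i,\beta^{*-i})$, which is precisely why condition (i)'s silence off the path is harmless; and (2) condition (i) is quantified over full behavioral strategies $\beta^i$, so the conditional payoff at an on-path first-reachable set already absorbs the entire continuation, and no telescoping through deeper (possibly off-path) information sets is needed. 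Your identification of the naive information-set-by-information-set telescoping as the step that would fail is exactly the right diagnosis.
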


Although the result in Theorem~\ref{ned2} establishes a connection between NashEBS and a belief system, Theorem~\ref{ned2} still demands global rationality of nonconvex conditional expected payoff functions, leading to the need to solve dynamic programming problems. To overcome the deficiencies in Definition~\ref{ned1} and Theorem~\ref{ned2} of NashEBS, we develop in this paper a characterization of NashEBS. With this characterization, one can find NashEBSs of an extensive-form game without invoking its associated (reduced) normal-form game or engaging in dynamic programming procedures.

\section{\large A Characterization of Nash Equilibrium in Behavioral Strategies}

The following notations are introduced to facilitate our characterization of NashEBS. 
To distinguish the behavioral strategies of player $i$ at the information set succeeding $I^j_i$ from other behavioral strategies,  we define
$\varrho^i_{I^j_i}(\beta,\tilde\beta)=(\varrho^i_{I^j_i}(\beta^p_{I^q_p},\tilde\beta):p\in N,q\in M_p)$ for $i\in N$ and $j\in M_i$,  where  \begin{equation}\label{varrho1}\setlength{\abovedisplayskip}{1.2pt}
\setlength{\belowdisplayskip}{1.2pt}
\varrho^i_{I^j_i}(\beta^p_{I^q_p},\tilde\beta)=\left\{\begin{array}{ll}
\tilde\beta^i_{I^q_i} & \text{if $p=i$ and $h\cap A(I^j_i)\ne\emptyset$ for some $h\in I^q_i$,}\\
\beta^p_{I^q_p} & \text{otherwise.}\\
\end{array}\right.\end{equation}
For the game in Fig.~\ref{Notation}, we have $\varrho^2_{I^1_2}(\beta^2_{I^1_2},\tilde\beta)=\beta^2_{I^1_2}$, $\varrho^2_{I^1_2}(\beta^2_{I^2_2},\tilde\beta)=\tilde\beta^2_{I^2_2}$, and $\varrho^2_{I^1_2}(\beta^3_{I^1_3},\tilde\beta)=\beta^3_{I^1_3}$.
To formulate a self-independent belief system that is employed in our characterization of NashEBS, for $h=\langle a_1,\ldots,a_L\rangle\in H$,
let \begin{equation}
\setlength{\abovedisplayskip}{1.2pt}
\setlength{\belowdisplayskip}{1.2pt}
\label{sipf1}
{\cal S}^i(h|\beta)=\mathop{\prod\limits_{k=0}^{L-1}}\limits_{P(\langle a_1,\ldots,a_k\rangle)\ne i}\beta^{P(\langle a_1,\ldots,a_k\rangle)}_{\langle a_1,\ldots,a_k\rangle}(a_{k+1}),
\end{equation}
which denotes the probability that the moves along $h$ are executed, regardless of player $i$'s actions.
We specify  \begin{equation}\setlength{\abovedisplayskip}{1.2pt}
\setlength{\belowdisplayskip}{1.2pt}\label{sipf2}
{\cal S}^i(I^j_i|\beta)=\sum\limits_{h\in I^j_i}{\cal S}^i(h|\beta).
\end{equation} For the game in Fig.~\ref{Notation}, we have ${\cal S}^2(\langle b,y,e,F\rangle |\beta)=\frac{2}{7}\beta^1_{\langle b\rangle}(y)\beta^3_{\langle b, y, e\rangle}(F)$ and ${\cal S}^2(I^2_2|\beta)={\cal S}^2(\langle b,y,e,F\rangle |\beta)+{\cal S}^2(\langle c,y,e,H\rangle |\beta)$.
 For any $h=\langle a_1,\ldots,a_L\rangle$, it holds that
\[\setlength{\abovedisplayskip}{1.2pt}
\setlength{\belowdisplayskip}{1.2pt}
\omega(h|\beta)= {\cal S}^i(h|\beta)\mathop{\prod\limits_{k=0}^{L-1}}\limits_{P(\langle a_1,\ldots,a_k\rangle) =i}\beta^{P(\langle a_1,\ldots,a_k\rangle)}_{\langle a_1,\ldots,a_k\rangle}(a_{k+1}).\] 
For $i\in N$, $j\in M_i$ and $a\in A(I^j_i)$, let 
\[\setlength{\abovedisplayskip}{1.2pt}
\setlength{\belowdisplayskip}{1.2pt}
M(a,I^j_i)=\left\{q\in M_i\left|\begin{array}{l}\text{for any
 $h=\langle a_1,\ldots,a_L\rangle\in I^q_i$, there exists $1\le \ell\le L$ such}\\
 \text{that $a_\ell=a$ and $\{a_{\ell+1},\ldots,a_L\}\cap A(I^p_i)=\emptyset$ for all $p\in M_i$}\end{array}\right.\right\},\]
and $M(I^j_i)=\mathop{\cup}\limits_{a\in A(I^j_i)}M(a,I^j_i)$. For $q\in M(a,I^j_i)$, the information set $I^q_i$ represents the first information set of player $i$ subsequent to the action $a\in A(I^j_i)$. For the game in Fig.~\ref{Notation}, we have $M(e, I^1_2)=\{ 2\}$, and $M(I^1_2)=\{ 2, 3\}$. 
 As a result of the perfect recall and the definition of ${\cal S}^i(h|\beta)$ and $\omega(h|\beta)$, one can easily derive the following conclusions.
 \begin{lemma}\label{ednelm1}{\em 
 (i) ${\cal S}^i(I^j_i|\beta)>0$ if ${\cal S}^i(I^q_i|\beta)>0$ for some $q\in M(I^j_i)$.
 (ii) ${\cal S}^i(I^q_i|\beta)=0$ for all $q\in M(I^j_i)$ if ${\cal S}^i(I^j_i|\beta)=0$.
 (iii) $\omega(I^j_i|\beta)>0$ if $\omega(I^q_i|\beta)>0$ for some $q\in M(I^j_i)$. (iv) $\omega(I^q_i|\beta)=0$ for all $q\in M(I^j_i)$ if $\omega(I^j_i|\beta)=0$.}
 \end{lemma}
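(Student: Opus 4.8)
The plan is to notice first that statements (ii) and (iv) are merely the contrapositives of (i) and (iii): since every factor $\beta^{P(\cdot)}_{\cdot}(\cdot)$ lies in $[0,1]$, both ${\cal S}^i(I^q_i|\beta)$ and $\omega(I^q_i|\beta)$ are nonnegative, so ``${\cal S}^i(I^q_i|\beta)=0$ for all $q\in M(I^j_i)$'' is exactly the negation of ``${\cal S}^i(I^q_i|\beta)>0$ for some $q\in M(I^j_i)$'', and likewise for $\omega$. Hence it suffices to establish (i) and (iii), and I would handle both by the same prefix-factorization argument.

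For (i), I would fix $q\in M(I^j_i)$, so that $q\in M(a,I^j_i)$ for some $a\in A(I^j_i)$, and assume ${\cal S}^i(I^q_i|\beta)>0$. By \eqref{sipf2} there is a history $h=\langle a_1,\ldots,a_L\rangle\in I^q_i$ with ${\cal S}^i(h|\beta)>0$. The definition of $M(a,I^j_i)$ supplies an index $\ell$ with $a_\ell=a$ and $\{a_{\ell+1},\ldots,a_L\}\cap A(I^p_i)=\emptyset$ for every $p\in M_i$; the latter condition says that player $i$ takes no action at the positions $\ell+1,\ldots,L$ of $h$. Writing $\hat h=\langle a_1,\ldots,a_{\ell-1}\rangle$ for the prefix immediately preceding the move $a$, I would argue that $\hat h\in I^j_i$. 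This is the step where perfect recall is essential: because all histories in $I^q_i$ share the same experience $X_i(\cdot)$, the occurrence of action $a$ at $I^j_i$ is uniformly the last move of player $i$ before $I^q_i$ is reached, so $\hat h$ is well defined and lies in $I^j_i$. Splitting the product in \eqref{sipf1} at $k=\ell-1$ then yields
\[{\cal S}^i(h|\beta)={\cal S}^i(\hat h|\beta)\prod_{k=\ell}^{L-1}\beta^{P(\langle a_1,\ldots,a_k\rangle)}_{\langle a_1,\ldots,a_k\rangle}(a_{k+1}),\]
where the move $a_\ell=a$ drops out (it is player $i$'s own action) and every remaining factor is a probability in $[0,1]$. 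Since the left-hand side is positive, the first factor must be positive; in particular ${\cal S}^i(\hat h|\beta)>0$, and therefore ${\cal S}^i(I^j_i|\beta)=\sum_{h'\in I^j_i}{\cal S}^i(h'|\beta)\ge{\cal S}^i(\hat h|\beta)>0$.

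Statement (iii) follows from the identical decomposition applied to $\omega$ in place of ${\cal S}^i$: from $\omega(I^q_i|\beta)>0$ I would extract $h\in I^q_i$ with $\omega(h|\beta)>0$, locate the same prefix $\hat h\in I^j_i$, and use \eqref{omega1} to write $\omega(h|\beta)=\omega(\hat h|\beta)\,\beta^i_{I^j_i}(a)\prod_{k=\ell}^{L-1}\beta^{P(\langle a_1,\ldots,a_k\rangle)}_{\langle a_1,\ldots,a_k\rangle}(a_{k+1})$, the only change being that the player-$i$ factor $\beta^i_{I^j_i}(a)$ is now retained rather than deleted. Positivity of the product forces $\omega(\hat h|\beta)>0$, whence $\omega(I^j_i|\beta)\ge\omega(\hat h|\beta)>0$. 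The one genuinely delicate point, and the step I would be most careful about, is the claim $\hat h\in I^j_i$: it requires reading the combinatorial definition of $M(a,I^j_i)$ correctly and invoking perfect recall to guarantee that the prefix of every $h\in I^q_i$ ending just before the distinguished occurrence of $a$ indeed belongs to $I^j_i$. Once this is secured the remaining manipulations are purely multiplicative and elementary.
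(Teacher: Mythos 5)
Your proof is correct and follows essentially the same route as the paper's: both arguments extract a history $h\in I^q_i$ with positive probability, identify the prefix $\hat h=\langle a_1,\ldots,a_{\ell-1}\rangle\in I^j_i$ from the definition of $M(a,I^j_i)$ together with perfect recall, and conclude from the multiplicative structure of ${\cal S}^i(\cdot|\beta)$ and $\omega(\cdot|\beta)$. The only cosmetic difference is that you derive (ii) and (iv) as contrapositives of (i) and (iii) using nonnegativity, whereas the paper spells out the same reasoning directly for (ii); the substance is identical.
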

 \begin{proof} {\small (i) Let $q\in M(a,I^j_i)$ such that ${\cal S}^i(I^q_i|\beta)>0$. Then there exists some $h=\langle a_1,\ldots,a_L\rangle\in I^q_i$ such that ${\cal S}^i(h|\beta)>0$ and $\{a_1,\ldots,a_L\}\cap A(I^j_i)=\{a\}$. Let $a_\ell=a$. Then, $\hat h=\langle a_1,\ldots,a_{\ell-1}\rangle\in I^j_i$. Thus, ${\cal S}^i(\hat h|\beta)>0$. Therefore, ${\cal S}^i(I^j_i|\beta)>0$. 
(ii) For any $q\in M(I^j_i)$ and any  $h=\langle a_1,\ldots,a_L\rangle\in I^q_i$, there exist $a\in A(I^j_i)$ and $1\le\ell\le L$ such that $a_\ell=a$ and $\hat h=\langle a_1,\ldots,a_{\ell-1}\rangle\in I^j_i$. Clearly, when ${\cal S}^i(I^j_i|\beta)=0$, we have ${\cal S}^i(\hat h|\beta)=0$ for any $\hat h\in I^j_i$. Therefore,  ${\cal S}^i(I^q_i|\beta)=0$ for all $q\in M(I^j_i)$ if ${\cal S}^i(I^j_i|\beta)=0$. One can prove the results of (iii) and (iv) similarly. The proof of the lemma is completed.}
 \end{proof}
 
Given these notations, we construct a self-independent belief system as follows. A belief system $\mu^*=(\mu^{*i}_{I^j_i}(h):i\in N,j\in M_i,h\in I^j_i)$ is a {\bf self-independent belief system} if, for any given $\beta^*$, $\mu^*$ is a solution to the system, 
\begin{equation}\setlength{\abovedisplayskip}{1.2pt}
\setlength{\belowdisplayskip}{1.2pt}
\label{bsne1}\begin{array}{l}
{\cal S}^i(I^j_i|\beta^*)\mu^i_{I^j_i}(h)={\cal S}^i(h|\beta^*),\;i\in N,j\in M_i,h\in I^j_i,\\
\sum\limits_{h'\in I^j_i}\mu^i_{I^j_i}(h')=1,\;0\le \mu^i_{I^j_i}(h),\;i\in N,j\in M_i,h\in I^j_i.
\end{array}\end{equation}

The key aspect of a self-independent belief system is that each player's belief in an information set is induced by previous actions along a history in the information set, independent of one's own actions along that history. By employing the notion of a self-independent belief system and the formulations provided above, we develop our characterization of NashEBS.

\begin{definition}[\bf An Equivalent Definition of NashEBS through an Extra Strategy Profile, Local Sequential Rationality, and Self-Independent Beliefs]\label{edne1}{\em A behavioral strategy profile $\beta^*$ is a NashEBS if $\beta^*$ together with an extra behavioral strategy profile $\tilde\beta^*$ and a belief system $\mu^*$ satisfies the properties:\newline
\noindent (i) $\beta^{*i}_{I^j_i}(a')=0$ for any $i\in N$, $j\in M_i$ and $a',a''\in A(I^j_i)$ with $u^i((a'',\varrho^i_{I^j_i}(\beta^{*-I^j_i},\tilde\beta^*))\land I^j_i)
>u^i((a',\varrho^i_{I^j_i}(\beta^{*-I^j_i},\tilde\beta^*))\land I^j_i)$;\newline
\noindent (ii) $\tilde\beta^{*i}_{I^j_i}(a')=0$ for any $i\in N$, $j\in M_i$ and $a',a''\in A(I^j_i)$ with $u^i(a'',\varrho^i_{I^j_i}(\beta^{*-I^j_i},\tilde\beta^*),\mu^*|I^j_i)
>u^i(a',\varrho^i_{I^j_i}(\beta^{*-I^j_i},\tilde\beta^*),\mu^*|I^j_i)$; and\newline
\noindent (iii) $\mu^*=(\mu^{*i}_{I^j_i}(h):i\in N,j\in M_i,h\in I^j_i)$ is a self-independent belief system, serving as a solution to the system~(\ref{bsne1}).}\end{definition}

We will illustrate with two examples how one can employ Definition~\ref{edne1} to analytically determine all NashEBSs for small extensive-form games in Section~\ref{examples}.  

In this paper, the requirements (i) and (ii) of $(\beta^*,\tilde\beta^*)$ as outlined in Definition~\ref{edne1} are termed as {\bf local sequential rationality}: 

\textit{A behavioral strategy profile $\beta^*$ together with $\tilde\beta^*$ possesses local rationality at an information set $I^j_i$ if
\(u^i((\beta^{*i}_{I^j_i},\varrho^i_{I^j_i}(\beta^{*-I^j_i},\tilde\beta^*))\land I^j_i)\ge u^i((\beta^i_{I^j_i},\varrho^i_{I^j_i}(\beta^{*-I^j_i},\tilde\beta^*))\land I^j_i)\)
for every $\beta^i_{I^j_i}$ of player $i$. A behavioral strategy profile $\beta^*$ together with $\tilde\beta^*$ possesses local sequential rationality if it meets the local rationality at every information set.}

\textit{
A behavioral strategy profile $\tilde\beta^*$ together with $(\beta^*, \mu^*)$ possesses local rationality at an information set $I^j_i$ if
\(u^i(\tilde\beta^{*i}_{I^j_i},\varrho^i_{I^j_i}(\beta^{*-I^j_i},\tilde\beta^*),\mu^*|I^j_i)\ge u^i(\tilde\beta^i_{I^j_i},\varrho^i_{I^j_i}(\beta^{*-I^j_i},\tilde\beta^*),\mu^*|I^j_i)\)
for every $\tilde\beta^i_{I^j_i}$ of player $i$. A behavioral strategy profile $\tilde\beta^*$ together $(\beta^*,\mu^*)$ possesses local sequential rationality if it meets the local rationality at every information set.}

%{\setlength{\parskip}{1em}\noindent\textit{\textbf{Local Sequential Rationality}}: A behavioral strategy profile $\beta^*$ together with $\tilde\beta^*$ possesses local sequential rationality if it meets the local rationality at every information set. A behavioral strategy profile $\tilde\beta^*$ together $(\beta^*,\mu^*)$ possesses local sequential rationality if it meets the local rationality at every information set. }

The fundamental principles of local sequential rationality entail that it is defined by linear payoff functions at every information set and necessitates that each player maximizes their expected payoff at every information set, given the actions chosen at all other information sets and the beliefs. The introduction of {\bf an extra behavioral strategy profile $\tilde\beta$}  facilitates us to ensure the global rationality by invoking the local sequential rationality. 

\begin{theorem}\label{ednethm1}{\em Definition~\ref{edne1} and Definition~\ref{ned1} of NashEBS are equivalent.}
\end{theorem}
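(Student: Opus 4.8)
The plan is to prove the two implications separately, both resting on a single structural consequence of perfect recall that links the unnormalized payoffs $u^i(\cdot\land I^j_i)$ appearing in condition (i) to the belief-weighted conditional payoffs $u^i(\cdot,\mu^*|I^j_i)$ appearing in condition (ii). First I would record the perfect-recall factorization: for every $\hat h\in I^j_i$ the product of player $i$'s own move probabilities along $\hat h$ is the same across all $\hat h\in I^j_i$ (all such histories share the record $X_i(\hat h)$), so that $\omega(\hat h|\beta^*)=P_i(I^j_i)\,{\cal S}^i(\hat h|\beta^*)$ and $\omega(I^j_i|\beta^*)=P_i(I^j_i)\,{\cal S}^i(I^j_i|\beta^*)$ for a factor $P_i(I^j_i)$ independent of $\hat h$. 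Two consequences will be used throughout: (a) whenever $\omega(I^j_i|\beta^*)>0$ the self-independent belief of (\ref{bsne1}) coincides with the Bayes belief $\omega(\hat h|\beta^*)/\omega(I^j_i|\beta^*)$, and more strongly ${\cal S}^i(\hat h|\beta^*)/{\cal S}^i(I^j_i|\beta^*)$ is the correct posterior over $I^j_i$ for \emph{every} strategy of player $i$ that reaches $I^j_i$ (this self-independence is precisely what makes $\mu^*$ the right belief for evaluating unilateral deviations); and (b) the bridge identity
\[
u^i((a,\varrho^i_{I^j_i}(\beta^{*-I^j_i},\tilde\beta^*))\land I^j_i)=\omega(I^j_i|\beta^*)\,u^i(a,\varrho^i_{I^j_i}(\beta^{*-I^j_i},\tilde\beta^*),\mu^*|I^j_i),
\]
so that at a reached information set conditions (i) and (ii) rank the actions identically, differing only by the positive factor $\omega(I^j_i|\beta^*)$, whereas at an unreached information set the left-hand side vanishes and (i) becomes vacuous.

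For the direction that Definition~\ref{edne1} implies Definition~\ref{ned1}, I would fix a player $i$ and show $\beta^{*i}$ is a best response to $\beta^{*-i}$. Because $\mu^*$ is the correct posterior for any deviation, player $i$'s best-response problem given $\beta^{*-i}$ is an ordinary single-agent dynamic program on the tree that his information sets form under the successor relation $M(\cdot,I^j_i)$, whose supports are controlled by Lemma~\ref{ednelm1}. I would first argue that the extra profile $\tilde\beta^{*i}$ is globally optimal: condition (ii) makes $\tilde\beta^{*i}_{I^j_i}$ maximize the conditional payoff at every information set given the continuation $\tilde\beta^*$, and backward induction up the tree (the principle of optimality) promotes this node-by-node optimality to $u^i(\tilde\beta^{*i},\beta^{*-i})=\max_{\sigma^i}u^i(\sigma^i,\beta^{*-i})$. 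I would then show $u^i(\beta^*)=u^i(\tilde\beta^{*i},\beta^{*-i})$: at every information set reached under $\beta^*$ the bridge identity turns condition (i) into the same maximization solved by (ii), so $\beta^{*i}_{I^j_i}$ places weight only on actions optimal against $\tilde\beta^*$; a second backward induction, restricted to the information sets reached under $\beta^*$ (whose positive-weight actions again lead to reached successors), shows that $\beta^*$ attains the same conditional value as the optimal $\tilde\beta^*$ at each reached information set. Since the reaching probabilities of player $i$'s initial information sets and the contributions of terminal histories avoiding his information sets depend only on $\beta^{*-i}$, the two total payoffs agree, so $\beta^{*i}$ is a best response; as $i$ is arbitrary, $\beta^*$ is a NashEBS.

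Conversely, given a NashEBS $\beta^*$, I would construct $\mu^*$ and $\tilde\beta^*$ and verify (i)--(iii). For (iii) I take the self-independent belief system solving (\ref{bsne1}), forced to the ratio ${\cal S}^i(h|\beta^*)/{\cal S}^i(I^j_i|\beta^*)$ wherever ${\cal S}^i(I^j_i|\beta^*)>0$ and chosen freely otherwise. I build $\tilde\beta^*$ by backward induction up each player's information-set tree, selecting $\tilde\beta^{*i}_{I^j_i}$ to maximize the $\mu^*$-conditional payoff against the already-constructed continuation, which yields (ii) by construction. The content is (i). At information sets with $\omega(I^j_i|\beta^*)=0$ the unnormalized payoffs vanish and (i) holds vacuously. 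At reached information sets, global rationality forces a one-shot-deviation property: if some positive-weight action were strictly conditionally suboptimal, shifting weight away from it would raise the total payoff by the positive factor $\omega(I^j_i|\beta^*)$ (again the bridge identity), contradicting best response; hence $\beta^{*i}_{I^j_i}$ maximizes the conditional payoff against its own continuation $\beta^*$. It then remains to match this to the continuation $\tilde\beta^*$ of (i): by backward induction over reached information sets, $\beta^*$ and the optimal $\tilde\beta^*$ attain identical conditional values at every reached successor, so any positive-weight action of $\beta^*$ is also optimal against $\tilde\beta^*$, which is exactly the statement of (i).

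I expect the main obstacle to be the bookkeeping that reconciles the two continuations $\beta^*$ and $\tilde\beta^*$. Global rationality only disciplines $\beta^*$ at the information sets it reaches and evaluates deviations against its own continuation $\beta^*$, whereas the characterization phrases everything against the separately optimized continuation $\tilde\beta^*$; the crux is to show, by an induction up player $i$'s information-set tree, that these continuations achieve identical conditional values at every reached information set, so that ``optimal against $\beta^*$'' and ``optimal against $\tilde\beta^*$'' coincide there. A closely related delicate point is the precise role of self-independence: one must verify that ${\cal S}^i(\cdot|\beta^*)/{\cal S}^i(I^j_i|\beta^*)$ is the correct posterior not merely for the equilibrium strategy but for every deviating strategy reaching $I^j_i$, which is what legitimizes treating the best-response problem as a belief-based dynamic program and makes the extra profile $\tilde\beta^*$ a valid instrument for certifying the absence of profitable deviations into information sets that $\beta^*$ leaves unreached.
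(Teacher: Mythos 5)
Your proposal is correct and follows essentially the same route as the paper's proof: the perfect-recall bridge identity $u^i((a,\varrho^i_{I^j_i}(\beta^{*-I^j_i},\tilde\beta^*))\land I^j_i)=\omega(I^j_i|\beta^*)\,u^i(a,\varrho^i_{I^j_i}(\beta^{*-I^j_i},\tilde\beta^*),\mu^*|I^j_i)$ at reached information sets, a backward induction establishing both the global optimality of $\tilde\beta^*$ and the equality $u^i(\beta^*\land I^j_i)=u^i(\varrho^i_{I^j_i}(\beta^*,\tilde\beta^*)\land I^j_i)$, and the decomposition of $u^i$ over player $i$'s first information sets and the terminal histories avoiding them. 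The only (harmless) divergence is in the converse construction of $\tilde\beta^*$: the paper sets $\tilde\beta^{*i}_{I^j_i}=\beta^{*i}_{I^j_i}$ at reached information sets and solves the linear optimization only at unreached ones, which makes the value-matching you defer to a second backward induction immediate.
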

\begin{proof} {\small ($\Rightarrow$). Definition~\ref{edne1} implies Definition~\ref{ned1}. We denote by $(\beta^*, \tilde\beta^*,\mu^*)$ a triple satisfying the properties in Definition~\ref{edne1}. 
Then, for any $i\in N$ and $j\in M_i$ with $\omega(I^j_i|\beta^*)>0$, we have $\mu^{*i}_{I^j_i}(h)=\frac{{\cal S}^i(h|\beta^*)}{{\cal S}^i(I^j_i|\beta^*)}=\frac{\omega(h|\beta^*)}{\omega(I^j_i|\beta^*)}$ for any $h\in I^j_i$. Thus, for any $i\in N$ and $j\in M_i$ with $\omega(I^j_i|\beta^*)>0$, it holds that, for $a',a''\in A(I^j_i)$,  $u^i((a'',\varrho^i_{I^j_i}(\beta^{*-I^j_i},\tilde\beta^*))\land I^j_i)
>u^i((a',\varrho^i_{I^j_i}(\beta^{*-I^j_i},\tilde\beta^*))\land I^j_i)$ if and only if $u^i(a'',\varrho^i_{I^j_i}(\beta^{*-I^j_i},\tilde\beta^*),\mu^*|I^j_i)
>u^i(a',\varrho^i_{I^j_i}(\beta^{*-I^j_i},\tilde\beta^*),\mu^*|I^j_i)$. Therefore,  for any $i\in N$ and $j\in M_i$ with $\omega(I^j_i|\beta^*)>0$, it follows from Definition~\ref{edne1} that {\footnotesize
\begin{equation}\label{edneeq1}\setlength{\abovedisplayskip}{1.2pt}
\setlength{\belowdisplayskip}{1.2pt}
\begin{array}{rl}
  &  u^i((\beta^{*i}_{I^j_i}, \varrho^i_{I^j_i}(\beta^{*-I^j_i},\tilde\beta^*))\land I^j_i)\\
  = & \sum\limits_{a\in A(I^j_i)}\beta^{*i}_{I^j_i}(a)u^i((a, \varrho^i_{I^j_i}(\beta^{*-I^j_i},\tilde\beta^*))\land I^j_i)
 = \omega(I^j_i|\beta^*)\sum\limits_{a\in A(I^j_i)}\beta^{*i}_{I^j_i}(a)u^i(a, \varrho^i_{I^j_i}(\beta^{*-I^j_i},\tilde\beta^*), \mu^*|I^j_i) \\
 = & \omega(I^j_i|\beta^*)\sum\limits_{a\in A(I^j_i)}\tilde\beta^{*i}_{I^j_i}(a)u^i(a, \varrho^i_{I^j_i}(\beta^{*-I^j_i},\tilde\beta^*), \mu^*|I^j_i)
 =\sum\limits_{a\in A(I^j_i)}\tilde\beta^{*i}_{I^j_i}(a)u^i((a, \varrho^i_{I^j_i}(\beta^{*-I^j_i},\tilde\beta^*))\land I^j_i)\\
 = & u^i((\tilde\beta^{*i}_{I^j_i}, \varrho^i_{I^j_i}(\beta^{*-I^j_i},\tilde\beta^*))\land I^j_i).
 \end{array}\end{equation}}Given these results, we next acquire that $u^i(\beta^*)\ge u^i(\beta^i,\beta^{*-i})$ for any $\beta^i$. For this purpose, we first need to prove that, for any $i\in N$ and $j\in M_i$, \begin{equation}\setlength{\abovedisplayskip}{1.2pt}
\setlength{\belowdisplayskip}{1.2pt}\label{edneA} u^i(\beta^*\land I^j_i)=u^i(\varrho^i_{I^j_i}(\beta^*,\tilde\beta^*)\land I^j_i).\end{equation}
  For any $i\in N$ and $j\in M_i$ with $\omega(I^j_i|\beta^*)=0$, it follows from the definition of $\varrho^i_{I^j_i}(\beta^*,\tilde\beta^*)$ in Eq.~(\ref{varrho1}) that $u^i(\beta^*\land I^j_i)=u^i(\varrho^i_{I^j_i}(\beta^*,\tilde\beta^*)\land I^j_i)=0$. So we only need to prove that  $u^i(\beta^*\land I^j_i)=u^i(\varrho^i_{I^j_i}(\beta^*,\tilde\beta^*)\land I^j_i)$ for $i\in N$ and $j\in M_i$ with $\omega(I^j_i|\beta^*)>0$. For $i\in N$, $j\in M_i$, and $a\in A(I^j_i)$, let 
\[\setlength{\abovedisplayskip}{1.2pt}
\setlength{\belowdisplayskip}{1.2pt}
Z^0(a, I^j_i)=\left\{h=\langle a_1,\ldots,a_K\rangle\in Z\left|\begin{array}{l}
 \text{$a_\ell=a$ for some $1\le \ell\le K$ and}\\
 \text{$\{a_{\ell+1},\ldots,a_K\}\cap A(I^q_i)=\emptyset$ for all $q\in M_i$}\end{array}\right.\right\},\] and $Z^0(I^j_i)=\mathop{\cup}\limits_{a\in A(I^j_i)}Z^0(a, I^j_i)$.  For the game in Fig.~\ref{Notation}, we have $Z^0(e, I^1_2)=\{\langle  b, y, e, G\rangle, \langle c, y, e, K\rangle \}$, and $Z^0(I^1_2)=\{\langle  b, y, e, G\rangle, \langle c, y, e, K\rangle, \langle c, y, d\rangle\}$. 

\noindent {\bf Case (1)}. Consider $i\in N$ and $j\in M_i$ with $\omega(I^j_i|\beta^*)>0$ and $M(I^j_i)=\emptyset$. 
Then, $\varrho^i_{I^j_i}(\beta^*,\tilde\beta^*)=\beta^*$. 
Thus, $u^i(\beta^*\land I^j_i)=u^i(\varrho^i_{I^j_i}(\beta^*,\tilde\beta^*)\land I^j_i)$.
 
\noindent {\bf Case (2)}. Consider $i\in N$ and $j\in M_i$ with $\omega(I^j_i|\beta^*)>0$ and $\omega(I^q_i|\beta^*)=0$ for all $q\in M(I^j_i)\ne\emptyset$. Then, $u^i(\beta^*\land I^q_i)=0$ and $u^i(\varrho^i_{I^j_i}(\beta^*,\tilde\beta^*)\land I^q_i)=0$ for all $q\in M(I^j_i)$. 
Thus, {\footnotesize
\begin{equation}\label{edneB}\setlength{\abovedisplayskip}{1.2pt}
\setlength{\belowdisplayskip}{1.2pt}
\begin{array}{rl}
  & u^i(\beta^*\land I^j_i)= \sum\limits_{a\in A(I^j_i)}\beta^{*i}_{I^j_i}(a)u^i((a,\beta^{*-I^j_i})\land I^j_i)=\sum\limits_{a\in A(I^j_i)}\beta^{*i}_{I^j_i}(a)\sum\limits_{a\in h\in Z}u^i(h)\omega(h|a,\beta^{*-I^j_i}) \\
=  &
 \sum\limits_{a\in A(I^j_i)}\beta^{*i}_{I^j_i}(a)\sum\limits_{q\in M(a, I^j_i)}\sum\limits_{h\in Z}^{h\cap A(I^q_i)\ne\emptyset}u^i(h)\omega(h|a,\beta^{*-I^j_i}) +  \sum\limits_{a\in A(I^j_i)}\beta^{*i}_{I^j_i}(a)\sum\limits_{h\in Z^0(a, I^j_i)}u^i(h)\omega(h|a,\beta^{*-I^j_i})\\
 
 = & \sum\limits_{q\in M(I^j_i)}u^i(\beta^*\land I^q_i) +  \sum\limits_{a\in A(I^j_i)}\beta^{*i}_{I^j_i}(a)\sum\limits_{h\in Z^0(a, I^j_i)}u^i(h)\omega(h|a,\beta^{*-I^j_i})\\
 
 = &  \sum\limits_{a\in A(I^j_i)}\beta^{*i}_{I^j_i}(a)\sum\limits_{h\in Z^0(a, I^j_i)}u^i(h)\omega(h|a,\beta^{*-I^j_i}),
 \end{array}\end{equation}}
\noindent where the third equality comes from the perfect recall, and
{\footnotesize
\begin{equation}\label{edneB1}\setlength{\abovedisplayskip}{1.2pt}
\setlength{\belowdisplayskip}{1.2pt}
\begin{array}{rl}
  & u^i(\varrho^i_{I^j_i}(\beta^*,\tilde\beta^*)\land I^j_i)= \sum\limits_{a\in A(I^j_i)}\beta^{*i}_{I^j_i}(a)u^i((a,\varrho^i_{I^j_i}(\beta^{*-I^j_i},\tilde\beta^*))\land I^j_i)\\
  = & \sum\limits_{a\in A(I^j_i)}\beta^{*i}_{I^j_i}(a)\sum\limits_{a\in h\in Z}u^i(h)\omega(h|a,\varrho^i_{I^j_i}(\beta^{*-I^j_i},\tilde\beta^*)) \\
=  &
 \sum\limits_{a\in A(I^j_i)}\beta^{*i}_{I^j_i}(a)\sum\limits_{q\in M(a, I^j_i)}\sum\limits_{h\in Z}^{h\cap A(I^q_i)\ne\emptyset}u^i(h)\omega(h|a,\varrho^i_{I^j_i}(\beta^{*-I^j_i},\tilde\beta^*))\\
 & +  \sum\limits_{a\in A(I^j_i)}\beta^{*i}_{I^j_i}(a)\sum\limits_{h\in Z^0(a, I^j_i)}u^i(h)\omega(h|a,\varrho^i_{I^j_i}(\beta^{*-I^j_i},\tilde\beta^*))\\
 
 = & \sum\limits_{q\in M(I^j_i)}u^i(\varrho^i_{I^j_i}(\beta^*,\tilde\beta^*)\land I^q_i) +  \sum\limits_{a\in A(I^j_i)}\beta^{*i}_{I^j_i}(a)\sum\limits_{h\in Z^0(a, I^j_i)}u^i(h)\omega(h|a,\beta^{*-I^j_i})\\
 
 = &  \sum\limits_{a\in A(I^j_i)}\beta^{*i}_{I^j_i}(a)\sum\limits_{h\in Z^0(a, I^j_i)}u^i(h)\omega(h|a,\beta^{*-I^j_i}).
 \end{array}\end{equation}}

\noindent Therefore, $u^i(\beta^*\land I^j_i)=u^i(\varrho^i_{I^j_i}(\beta^*,\tilde\beta^*)\land I^j_i)$. 
 
\noindent {\bf Case (3)}. Consider $i\in N$ and $\ell\in M_i$ with $\omega(I^\ell_i|\beta^*)>0$, $M(I^\ell_i)\ne\emptyset$, and  $u^i(\beta^*\land I^j_i)=u^i(\varrho^i_{I^j_i}(\beta^*,\tilde\beta^*)\land I^j_i)$ for any $j\in M(I^\ell_i)$. We partition $M(I^\ell_i)$ into $M^1(I^\ell_i)=\{j\in M(I^\ell_i)|\omega(I^j_i|\beta^*)>0\}$ and $M^2(I^\ell_i)=\{j\in M(I^\ell_i)|\omega(I^j_i|\beta^*)=0\}$. One can see that $u^i(\beta^*\land I^j_i)=0$ and $u^i(\varrho^i_{I^\ell_i}(\beta^*,\tilde\beta^*)\land I^j_i)=0$ for all $j\in M^2(I^\ell_i)$. 
Then, {\footnotesize\begin{equation}\label{edneC}\setlength{\abovedisplayskip}{1.2pt}
\setlength{\belowdisplayskip}{1.2pt}
\begin{array}{rl}
  & u^i(\beta^*\land I^\ell_i)= \sum\limits_{a\in A(I^\ell_i)}\beta^{*i}_{I^\ell_i}(a)u^i((a,\beta^{*-I^\ell_i})\land I^\ell_i)=\sum\limits_{a\in A(I^\ell_i)}\beta^{*i}_{I^\ell_i}(a)\sum\limits_{a\in h\in Z}u^i(h)\omega(h|a,\beta^{*-I^\ell_i}) \\
=  &
 \sum\limits_{a\in A(I^\ell_i)}\beta^{*i}_{I^\ell_i}(a)\sum\limits_{j\in M(a, I^\ell_i)}\sum\limits_{h\in Z}^{h\cap A(I^j_i)\ne\emptyset}u^i(h)\omega(h|a,\beta^{*-I^\ell_i}) +  \sum\limits_{a\in A(I^\ell_i)}\beta^{*i}_{I^\ell_i}(a)\sum\limits_{h\in Z^0(a, I^\ell_i)}u^i(h)\omega(h|a,\beta^{*-I^\ell_i})\\
 
 =  &
 \sum\limits_{a\in A(I^\ell_i)}\sum\limits_{j\in M(a, I^\ell_i)}\beta^{*i}_{I^\ell_i}(a)u^i((a,\beta^{*-I^\ell_i})\land I^j_i) +  \sum\limits_{a\in A(I^\ell_i)}\beta^{*i}_{I^\ell_i}(a)\sum\limits_{h\in Z^0(a, I^\ell_i)}u^i(h)\omega(h|a,\beta^{*-I^\ell_i})\\

 = & \sum\limits_{j\in M(I^\ell_i)}u^i(\beta^*\land I^j_i) +  \sum\limits_{a\in A(I^\ell_i)}\beta^{*i}_{I^\ell_i}(a)\sum\limits_{h\in Z^0(a, I^\ell_i)}u^i(h)\omega(h|a,\beta^{*-I^\ell_i})\\
 
  = & \sum\limits_{j\in M^1(I^\ell_i)}u^i(\beta^*\land I^j_i) +  \sum\limits_{j\in M^2(I^\ell_i)}u^i(\beta^*\land I^j_i)+ \sum\limits_{a\in A(I^\ell_i)}\beta^{*i}_{I^\ell_i}(a)\sum\limits_{h\in Z^0(a, I^\ell_i)}u^i(h)\omega(h|a,\beta^{*-I^\ell_i})\\

  = &  \sum\limits_{j\in M^1(I^\ell_i)}u^i(\varrho^i_{I^j_i}(\beta^*,\tilde\beta^*)\land I^j_i) + \sum\limits_{a\in A(I^\ell_i)}\beta^{*i}_{I^\ell_i}(a)\sum\limits_{h\in Z^0(a, I^\ell_i)}u^i(h)\omega(h|a,\beta^{*-I^\ell_i}).
 \end{array}\end{equation}}
\noindent For $j\in M^1(I^\ell_i)$, it follows from Eq.~(\ref{edneeq1}) that {\footnotesize\begin{equation}\label{edneD}\setlength{\abovedisplayskip}{1.2pt}
\setlength{\belowdisplayskip}{1.2pt}
  u^i(\varrho^i_{I^j_i}(\beta^*,\tilde\beta^*)\land I^j_i)=u^i((\beta^{*i}_{I^j_i},\varrho^i_{I^j_i}(\beta^{*-I^j_i},\tilde\beta^*))\land I^j_i)
   = u^i((\tilde\beta^{*i}_{I^j_i},\varrho^i_{I^j_i}(\beta^{*-I^j_i},\tilde\beta^*))\land I^j_i)
 = u^i(\varrho^i_{I^\ell_i}(\beta^*,\tilde\beta^*)\land I^j_i).
 \end{equation}}
 \noindent Therefore, as a result of  Eq.~(\ref{edneC}) and Eq.~(\ref{edneD}), we have
{\footnotesize\begin{equation}\label{edneE}\setlength{\abovedisplayskip}{1.2pt}
\setlength{\belowdisplayskip}{1.2pt}
\begin{array}{rl}
& u^i(\beta^*\land I^\ell_i) \\
= & \sum\limits_{j\in M^1(I^\ell_i)}u^i(\varrho^i_{I^j_i}(\beta^*,\tilde\beta^*)\land I^j_i) + \sum\limits_{a\in A(I^\ell_i)}\beta^{*i}_{I^\ell_i}(a)\sum\limits_{h\in Z^0(a, I^\ell_i)}u^i(h)\omega(h|a,\beta^{*-I^\ell_i})\\
= & \sum\limits_{j\in M^1(I^\ell_i)}u^i(\varrho^i_{I^\ell_i}(\beta^*,\tilde\beta^*)\land I^j_i) + \sum\limits_{a\in A(I^\ell_i)}\beta^{*i}_{I^\ell_i}(a)\sum\limits_{h\in Z^0(a, I^\ell_i)}u^i(h)\omega(h|a,\beta^{*-I^\ell_i})\\
= & \sum\limits_{j\in M^1(I^\ell_i)}u^i(\varrho^i_{I^\ell_i}(\beta^*,\tilde\beta^*)\land I^j_i) +  \sum\limits_{j\in M^2(I^\ell_i)}u^i(\varrho^i_{I^\ell_i}(\beta^*,\tilde\beta^*)\land I^j_i) \\
& + \sum\limits_{a\in A(I^\ell_i)}\beta^{*i}_{I^\ell_i}(a)\sum\limits_{h\in Z^0(a, I^\ell_i)}u^i(h)\omega(h|a,\varrho^i_{I^\ell_i}(\beta^{*-I^\ell_i},\tilde\beta^*))\\
= & u^i(\varrho^i_{I^\ell_i}(\beta^*,\tilde\beta^*)\land I^\ell_i).\end{array}\end{equation}}

\noindent Continuing this backward induction process, one can draw the desired conclusion in Eq.~(\ref{edneA}).
%%%%%%%%%%%%%%%%%%%%%%%%%%%%%%%%%%%%%%%%%%

%%%%%%%%%%%%%%%%%%%%%%%%%%%%%%%%%%%%%%%%%%
We next prove that, for any $i\in N$ and $j\in M_i$ with ${\cal S}^i(I^j_i|\beta^*)>0$, \begin{equation}\label{edneF}\setlength{\abovedisplayskip}{1.2pt}
\setlength{\belowdisplayskip}{1.2pt}
\sum\limits_{a\in A(I^j_i)}\tilde \beta^{*i}_{I^j_i}(a)u^i(a,\varrho^i_{I^j_i}(\beta^{*-I^j_i}, \tilde\beta^*), \mu^*|I^j_i)=\max\limits_{\tilde\beta}\sum\limits_{a\in A(I^j_i)}\tilde \beta^i_{I^j_i}(a)u^i(a,\varrho^i_{I^j_i}(\beta^{*-I^j_i}, \tilde\beta), \mu^*|I^j_i).\end{equation}
{\bf Case (a)}. Consider $i\in N$ and $j\in M_i$ with $M(I^j_i)=\emptyset$. Then, $\varrho^i_{I^j_i}(\beta^{*-I^j_i},\tilde\beta)=\beta^{*-I^j_i}=\varrho^i_{I^j_i}(\beta^{*-I^j_i},\tilde\beta^*)$. Thus it follows from Definition~\ref{edne1} that \begin{equation}\label{edneG}\setlength{\abovedisplayskip}{1.2pt}
\setlength{\belowdisplayskip}{1.2pt}
\max\limits_{\tilde\beta}\sum\limits_{a\in A(I^j_i)}\tilde\beta^i_{I^j_i}u^i(a,\varrho(\beta^{*-I^j_i},\tilde\beta),\mu^*|I^j_i)=\sum\limits_{a\in A(I^j_i)}\tilde\beta^{*i}_{I^j_i}u^i(a,\varrho(\beta^{*-I^j_i},\tilde\beta^*),\mu^*|I^j_i)\end{equation}
since $u^i(a,\varrho(\beta^{*-I^j_i},\tilde\beta),\mu^*|I^j_i)$ is independent of $\tilde\beta$ for all $a\in A(I^j_i)$.

\noindent {\bf Case (b)}. Consider $i\in N$ and $j\in M_i$ with $M(I^j_i)\ne \emptyset$ and \begin{equation}\label{edneH}\setlength{\abovedisplayskip}{1.2pt}
\setlength{\belowdisplayskip}{1.2pt}
\max\limits_{\tilde\beta}\sum\limits_{a'\in A(I^q_i)}\tilde\beta^i_{I^q_i}(a)u^i(a',\varrho(\beta^{*-I^q_i},\tilde\beta),\mu^*|I^q_i)=\sum\limits_{a'\in A(I^q_i)}\tilde\beta^{*i}_{I^q_i}(a)u^i(a',\varrho(\beta^{*-I^q_i},\tilde\beta^*),\mu^*|I^q_i)\end{equation} for any $q\in M(I^j_i)$ with ${\cal S}^i(I^q_i|\beta^*)>0$.
For any $a\in A(I^j_i)$ and $q\in M(a, I^j_i)$, we have
{\footnotesize\begin{equation}\label{edneI}\setlength{\abovedisplayskip}{1.2pt}
\setlength{\belowdisplayskip}{1.2pt}
\begin{array}{rl}
& u^i(a,\varrho^i_{I^j_i}(\beta^{*-I^j_i},\tilde\beta),\mu^*|I^j_i)\\

= & \sum\limits_{q\in M(a, I^j_i)}\frac{{\cal S}^i(I^q_i|\beta^*)}{{\cal S}^i(I^j_i|\beta^*)}\sum\limits_{a'\in A(I^q_i)}\tilde\beta^i_{I^q_i}(a')u^i(a',\varrho^i_{I^q_i}(\beta^{*-I^q_i},\tilde\beta),\mu^*|I^q_i) +\sum\limits_{h\in Z^0(a,I^j_i)}u^i(h)\nu^i_{I^j_i}(h|a,\beta^{*-I^j_i},\mu^*).
\end{array}\end{equation}}
\noindent Then, 
{\small \begin{equation}\label{edneJ}\setlength{\abovedisplayskip}{1.2pt}
\setlength{\belowdisplayskip}{1.2pt}
\begin{array}{rl}
& \max\limits_{\tilde\beta^i}\sum\limits_{a\in A(I^j_i)}\tilde\beta^i_{I^j_i}(a)u^i(a,\varrho^i_{I^j_i}(\beta^{*-I^j_i},\tilde\beta),\mu^*|I^j_i)\\

= & \max\limits_{\tilde\beta^i_{I^j_i}}\sum\limits_{a\in A(I^j_i)}\tilde\beta^i_{I^j_i}(a) \big(\sum\limits_{q\in M(a, I^j_i)}\frac{{\cal S}^i(I^q_i|\beta^*)}{{\cal S}^i(I^j_i|\beta^*)}\max\limits_{\tilde\beta^i}\sum\limits_{a'\in A(I^q_i)}\tilde\beta^i_{I^q_i}(a')u^i(a',\varrho^i_{I^q_i}(\beta^{*-I^q_i},\tilde\beta),\mu^*|I^q_i) \\
& +\sum\limits_{h\in Z^0(a,I^j_i)}u^i(h)\nu^i_{I^j_i}(h|a,\beta^{*-I^j_i},\mu^*)\big)\\

= & \max\limits_{\tilde\beta^i_{I^j_i}}\sum\limits_{a\in A(I^j_i)}\tilde\beta^i_{I^j_i}(a) \big(\sum\limits_{q\in M(a, I^j_i)}\frac{{\cal S}^i(I^q_i|\beta^*)}{{\cal S}^i(I^j_i|\beta^*)}\sum\limits_{a'\in A(I^q_i)}\tilde\beta^{*i}_{I^q_i}(a')u^i(a',\varrho^i_{I^q_i}(\beta^{*-I^q_i},\tilde\beta^*),\mu^*|I^q_i) \\
& +\sum\limits_{h\in Z^0(a,I^j_i)}u^i(h)\nu^i_{I^j_i}(h|a,\beta^{*-I^j_i},\mu^*)\big)\\

= & \max\limits_{\tilde\beta^i_{I^j_i}}\sum\limits_{a\in A(I^j_i)}\tilde\beta^i_{I^j_i}(a) u^i(a,\varrho^i_{I^j_i}(\beta^{*-I^j_i},\tilde\beta^*),\mu^*|I^j_i)\\

= & \sum\limits_{a\in A(I^j_i)}\tilde\beta^{*i}_{I^j_i}(a) u^i(a,\varrho^i_{I^j_i}(\beta^{*-I^j_i},\tilde\beta^*),\mu^*|I^j_i),
\end{array}\end{equation}}

\noindent where the second equality comes from Eq.~(\ref{edneH}) and the fourth equality comes from Definition~\ref{edne1}.
Continuing this backward induction process, one can attain the desired result in Eq.~(\ref{edneF}).

We now are ready to prove that $u^i(\beta^*)=\max\limits_{\beta^i}u^i(\beta^i,\beta^{*-i})$. For $i\in N$ and $j\in M_i$, we denote by $d(I^j_i)$ the number of information sets of player $i$ intersecting some $h=\langle a_1,\ldots,a_L\rangle\in I^j_i$. When $I^j_i=\{\emptyset\}$,  we have $d(I^j_i)=1$. For the game in Fig.~\ref{Notation}, we have $d(I^1_2)=1$ and $d(I^2_2)=2$. Let $
Z^i=\{h\in Z|h\cap A(I^j_i)=\emptyset\text{ for all $j\in M_i$}\}$.
It follows from Eq.~(\ref{ispayoffs}) that
 \begin{equation}\label{edneK}\setlength{\abovedisplayskip}{1.2pt}
\setlength{\belowdisplayskip}{1.2pt}
u^i(\beta)=\sum\limits_{j\in M_i}^{d(I^j_i)=1}u^i(\beta\land I^j_i)+\sum\limits_{h\in Z^i}u^i(h)\omega(h|\beta).\end{equation} Then we derive from Eq.~(\ref{edneA}) that
\begin{equation}\label{edneL}\setlength{\abovedisplayskip}{1.2pt}
\setlength{\belowdisplayskip}{1.2pt}
\begin{array}{rl}
u^i(\beta^*)= & \sum\limits_{j\in M_i}^{d(I^j_i)=1}u^i(\beta^*\land I^j_i)+\sum\limits_{h\in Z^i}u^i(h)\omega(h|\beta^*)\\

= & \sum\limits_{j\in M_i}^{d(I^j_i)=1}u^i(\varrho^i_{I^j_i}(\beta^*,\tilde\beta^*)\land I^j_i)+\sum\limits_{h\in Z^i}u^i(h)\omega(h|\beta^*).
\end{array}\end{equation}
For $j\in M_i$ with $d(I^j_i)=1$, we have
\begin{equation}\label{edneM}\setlength{\abovedisplayskip}{1.2pt}
\setlength{\belowdisplayskip}{1.2pt}
\begin{array}{rl}
 u^i(\varrho^i_{I^j_i}(\beta^*,\tilde\beta^*)\land I^j_i)
= &\sum\limits_{a\in A(I^j_i)}\beta^{*i}_{I^j_i}(a)u^i((a,\varrho^i_{I^j_i}(\beta^{*-I^j_i},\tilde\beta^*))\land I^j_i)\\

= & {\cal S}^i(I^j_i|\beta^*)\sum\limits_{a\in A(I^j_i)}\beta^{*i}_{I^j_i}(a)u^i(a,\varrho^i_{I^j_i}(\beta^{*-I^j_i},\tilde\beta^*),\mu^*|I^j_i).
\end{array}\end{equation}
For $j\in M_i$ with $d(I^j_i)=1$ and ${\cal S}^i(I^j_i|\beta^*)>0$, since $\omega(I^j_i|\beta^*)={\cal S}^i(I^j_i|\beta^*)$, it follows from Definition~\ref{edne1} that \[\setlength{\abovedisplayskip}{1.2pt}
\setlength{\belowdisplayskip}{1.2pt}\sum\limits_{a\in A(I^j_i)}\beta^{*i}_{I^j_i}(a)u^i(a,\varrho^i_{I^j_i}(\beta^{*-I^j_i},\tilde\beta^*),\mu^*| I^j_i)=\sum\limits_{a\in A(I^j_i)}\tilde\beta^{*i}_{I^j_i}(a)u^i(a,\varrho^i_{I^j_i}(\beta^{*-I^j_i},\tilde\beta^*),\mu^*| I^j_i).\]  This result together with Eq.~(\ref{edneL}) leads to
\begin{equation}\label{edneN}\setlength{\abovedisplayskip}{1.2pt}
\setlength{\belowdisplayskip}{1.2pt}
u^i(\beta^*)=\sum\limits_{j\in M_i}^{d(I^j_i)=1}{\cal S}^i(I^j_i|\beta^*)\sum\limits_{a\in A(I^j_i)}\tilde\beta^{*i}_{I^j_i}(a)u^i(a,\varrho^i_{I^j_i}(\beta^{*-I^j_i},\tilde\beta^*),\mu^*|I^j_i)+\sum\limits_{h\in Z^i}u^i(h)\omega(h|\beta^*).\end{equation}
Furthermore, we have
\begin{equation}\label{edneO}\setlength{\abovedisplayskip}{1.2pt}
\setlength{\belowdisplayskip}{1.2pt}
\begin{array}{rl}  u^i(\beta^i,\beta^{*-i})
= & \sum\limits_{j\in M_i}^{d(I^j_i)=1}u^i((\beta^i,\beta^{*-i})\land I^j_i)+\sum\limits_{h\in Z^i}u^i(h)\omega(h|\beta^*)\\
= &  \sum\limits_{j\in M_i}^{d(I^j_i)=1}\sum\limits_{a\in A(I^j_i)}\beta^i_{I^j_i}(a)u^i((a,\beta^{i,-I^j_i},\beta^{*-i})\land I^j_i)+\sum\limits_{h\in Z^i}u^i(h)\omega(h|\beta^*).
\end{array}\end{equation}
For $i\in N$, $j\in M_i$ and $a\in A(I^j_i)$, let 
 \[\setlength{\abovedisplayskip}{1.2pt}
\setlength{\belowdisplayskip}{1.2pt}
\Lambda(a, I^j_i)=\left\{q\in M_i|\text{for any
 $h=\langle a_1,\ldots,a_L\rangle\in I^q_i$, there exists $1\le \ell\le L$ such that $a_\ell=a$}\right\}.\]
Because of the perfect recall, it always holds that $\Lambda(a', I^j_i)\cap\Lambda(a'', I^j_i)=\emptyset$ for any $a',a''\in A(I^j_i)$ with $a'\ne a''$.
For $i\in N$ and $j\in M_i$ with $d(I^j_i)=1$, since $\Lambda(a',I^j_i)\cap\Lambda(a'', I^j_i)=\emptyset$ for any $a',a''\in A(I^j_i)$ with $a'\ne a''$, one can deduce that $u^i((a',\beta^{i,-I^j_i},\beta^{*-i})\land I^j_i)$ is a polynomial function of  $(\beta^q_{I^q_i}: q\in\Lambda(a',I^j_i))$ and $u^i((a'',\beta^{i,-I^j_i},\beta^{*-i})\land I^j_i)$ is a polynomial function of  $(\beta^q_{I^q_i}:q\in\Lambda(a'',I^j_i))$. This result reveals that 
\begin{equation}\label{edneP}\setlength{\abovedisplayskip}{1.2pt}
\setlength{\belowdisplayskip}{1.2pt}
\max\limits_{\beta^i}\sum\limits_{a\in A(I^j_i)}\beta^i_{I^j_i}(a)u^i((a,\beta^{i,-I^j_i},\beta^{*-i})\land I^j_i)=\max\limits_{\beta^i_{I^j_i}}\sum\limits_{a\in A(I^j_i)}\beta^i_{I^j_i}(a)\max\limits_{\beta^{i,-I^j_i}} u^i((a,\beta^{i,-I^j_i},\beta^{*-i})\land I^j_i).\end{equation}
Similarly, one can derive that \begin{equation}\label{edneQ}\setlength{\abovedisplayskip}{1.2pt}
\setlength{\belowdisplayskip}{1.2pt}
\max\limits_{\tilde\beta}\sum\limits_{a\in A(I^j_i)}\tilde\beta^i_{I^j_i}(a) u^i(a,\varrho^i_{I^j_i}(\beta^*,\tilde\beta), \mu^*| I^j_i)=\max\limits_{\tilde\beta^i_{I^j_i}}\sum\limits_{a\in A(I^j_i)}\tilde\beta^i_{I^j_i}(a)\max\limits_{\tilde\beta} u^i(a,\varrho^i_{I^j_i}(\beta^*,\tilde\beta), \mu^*| I^j_i).\end{equation}
Thus,
{\footnotesize \begin{equation}\label{edneR}\setlength{\abovedisplayskip}{1.2pt}
\setlength{\belowdisplayskip}{1.2pt}
\begin{array}{rl}
 & \max\limits_{\beta^i}u^i(\beta^i,\beta^{*-i}) \\
= & \sum\limits_{j\in M_i}^{d(I^j_i)=1}\max\limits_{\beta^i}\sum\limits_{a\in A(I^j_i)}\beta^i_{I^j_i}(a)u^i((a,\beta^{i,-I^j_i},\beta^{*-i})\land I^j_i)+\sum\limits_{h\in Z^i}u^i(h)\omega(h|\beta^*)\\
= & \sum\limits_{j\in M_i}^{d(I^j_i)=1}\max\limits_{\beta^i_{I^j_i}}\sum\limits_{a\in A(I^j_i)}\beta^i_{I^j_i}(a)\max\limits_{\beta^{i,-I^j_i}} u^i((a,\beta^{i,-I^j_i},\beta^{*-i})\land I^j_i)+\sum\limits_{h\in Z^i}u^i(h)\omega(h|\beta^*)\\
= & \sum\limits_{j\in M_i}^{d(I^j_i)=1}\max\limits_{\beta^i_{I^j_i}}\sum\limits_{a\in A(I^j_i)}\beta^i_{I^j_i}(a)\max\limits_{\tilde\beta^{i,-I^j_i}} u^i((a,\tilde\beta^{i,-I^j_i},\beta^{*-i})\land I^j_i)+\sum\limits_{h\in Z^i}u^i(h)\omega(h|\beta^*)\\
= & \sum\limits_{j\in M_i}^{d(I^j_i)=1}\max\limits_{\beta^i_{I^j_i}}\sum\limits_{a\in A(I^j_i)}\beta^i_{I^j_i}(a)\max\limits_{\tilde\beta} u^i((a,\varrho^i_{I^j_i}(\beta^*,\tilde\beta))\land I^j_i)+\sum\limits_{h\in Z^i}u^i(h)\omega(h|\beta^*)\\
= & \sum\limits_{j\in M_i}^{d(I^j_i)=1}{\cal S}^i(I^j_i|\beta^*)\max\limits_{\tilde\beta^i_{I^j_i}}\sum\limits_{a\in A(I^j_i)}\tilde\beta^i_{I^j_i}(a)\max\limits_{\tilde\beta} u^i(a,\varrho^i_{I^j_i}(\beta^*,\tilde\beta), \mu^*| I^j_i)+\sum\limits_{h\in Z^i}u^i(h)\omega(h|\beta^*)\\
= & \sum\limits_{j\in M_i}^{d(I^j_i)=1}{\cal S}^i(I^j_i|\beta^*)\max\limits_{\tilde\beta}\sum\limits_{a\in A(I^j_i)}\tilde\beta^i_{I^j_i}(a) u^i(a,\varrho^i_{I^j_i}(\beta^*,\tilde\beta), \mu^*| I^j_i)+\sum\limits_{h\in Z^i}u^i(h)\omega(h|\beta^*)\\
= & \sum\limits_{j\in M_i}^{d(I^j_i)=1}{\cal S}^i(I^j_i|\beta^*)\sum\limits_{a\in A(I^j_i)}\tilde\beta^{*i}_{I^j_i}(a) u^i(a,\varrho^i_{I^j_i}(\beta^*,\tilde\beta^*), \mu^*| I^j_i)+\sum\limits_{h\in Z^i}u^i(h)\omega(h|\beta^*)\\
= & u^i(\beta^*),
\end{array}\end{equation}
}where the first equality is derived from Eq.~(\ref{edneO}), the second equality is derived from Eq.~(\ref{edneP}), the sixth equality is derived from Eq.~(\ref{edneQ}), the seventh equality is derived from Eq.~(\ref{edneF}), and the last equality is derived from Eq.~(\ref{edneN}). 
Consequently, $\beta^*$ is a NashEBS according to Definition~\ref{ned1}.

($\Leftarrow$). We next show that Definition~\ref{ned1} implies Definition~\ref{edne1}.  Let $\beta^*$ be a Nash equilibrium according to Definition~\ref{ned1}.
 We denote by $\mu^*$ a solution to the system~(\ref{bsne1}).
 Given $\mu^*$, let $\tilde\beta^*=(\tilde\beta^{*i}_{I^j_i}:i\in N,j\in M_i)$ be a behavioral strategy profile such that, for all $i\in N$ and $j\in M_i$ with $\omega(I^j_i|\beta^*)>0$, we take $\tilde\beta^{*i}_{I^j_i}=\beta^{*i}_{I^j_i}$, and for all $i\in N$ and $j\in M_i$ with $\omega(I^j_i|\beta^*)=0$, we solve through the backward induction
 a sequence of the following linear optimization problems to find $\tilde\beta^{*i}_{I^j_i}$, \begin{equation}\setlength{\abovedisplayskip}{1.2pt}
\setlength{\belowdisplayskip}{1.2pt}
\label{ednelp1}\begin{array}{rl}
\max\limits_{\tilde\beta^i_{I^j_i}} &\sum\limits_{a\in A(I^j_i)}\tilde\beta^i_{I^j_i}(a)u^i(a,\varrho(\beta^{*-I^j_i},\tilde\beta^*),\mu^*| I^j_i)\\
\text{s.t.} & \sum\limits_{a\in A(I^j_i)}\tilde\beta^i_{I^j_i}(a)=1,\;0\le\tilde\beta^i_{I^j_i}(a),\;a\in A(I^j_i).
\end{array}\end{equation}
We next prove that  $(\beta^*,\tilde\beta^*,\mu^*)$ meets the properties of Definition~\ref{edne1}. For any $i\in N$ and $j\in M_i$ with $\omega(I^j_i|\beta^*)=0$, we have $u^i(\beta^*\land I^j_i)=u^i(\varrho^i_{I^j_i}(\beta^*,\tilde\beta^*)\land I^j_i)=0$.
Then it suffices to consider only those information sets reachable  by $
\beta^*$. 
Let $I^j_i$ be an information set with $\omega(I^j_i|\beta^*)>0$. Thus, for any $q\in M(I^j_i)$ with $\omega(I^q_i|\beta^*)=0$, we have $u^i(\varrho^i_{I^j_i}(\beta^*,\tilde\beta^*)\land I^q_i)=u^i(\beta^*\land I^q_i)=0$, and for any $q\in M(I^j_i)$ with $\omega(I^q_i|\beta^*)>0$, we have $\tilde\beta^{*i}_{I^q_i}=\beta^{*i}_{I^q_i}$ by the selection of $\tilde\beta^*$. Therefore, as a result of Lemma~\ref{ednelm1}, one can draw the conclusion that
 $u^i(\varrho^i_{I^j_i}(\beta^*,\tilde\beta^*)\land I^j_i)=u^i(\beta^*\land I^j_i)$. Hence it follows from the condition of $u^i(\beta^*)\ge u^i(\beta^i,\beta^{*-i})$ for any $\beta^i$ in Definition~\ref{ned1} that $\beta^{*i}_{I^j_i}(a')=0$ for any $i\in N$, $j\in M_i$ and $a',a''\in A(I^j_i)$ with $u^i((a'', \varrho^i_{I^j_i}(\beta^{*-I^j_i},\tilde\beta^*))\land I^j_i)>u^i((a', \varrho^i_{I^j_i}(\beta^{*-I^j_i},\tilde\beta^*))\land I^j_i)$.
 
 Consider $i\in N$ and $j\in M_i$ with $\omega(I^j_i|\beta^*)>0$.
 We have $\mu^{*i}_{I^j_i}(h)=\frac{{\cal S}^i(h|\beta^*)}{{\cal S}^i(I^j_i|\beta^*)}=\frac{\omega(h|\beta^*)}{\omega(I^j_i|\beta^*)}$ for $h\in I^j_i$.  
Then, $u^i(\varrho^i_{I^j_i}(\beta^*,\tilde\beta^*), \mu^*|I^j_i)=\frac{1}{\omega(I^j_i|\beta^*)} u^i(\varrho^i_{I^j_i}(\beta^*,\tilde\beta^*)\land I^j_i)=\frac{1}{\omega(I^j_i|\beta^*)}u^i(\beta^*\land I^j_i)$. Thus, as result of $u^i(\beta^*)\ge u^i(\beta^i,\beta^{*-i})$ for any $\beta^i$ in Definition~\ref{ned1}, we acquire from $\tilde\beta^{*i}_{I^j_i}=\beta^{*i}_{I^j_i}$ that $\tilde\beta^{*i}_{I^j_i}(a')=0$ for any $i\in N$, $j\in M_i$ and $a',a''\in A(I^j_i)$ with $u^i(a'', \varrho^i_{I^j_i}(\beta^{*-I^j_i},\tilde\beta^*),\mu^*|I^j_i)>u^i(a', \varrho^i_{I^j_i}(\beta^{*-I^j_i},\tilde\beta^*),\mu^*|I^j_i)$.
Consider $i\in N$ and $j\in M_i$ with $\omega(I^j_i|\beta^*)=0$. It follows from the linear optimization problem~(\ref{ednelp1}) that $\tilde\beta^{*i}_{I^j_i}(a')=0$ for any $i\in N$, $j\in M_i$ and $a',a''\in A(I^j_i)$ with $u^i(a'', \varrho^i_{I^j_i}(\beta^{*-I^j_i},\tilde\beta^*),\mu^*|I^j_i)>u^i(a', \varrho^i_{I^j_i}(\beta^{*-I^j_i},\tilde\beta^*),\mu^*|I^j_i)$.

 These results together confirm that  $(\beta^*,\tilde\beta^*,\mu^*)$ meets the properties of Definition~\ref{edne1}. The proof is completed.}
\end{proof} 

We denote by $\text{int}(C)$ and $|C|$ the  interior of a set $C$ and the cardinality of a finite set $C$, respectively. Let
$\triangle=\mathop{\times}\limits_{i\in N,\;j\in M_i}\triangle^i_{I^j_i}$ and $\triangle^i=\mathop{\times}\limits_{j\in M_i}\triangle^i_{I^j_i}$, where $\triangle^i_{I^j_i}=\{\beta^i_{I^j_i}\in\mathbb{R}_+^{|A(I^j_i)|}|\sum\limits_{a\in A(I^j_i)}\beta^i_{I^j_i}(a)=1\}$. Let $\Xi=\mathop{\times}\limits_{i\in N,\;j\in M_i}\Xi^i_{I^j_i}$, where $\Xi^i_{I^j_i}=\{\mu^i_{I^j_i}=(\mu^i_{I^j_i}(h):h\in I^j_i)^{\top}|\sum\limits_{h\in I^j_i}\mu^i_{I^j_i}(h)=1,\;0\le\mu^i_{I^j_i}(h)\}$. 
To prove the existence of a NashEBS, a general approach is to leverage the well-known Brouwer or Kakutani fixed point theorem. However,  according to Definition~\ref{ned1},
 it would be difficult for one to establish with the general approach the existence of a NashEBS for an extensive-form game with perfect recall. The reason is as follows: Define the best response correspondence $B:\triangle\to\triangle$ such that, for all $\beta\in\triangle$, we have $B(\beta)=\mathop{\times}\limits_{i\in N} B_i(\beta)$, where $B_i(\beta)=\text{arg}\max\limits_{\hat\beta^i\in\triangle^i}u^i(\hat\beta^i,\beta^{-i})$. Clearly, $B(\beta)$ is not a convex-valued correspondence due to the nonconvexity of $u^i(\hat\beta^i,\beta^{-i})$ in $\hat\beta^i$. Thus, the conditions of Kakutani's fixed point theorem are not satisfied, and as a result, the existence of a NashEBS cannot be directly established using Definition~\ref{ned1}. Therefore one had to exploit the associated normal-form game of an extensive-form game for the existence of a NashEBS.
 We next demonstrate that our characterization allows us to establish the existence of a NashEBS for an extensive-form game directly by leveraging the extensive-form game structure, without relying on the associated normal-form game.

\begin{theorem}\label{nethm1}{\em There always exists a NashEBS for every finite extensive-form game with perfect recall.}
\end{theorem}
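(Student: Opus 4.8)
The plan is to set Definition~\ref{ned1} aside entirely and build the existence argument on the equivalent characterization of Definition~\ref{edne1}, invoking the equivalence established in Theorem~\ref{ednethm1} to transfer the conclusion back to Nash's original notion. The decisive advantage of that characterization is that, at every information set, the relevant payoff functions $u^i((\beta^i_{I^j_i},\varrho^i_{I^j_i}(\beta^{-I^j_i},\tilde\beta))\land I^j_i)$ and $u^i(\tilde\beta^i_{I^j_i},\varrho^i_{I^j_i}(\beta^{-I^j_i},\tilde\beta),\mu|I^j_i)$ are \emph{linear} in the choice variable $\beta^i_{I^j_i}$ (respectively $\tilde\beta^i_{I^j_i}$), because $\varrho^i_{I^j_i}(\beta^{-I^j_i},\tilde\beta)$ never contains the component being optimized. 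This linearity restores the convex-valuedness of the best-response map that was lost under Definition~\ref{ned1}, so the plan is to apply Kakutani's fixed point theorem on the compact convex domain $\triangle\times\triangle\times\Xi$.

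Concretely, I would define a correspondence $F:\triangle\times\triangle\times\Xi\rightrightarrows\triangle\times\triangle\times\Xi$ with three blocks. The first block sends $(\beta,\tilde\beta,\mu)$ to $\prod_{i\in N,j\in M_i}\arg\max_{\hat\beta^i_{I^j_i}\in\triangle^i_{I^j_i}}u^i((\hat\beta^i_{I^j_i},\varrho^i_{I^j_i}(\beta^{-I^j_i},\tilde\beta))\land I^j_i)$, governing the $\beta$-coordinate and enforcing property~(i). The second block sends $(\beta,\tilde\beta,\mu)$ to $\prod_{i\in N,j\in M_i}\arg\max_{\hat{\tilde\beta}^i_{I^j_i}\in\triangle^i_{I^j_i}}u^i(\hat{\tilde\beta}^i_{I^j_i},\varrho^i_{I^j_i}(\beta^{-I^j_i},\tilde\beta),\mu|I^j_i)$, governing the $\tilde\beta$-coordinate and enforcing property~(ii). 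The third block sends $(\beta,\tilde\beta,\mu)$ to the solution set of the self-independent belief system~(\ref{bsne1}) for the given $\beta$, governing the $\mu$-coordinate and enforcing property~(iii). A fixed point $(\beta^*,\tilde\beta^*,\mu^*)\in F(\beta^*,\tilde\beta^*,\mu^*)$ is then precisely a triple meeting the three properties of Definition~\ref{edne1}, so Theorem~\ref{ednethm1} immediately yields that $\beta^*$ is a NashEBS.

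To invoke Kakutani I would verify that $\triangle\times\triangle\times\Xi$ is nonempty, compact, and convex (a product of simplices), and that $F$ is nonempty-, compact-, and convex-valued with a closed graph. For the two best-response blocks, each $\arg\max$ of a linear objective over the simplex $\triangle^i_{I^j_i}$ is a nonempty compact face, hence convex, and upper hemicontinuity follows from Berge's maximum theorem once I note that the objectives are polynomial, and therefore jointly continuous, in all of $(\beta,\tilde\beta,\mu)$; the product of such correspondences over all $i$ and $j$ inherits these properties.

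The step I expect to demand the most care is the belief block. For a fixed $\beta$, the solution set of~(\ref{bsne1}) is the intersection of the simplex $\Xi^i_{I^j_i}$ with the affine constraints ${\cal S}^i(I^j_i|\beta)\mu^i_{I^j_i}(h)={\cal S}^i(h|\beta)$; this set is convex and nonempty, being a single point when ${\cal S}^i(I^j_i|\beta)>0$ and the whole simplex $\Xi^i_{I^j_i}$ when ${\cal S}^i(I^j_i|\beta)=0$, so convex-valuedness and nonemptiness are immediate. The delicate point is continuity at the unreached information sets where ${\cal S}^i(I^j_i|\beta)=0$, which in the classical construction of sequential equilibrium forces a perturbation-and-limit argument. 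Here the obstacle dissolves because~(\ref{bsne1}) is written multiplicatively rather than as the quotient ${\cal S}^i(h|\beta)/{\cal S}^i(I^j_i|\beta)$: since both ${\cal S}^i(I^j_i|\beta)$ and ${\cal S}^i(h|\beta)$ are polynomials in $\beta$, the graph $\{(\beta,\mu):{\cal S}^i(I^j_i|\beta)\mu^i_{I^j_i}(h)={\cal S}^i(h|\beta),\ \mu\in\Xi\}$ is closed, and since $\Xi$ is compact the belief block is upper hemicontinuous without any perturbation. Granting these verifications, Kakutani's theorem furnishes the desired fixed point and the existence of a NashEBS follows.
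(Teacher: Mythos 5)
Your proposal is correct and follows essentially the same route as the paper: a three-block correspondence on $\triangle\times\triangle\times\Xi$ whose blocks enforce properties (i)--(iii) of Definition~\ref{edne1} via linear programs over simplices and the system~(\ref{bsne1}), followed by Kakutani and the equivalence of Theorem~\ref{ednethm1}. Your explicit closed-graph argument for the belief block using the multiplicative form of~(\ref{bsne1}) is a slightly more careful rendering of the upper hemicontinuity claim the paper only asserts, but the construction is the same.
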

\begin{proof} {\small For $(\beta',\tilde\beta',\mu')\in\triangle\times\triangle\times\Xi$, let $F(\beta',\tilde\beta', \mu')=\mathop{\times}\limits_{i\in  N,\;j\in M_i}F^i_{I^j_i}(\beta',\tilde\beta',\mu')$ with
$F^i_{I^j_i}(\beta',\tilde\beta',\mu')=G^{1i}_{I^j_i}(\beta',\tilde\beta', \mu')\times G^{2i}_{I^j_i}(\beta',\tilde\beta', \mu')\times G^{3i}_{I^j_i}(\beta',\tilde\beta', \mu')$,
where, if $\omega(I^j_i|\beta')=0$, then $G^{1i}_{I^j_i}(\beta',\tilde\beta', \mu')=\triangle^i_{I^j_i}$;  if  $\omega(I^j_i|\beta')>0$, then $G^{1i}_{I^j_i}(\beta',\tilde\beta', \mu')$ is the set of solutions to the linear optimization problem, 
\begin{equation}\label{nethm1op1} \setlength{\abovedisplayskip}{1.2pt}
\setlength{\belowdisplayskip}{1.2pt}
\begin{array}{rl}
\max\limits_{\beta^i_{I^j_i}} & \sum\limits_{a\in A(I^j_i)}\beta^i_{I^j_i}(a)u^i((a,\varrho^i_{I^j_i}(\beta',\tilde\beta'))\land I^j_i)\\
\text{s.t.} & \sum\limits_{a\in A(I^j_i)}\beta^i_{I^j_i}(a)=1,\;0\le\beta^i_{I^j_i},\;a\in A(I^j_i);
\end{array}
\end{equation}
$G^{2i}_{I^j_i}(\beta',\tilde\beta', \mu')$ is the set of solutions to the linear optimization problem, 
\begin{equation}\label{nethm1op2}\setlength{\abovedisplayskip}{1.2pt}
\setlength{\belowdisplayskip}{1.2pt}
\begin{array}{rl}
\max\limits_{\beta^i_{I^j_i}} & \sum\limits_{a\in A(I^j_i)}\beta^i_{I^j_i}(a)u^i(a,\varrho^i_{I^j_i}(\beta',\tilde\beta'), \mu'| I^j_i)\\
\text{s.t.} & \sum\limits_{a\in A(I^j_i)}\beta^i_{I^j_i}(a)=1,\;0\le\beta^i_{I^j_i},\;a\in A(I^j_i);
\end{array}
\end{equation}
if 
${\cal S}^i(I^j_i|\beta')=0$, then $G^{3i}_{I^j_i}(\beta',\tilde\beta', \mu')=\Xi^i_{I^j_i}$; and if ${\cal S}^i(I^j_i|\beta')>0$, then $G^{3i}_{I^j_i}(\beta',\tilde\beta', \mu')=(\mu^i_{I^j_i}(h):h\in I^j_i)^\top$ with $\mu^i_{I^j_i}(h)=\frac{{\cal S}^i(h|\beta')}{{\cal S}^i(I^j_i|\beta')}$. Since the problem~(\ref{nethm1op1}) and the problem~(\ref{nethm1op2}) are convex optimization problems, $F(\beta,\tilde\beta,\mu)$ is a nonempty convex and compact set for any $(\beta,\tilde\beta,\mu)\in \triangle\times\triangle\times\Xi$. As a result of continuity of $u^i((a,\varrho^i_{I^j_i}(\beta,\tilde\beta))\land I^j_i)$, $u^i(a,\varrho^i_{I^j_i}(\beta,\tilde\beta), \mu| I^j_i)$, $\omega(h|\beta)$, and ${\cal S}^i(h|\beta)$ on $\triangle\times\triangle\times\Xi$, we get that  $F(\beta,\tilde\beta,\mu)$ is a semi-continuous mapping on $\triangle\times\triangle\times\Xi$. Consequently, it follows from Kakutani's fixed point theorem that
there exists $(\beta^*,\tilde\beta^*,\mu^*)\in \triangle\times\triangle\times\Xi$ such that
$(\beta^*,\tilde\beta^*,\mu^*)\in F(\beta^*,\tilde\beta^*,\mu^*)$, which meets the requirements of Definition~\ref{edne1}.   The proof is completed.}
\end{proof}
As a direct application of Definition~\ref{edne1}, we secure a polynomial system as a necessary and sufficient condition for determining whether a behavioral strategy profile is a NashEBS, which can be exploited in the development of a general method for computing such an equilibrium. 
\begin{theorem}\label{nscthm1} {\em $\beta^*$ is a Nash equilibrium in behavioral strategies if and only if there exist $(\tilde\beta^*,\mu^*,\lambda^*,\tilde\lambda^*,\zeta^*,\tilde\zeta^*)$ together with $\beta^*$
satisfying the system,
\begin{equation}\setlength{\abovedisplayskip}{1.2pt}
\setlength{\belowdisplayskip}{1.2pt}
\label{nsrne1}
\begin{array}{l}
u^i((a,\varrho^i_{I^j_i}(\beta^{-I^j_i},\tilde\beta))\land I^j_i) +\lambda^i_{I^j_i}(a)-\zeta^i_{I^j_i}=0,\;i\in N,j\in M_i,a\in A(I^j_i),\\
u^i(a,\varrho^i_{I^j_i}(\beta^{-I^j_i},\tilde\beta),\mu|I^j_i) +\tilde\lambda^i_{I^j_i}(a)-\tilde\zeta^i_{I^j_i}=0,\;i\in N,j\in M_i,a\in A(I^j_i),\\
{\cal S}^i(I^j_i|\beta)\mu^i_{I^j_i}(h)={\cal S}^i(h|\beta),\;0\le \mu^i_{I^j_i}(h),\;i\in N,j\in M_i,h\in I^j_i,\\
\sum\limits_{h\in I^j_i}\mu^i_{I^j_i}(h)=1,\;\sum\limits_{a\in A(I^j_i)}\beta^i_{I^j_i}(a)=1,\;\sum\limits_{a\in A(I^j_i)}\tilde\beta^i_{I^j_i}(a)=1,\;i\in N,j\in M_i,\\
\beta^i_{I^j_i}(a)\lambda^i_{I^j_i}(a)=0,\;\tilde\beta^i_{I^j_i}(a)\tilde\lambda^i_{I^j_i}(a)=0,\\
0\le\beta^i_{I^j_i}(a),\;0\le\lambda^i_{I^j_i}(a),\;0\le \tilde\beta^i_{I^j_i}(a),\;0\le\tilde\lambda^i_{I^j_i}(a),\;i\in N,j\in M_i,a\in A(I^j_i).
\end{array}
\end{equation}
}
\end{theorem}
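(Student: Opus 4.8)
The plan is to recognize that the polynomial system~(\ref{nsrne1}) is nothing more than the collection of Karush--Kuhn--Tucker (KKT) optimality conditions for the two families of linear programs that define local sequential rationality in Definition~\ref{edne1}, augmented by the defining equations~(\ref{bsne1}) of a self-independent belief system. Concretely, for fixed $(\beta,\tilde\beta,\mu)$ and fixed $i,j$, the first equation of~(\ref{nsrne1}) together with the complementarity condition $\beta^i_{I^j_i}(a)\lambda^i_{I^j_i}(a)=0$, the simplex constraint $\sum_{a}\beta^i_{I^j_i}(a)=1$, and the nonnegativity conditions are exactly the KKT conditions of the linear program~(\ref{nethm1op1}) with objective coefficients $u^i((a,\varrho^i_{I^j_i}(\beta^{-I^j_i},\tilde\beta))\land I^j_i)$; here $\zeta^i_{I^j_i}$ is the multiplier of the equality constraint and $\lambda^i_{I^j_i}(a)$ is the multiplier of $\beta^i_{I^j_i}(a)\ge 0$. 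Likewise, the second equation together with $\tilde\beta^i_{I^j_i}(a)\tilde\lambda^i_{I^j_i}(a)=0$ and the simplex and nonnegativity constraints on $\tilde\beta$ are the KKT conditions of~(\ref{nethm1op2}). The remaining equations, namely ${\cal S}^i(I^j_i|\beta)\mu^i_{I^j_i}(h)={\cal S}^i(h|\beta)$ with $\sum_{h}\mu^i_{I^j_i}(h)=1$ and $\mu^i_{I^j_i}(h)\ge 0$, reproduce the system~(\ref{bsne1}) verbatim.

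First I would establish the elementary but crucial equivalence between the best-response form of optimality used in Definition~\ref{edne1} and the KKT-multiplier form appearing in~(\ref{nsrne1}). For a linear objective over a simplex, a feasible point is optimal if and only if it places positive weight only on actions attaining the maximal coefficient, which is precisely what conditions (i) and (ii) of Definition~\ref{edne1} assert. For the forward direction of this equivalence I would, given a solution of~(\ref{nsrne1}), use complementarity to deduce that $\beta^i_{I^j_i}(a)>0$ forces $\lambda^i_{I^j_i}(a)=0$ and hence $u^i((a,\varrho^i_{I^j_i}(\beta^{-I^j_i},\tilde\beta))\land I^j_i)=\zeta^i_{I^j_i}$, while nonnegativity of $\lambda^i_{I^j_i}(a)$ forces this coefficient never to exceed $\zeta^i_{I^j_i}$; together these give that any action with a strictly larger coefficient must receive zero probability, which is condition (i). For the reverse direction, given condition (i) I would simply define $\zeta^i_{I^j_i}$ as the maximal coefficient and $\lambda^i_{I^j_i}(a)$ as the nonnegative slack $\zeta^i_{I^j_i}-u^i((a,\varrho^i_{I^j_i}(\beta^{-I^j_i},\tilde\beta))\land I^j_i)$, and verify complementarity directly. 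The identical argument applied to the conditional payoffs $u^i(a,\varrho^i_{I^j_i}(\beta^{-I^j_i},\tilde\beta),\mu|I^j_i)$ handles condition (ii) and the multipliers $(\tilde\lambda^*,\tilde\zeta^*)$. Because both programs~(\ref{nethm1op1}) and~(\ref{nethm1op2}) are linear, hence convex, over a polytope with only affine constraints, these KKT conditions are both necessary and sufficient for optimality, so no duality gap or constraint-qualification issue arises.

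Assembling these pieces, I would argue as follows. A tuple $(\beta^*,\tilde\beta^*,\mu^*,\lambda^*,\tilde\lambda^*,\zeta^*,\tilde\zeta^*)$ solves~(\ref{nsrne1}) if and only if $(\beta^*,\tilde\beta^*,\mu^*)$ satisfies conditions (i), (ii), and (iii) of Definition~\ref{edne1}: conditions (i) and (ii) are recovered from the KKT equivalence just described, and condition (iii) coincides with the belief equations of~(\ref{nsrne1}). Consequently $\beta^*$ solves~(\ref{nsrne1}) together with some auxiliary data if and only if $\beta^*$ satisfies Definition~\ref{edne1}, which by Theorem~\ref{ednethm1} holds if and only if $\beta^*$ is a NashEBS in the sense of Definition~\ref{ned1}. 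This chain of equivalences yields the stated necessary and sufficient condition.

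I expect the main obstacle to be bookkeeping rather than conceptual: one must make sure that the argmax-over-the-simplex phrasing of local sequential rationality is transcribed faithfully into multiplier form for \emph{both} the unconditional payoffs $u^i(\cdot\land I^j_i)$ and the conditional payoffs $u^i(\cdot,\mu|I^j_i)$, and that the self-independent belief equations are matched exactly, including the degenerate information sets where ${\cal S}^i(I^j_i|\beta^*)=0$ leaves $\mu^{*i}_{I^j_i}$ free. The only point requiring genuine care is confirming that complementarity together with nonnegativity of the multipliers reproduces the strict-inequality form of conditions (i) and (ii) in \emph{both} directions; everything else then follows from the equivalence already proved in Theorem~\ref{ednethm1}.
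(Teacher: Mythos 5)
Your proposal is correct and follows essentially the same route as the paper: both directions are the standard complementary-slackness argument for a linear objective over a simplex (define $\zeta$ as the maximal coefficient and $\lambda$ as the slack for necessity; use $\beta^i_{I^j_i}(a)\lambda^i_{I^j_i}(a)=0$ together with $\lambda\ge 0$ to recover conditions (i)--(ii) of Definition~\ref{edne1} for sufficiency), with the belief equations matching system~(\ref{bsne1}) verbatim and the final equivalence delegated to Theorem~\ref{ednethm1}. Your explicit framing via the KKT conditions of the programs~(\ref{nethm1op1})--(\ref{nethm1op2}) is only a cosmetic repackaging of the paper's direct computation.
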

\begin{proof} {\small ($\Rightarrow$). We denote by $(\beta^*,\tilde\beta^*,\mu^*)$ a triple satisfying the properties in Definition~\ref{edne1}. Let $\zeta^{*i}_{I^j_i}=\max\limits_{a\in A(I^j_i)}u^i((a,\varrho^i_{I^j_i}(\beta^{*-I^j_i},\tilde\beta^*))\land I^j_i)$, $\tilde\zeta^{*i}_{I^j_i}=\max\limits_{a\in A(I^j_i)}u^i(a,\varrho^i_{I^j_i}(\beta^{*-I^j_i},\tilde\beta^*),\mu^*| I^j_i)$, 
$\lambda^{*i}_{I^j_i}(a)= \zeta^{*i}_{I^j_i}-u^i((a,\varrho^i_{I^j_i}(\beta^{*-I^j_i},\tilde\beta^*))\land I^j_i)$, and $\tilde\lambda^{*i}_{I^j_i}(a)= \tilde\zeta^{*i}_{I^j_i}-u^i(a,\varrho^i_{I^j_i}(\beta^{*-I^j_i},\tilde\beta^*),\mu^*| I^j_i)$. Suppose that there exists some $a'\in A(I^j_i)$ such that $\lambda^{*i}_{I^j_i}(a')>0$. Then there exists $a''\in A(I^j_i)$ such that  $u^i((a'',\varrho^i_{I^j_i}(\beta^{*-I^j_i},\tilde\beta^*))\land I^j_i)>u^i((a',\varrho^i_{I^j_i}(\beta^{*-I^j_i},\tilde\beta^*))\land I^j_i)$. Thus it follows from Definition~\ref{edne1} that $\beta^{*i}_{I^j_i}(a')=0$.  One can show in a similar way that $\tilde\beta^{*i}_{I^j_i}(a')=0$ if  $\tilde\lambda^{*i}_{I^j_i}(a')>0$. Therefore, $(\beta^*,\tilde\beta^*,\mu^*,\zeta^*,\tilde\zeta^*,\lambda^*,\tilde\lambda^*)$ satisfies the system~(\ref{nsrne1}).

($\Leftarrow$). Let $(\beta^*,\tilde\beta^*,\mu^*,\lambda^*,\tilde\lambda^*,\zeta^*,\tilde\zeta^*)$ be a solution to the system~(\ref{nsrne1}). Multiplying $\beta^{*i}_{I^j_i}(a)$ to the first group of equations and $\tilde\beta^{*i}_{I^j_i}$ to the second group of equations in the system~(\ref{nsrne1}) and taking the sum over $A(I^j_i)$, we get 
$\zeta^{*i}_{I^j_i}=\sum\limits_{a\in A(I^j_i)}\beta^{*i}_{I^j_i}(a)u^i((a,\varrho^i_{I^j_i}(\beta^{*-I^j_i},\tilde\beta^*))\land I^j_i)$ and $\tilde\zeta^{*i}_{I^j_i}=\sum\limits_{a\in A(I^j_i)}\tilde\beta^{*i}_{I^j_i}(a)u^i(a,\varrho^i_{I^j_i}(\beta^{*-I^j_i}$, $\tilde\beta^*),\mu^*|I^j_i)$. Then,
as a result of $\lambda^*\ge 0$, we have $\zeta^{*i}_{I^j_i}=\max\limits_{a\in A(I^j_i)}u^i((a,\varrho^i_{I^j_i}(\beta^{*-I^j_i},\tilde\beta^*))\land I^j_i)$ and $\tilde\zeta^{*i}_{I^j_i}=\max\limits_{a\in A(I^j_i)}u^i(a,\varrho^i_{I^j_i}(\beta^{*-I^j_i},\tilde\beta^*),\mu^*|I^j_i)$. Thus, $\beta^{*i}_{I^j_i}(a')=0$ whenever $u^i((a'',\varrho^i_{I^j_i}(\beta^{*-I^j_i},\tilde\beta^*))\land I^j_i)>u^i((a'$, $\varrho^i_{I^j_i}(\beta^{*-I^j_i},\tilde\beta^*))\land I^j_i)$
since $\lambda^{*i}_{I^j_i}(a')>0$, and  $\tilde\beta^{*i}_{I^j_i}(a')=0$ whenever $u^i(a'',\varrho^i_{I^j_i}(\beta^{*-I^j_i},\tilde\beta^*),\mu^*|I^j_i)>u^i(a',\varrho^i_{I^j_i}(\beta^{*-I^j_i}$, $\tilde\beta^*),\mu^*|I^j_i)$ since $\tilde\lambda^{*i}_{I^j_i}(a')>0$.  Therefore, $\beta^*$ is a NashEBS according to Definition~\ref{edne1}. The proof is completed.}
\end{proof}

It is evident from Definition~\ref{edne1} that for any $i\in N$ and $j\in M_i$ with $\omega(I^j_i|\beta^*)>0$,  we have ${\cal S}^i(I^j_i|\beta^*)>0$ and consequently, $u^i((a, \varrho^i_{I^j_i}(\beta^{*-I^j_i},\tilde\beta^*))\land I^j_i)=\omega(I^j_i|\beta^*)u^i(a, \varrho^i_{I^j_i}(\beta^{*-I^j_i}$, $\tilde\beta^*), \mu^*| I^j_i)$ since $\mu^i_{I^j_i}(h)=\frac{{\cal S}^i(h|\beta^*)}{{\cal S}^i(I^j_i|\beta^*)}=\frac{\omega(h|\beta^*)}{\omega(I^j_i|\beta^*)}$ for $h\in I^j_i$. This result implies that Definition~\ref{edne1} 
 can be equivalently rewritten as follows.
  
\begin{definition}[\bf An Equivalent Definition of NashEBS  under Conditional Expected Payoffs]\label{edne2}{\em A behavioral strategy profile $\beta^*$ is a Nash equilibrium if $\beta^*$ together with $(\tilde\beta^*,\mu^*)$ satisfies  the properties that: \\
(i) $\beta^{*i}_{I^j_i}(a')=0$ for any $i\in N$, $j\in M_i$ and $a',a''\in A(I^j_i)$ with $\omega(I^j_i|\beta^*)>0$ and $u^i(a'',\varrho^i_{I^j_i}(\beta^{*-I^j_i},\tilde\beta^*),\mu^*|I^j_i)
>u^i(a',\varrho^i_{I^j_i}(\beta^{*-I^j_i},\tilde\beta^*),\mu^*|I^j_i)$;\\
 (ii)
 $\tilde\beta^{*i}_{I^j_i}(a')=0$ for any $i\in N$, $j\in M_i$ and $a',a''\in A(I^j_i)$ with $u^i(a'',\varrho^i_{I^j_i}(\beta^{*-I^j_i},\tilde\beta^*),\mu^*|I^j_i)
>u^i(a',\varrho^i_{I^j_i}(\beta^{*-I^j_i},\tilde\beta^*),\mu^*|I^j_i)$; \\
(iii)
 $\mu^*=(\mu^{*i}_{I^j_i}(h):i\in N,j\in M_i,h\in I^j_i)$ is a solution to the system~(\ref{bsne1}). 
}\end{definition}
Comparing Definition~\ref{edne2} with Definition~\ref{edne1}, one can see that Definition~\ref{edne2} requires only the computation of $u^i(a, \varrho^i_{I^j_i}(\beta^{-I^j_i},\tilde\beta), \mu|I^j_i)$ whereas Definition~\ref{edne1} necessitates the computation of both $u^i((a, \varrho^i_{I^j_i}(\beta^{-I^j_i},\tilde\beta))\land I^j_i)$ and $u^i(a, \varrho^i_{I^j_i}(\beta^{-I^j_i},\tilde\beta), \mu|I^j_i)$.
As a corollary of Theorem~\ref{nscthm1}, we come to the following conclusion.
\begin{corollary}\label{nscco1} {\em $\beta^*$ is a Nash equilibrium in behavioral strategies if and only if there exist $(\tilde\beta^*,\mu^*,\lambda^*,\tilde\lambda^*,\zeta^*,\tilde\zeta^*)$ together with $\beta^*$
satisfying the system,
\begin{equation}\setlength{\abovedisplayskip}{1.2pt}
\setlength{\belowdisplayskip}{1.2pt}
\label{nsrne1a}
\begin{array}{l}
\omega(I^j_i|\beta)u^i(a,\varrho^i_{I^j_i}(\beta^{-I^j_i},\tilde\beta),\mu|I^j_i) +\lambda^i_{I^j_i}(a)-\zeta^i_{I^j_i}=0,\;i\in N,j\in M_i,a\in A(I^j_i),\\
u^i(a,\varrho^i_{I^j_i}(\beta^{-I^j_i},\tilde\beta),\mu|I^j_i) +\tilde\lambda^i_{I^j_i}(a)-\tilde\zeta^i_{I^j_i}=0,\;i\in N,j\in M_i,a\in A(I^j_i),\\
{\cal S}^i(I^j_i|\beta)\mu^i_{I^j_i}(h)={\cal S}^i(h|\beta),\;0\le \mu^i_{I^j_i}(h),\;i\in N,j\in M_i,h\in I^j_i,\\
\sum\limits_{h\in I^j_i}\mu^i_{I^j_i}(h)=1,\;\sum\limits_{a\in A(I^j_i)}\beta^i_{I^j_i}(a)=1,\;\sum\limits_{a\in A(I^j_i)}\tilde\beta^i_{I^j_i}(a)=1,\;i\in N,j\in M_i,\\
\beta^i_{I^j_i}(a)\lambda^i_{I^j_i}(a)=0,\;\tilde\beta^i_{I^j_i}(a)\tilde\lambda^i_{I^j_i}(a)=0,\\
0\le\beta^i_{I^j_i}(a),\;0\le\lambda^i_{I^j_i}(a),\;0\le \tilde\beta^i_{I^j_i}(a),\;0\le\tilde\lambda^i_{I^j_i}(a),\;i\in N,j\in M_i,a\in A(I^j_i).
\end{array}
\end{equation}
}
\end{corollary}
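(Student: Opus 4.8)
The plan is to obtain Corollary~\ref{nscco1} directly from Theorem~\ref{nscthm1} by proving that the systems~(\ref{nsrne1}) and~(\ref{nsrne1a}) have exactly the same solution set. These two systems are literally identical in every group of equations except the first, so the whole argument collapses to establishing the pointwise identity
\[
u^i((a,\varrho^i_{I^j_i}(\beta^{-I^j_i},\tilde\beta))\land I^j_i)=\omega(I^j_i|\beta)u^i(a,\varrho^i_{I^j_i}(\beta^{-I^j_i},\tilde\beta),\mu|I^j_i)
\]
for all $i\in N$, $j\in M_i$, $a\in A(I^j_i)$, valid for \emph{any} triple $(\beta,\tilde\beta,\mu)$ whose belief component satisfies the self-independent belief constraints, i.e.\ the third and fourth groups of~(\ref{nsrne1a}) (equivalently system~(\ref{bsne1})). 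Granting this identity, substituting it into the first group of~(\ref{nsrne1}) converts those equations verbatim into the first group of~(\ref{nsrne1a}) while leaving $\lambda^*$ and $\zeta^*$ untouched; hence a tuple $(\beta^*,\tilde\beta^*,\mu^*,\lambda^*,\tilde\lambda^*,\zeta^*,\tilde\zeta^*)$ solves~(\ref{nsrne1}) if and only if it solves~(\ref{nsrne1a}), and the conclusion is immediate from Theorem~\ref{nscthm1}.

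To prove the identity I would split on the sign of $\omega(I^j_i|\beta)$. When $\omega(I^j_i|\beta)=0$ both sides vanish: the right-hand side carries the factor $\omega(I^j_i|\beta)$, while the left-hand side is zero because $\omega(\hat h|\beta)=0$ for every $\hat h\in I^j_i$ forces $\omega(h|a,\varrho^i_{I^j_i}(\beta^{-I^j_i},\tilde\beta))=0$ for every terminal $h$ with $a\in h$ passing through $I^j_i$ (the prefix reaching $\hat h$ uses only the pre-$I^j_i$ strategies of $\beta$, which the $\varrho$-substitution at downstream information sets leaves unchanged). When $\omega(I^j_i|\beta)>0$ I would compare the defining products~(\ref{omega1A}) and~(\ref{cps1b}) term by term: writing each terminal history as $h=\langle a_1,\dots,a_K\rangle$ with $a\in h$ through its subhistory $\hat h=\langle a_1,\dots,a_L\rangle\in I^j_i$, one gets $\omega(h|a,\varrho^i_{I^j_i}(\beta^{-I^j_i},\tilde\beta))=\omega(\hat h|\beta)\prod_{k=L+1}^{K-1}(\cdots)$ and $\nu^i_{I^j_i}(h|a,\varrho^i_{I^j_i}(\beta^{-I^j_i},\tilde\beta),\mu)=\mu^i_{I^j_i}(\hat h)\prod_{k=L+1}^{K-1}(\cdots)$ with identical tail products. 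It then remains only to verify $\omega(\hat h|\beta)=\omega(I^j_i|\beta)\mu^i_{I^j_i}(\hat h)$, i.e.\ $\mu^i_{I^j_i}(\hat h)=\omega(\hat h|\beta)/\omega(I^j_i|\beta)$.

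The substantive step is matching the self-independent belief $\mu^i_{I^j_i}(\hat h)={\cal S}^i(\hat h|\beta)/{\cal S}^i(I^j_i|\beta)$ (well defined since $\omega(I^j_i|\beta)>0$ yields ${\cal S}^i(I^j_i|\beta)>0$) with the Bayesian ratio $\omega(\hat h|\beta)/\omega(I^j_i|\beta)$, and here perfect recall is decisive. Using the decomposition $\omega(\hat h|\beta)={\cal S}^i(\hat h|\beta)\prod_{P(\cdot)=i}\beta^i_{\cdot}(\cdot)$, all $\hat h\in I^j_i$ share the same record $X_i$, so the own-action factor $\prod_{P(\cdot)=i}\beta^i_{\cdot}(\cdot)$ is a single constant $c$ over the information set; thus $\omega(\hat h|\beta)=c\,{\cal S}^i(\hat h|\beta)$, summation gives $\omega(I^j_i|\beta)=c\,{\cal S}^i(I^j_i|\beta)$, and the two ratios coincide. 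This identification is exactly what the running text asserts just before Definition~\ref{edne2}, so I expect no genuine obstacle: the only extra care beyond that text is making the $\omega(I^j_i|\beta)=0$ case and the $\varrho$-substitution bookkeeping explicit.

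As an alternative I note that Corollary~\ref{nscco1} can instead be proved from scratch by mirroring the proof of Theorem~\ref{nscthm1} with Definition~\ref{edne1} replaced throughout by its reformulation Definition~\ref{edne2}: set $\zeta^{*i}_{I^j_i}=\max_{a\in A(I^j_i)}\omega(I^j_i|\beta^*)u^i(a,\varrho^i_{I^j_i}(\beta^{*-I^j_i},\tilde\beta^*),\mu^*|I^j_i)$, define $\lambda^{*i}_{I^j_i}(a)$ as the corresponding slack, and read off the complementarity relations $\beta^{*i}_{I^j_i}(a)\lambda^{*i}_{I^j_i}(a)=0$ just as before. I would nonetheless present the equivalence-of-systems argument as the primary proof, since it is shorter and makes transparent that~(\ref{nsrne1a}) is merely a rewriting of~(\ref{nsrne1}) under the belief constraints.
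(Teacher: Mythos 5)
Your proposal is correct and follows essentially the same route as the paper, which states the corollary as an immediate consequence of Theorem~\ref{nscthm1} via the identity $u^i((a,\varrho^i_{I^j_i}(\beta^{-I^j_i},\tilde\beta))\land I^j_i)=\omega(I^j_i|\beta)\,u^i(a,\varrho^i_{I^j_i}(\beta^{-I^j_i},\tilde\beta),\mu|I^j_i)$ noted in the text just before Definition~\ref{edne2}. Your only addition is to make explicit the $\omega(I^j_i|\beta)=0$ case and the perfect-recall argument identifying ${\cal S}^i(h|\beta)/{\cal S}^i(I^j_i|\beta)$ with $\omega(h|\beta)/\omega(I^j_i|\beta)$, both of which the paper leaves implicit and both of which you handle correctly.
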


To further boost the applications of NashEBS, we will utilize Theorem~\ref{nscthm1} or Corollary~\ref{nscco1} to develop differentiable path-following methods to compute such an equilibrium in Section~\ref{dpm}. 

\section{\large A Characterization of Subgame Perfect Equilibrium in Behavioral Strategies}

As a strict refinement of Nash equilibrium in behavioral strategies, Selten~\cite{Selten (1965)} introduced the notion of subgame perfect equilibrium in behavioral strategies.
\begin{definition}[{\bf Subgame Perfect Equilibrium in Behavioral Strategies (SGPEBS)}, Selten~\cite{Selten (1965)}]\label{sgped1} {\em A behavioral strategy profile $\beta^*$ is 
a subgame perfect equilibrium of an extensive-form game if its restriction on every subgame remains to be a Nash equilibrium of the subgame. 
}\end{definition}
From Definition~\ref{sgped1}, a characterization of subgame perfect equilibrium, based on the principles of local sequential rationality and self-independent beliefs, can be obtained by restricting our characterization of Nash equilibrium to every subgame.
For $h=\langle a_1,\ldots,a_L\rangle\in H$, let 
\begin{equation}\setlength{\abovedisplayskip}{1.2pt}
\setlength{\belowdisplayskip}{1.2pt}{\cal C}^i_{I^j_i}(h|\beta)=\prod\limits_{k=g(h)}^{L-1}
\beta^{P(\langle a_1,\ldots,a_k\rangle)}_{\langle a_1,\ldots,a_k\rangle}(a_{k+1}),\end{equation}
where $g(h)$ is the smallest index such that the subhistory $\hat h=\langle a_{g(h)+1},\ldots,a_L\rangle$ of $h$ entirely belongs to the smallest subgame containing $A(I^j_i)$.  For $i\in N$, $j\in M_i$, $h=\langle a_1,\ldots,a_L\rangle\in H$, and $a\in A(I^j_i)$,  we have \({\cal C}^i_{I^j_i}(h|a,\beta^{-I^j_i})=\mathop{\prod\limits_{k=g(h)}^{L-1}}\limits_{\langle a_1,\ldots,a_k\rangle\notin I^j_i}\beta^{P(\langle a_1,\ldots,a_k\rangle)}_{\langle a_1,\ldots,a_k\rangle}(a_{k+1}).\) 
For $i\in N$ and $j\in M_i$,
let {\small\begin{equation}\label{sgispayoffs} u^i(\beta\Diamond I^j_i)=\sum\limits^
{h\cap A(I^j_i)\ne\emptyset}_{h\in Z}u^i(h){\cal C}^i_{I^j_i}(h|\beta)\text{
and  }
u^i((a,\beta^{-I^j_i})\Diamond I^j_i)=\sum\limits_{a\in h\in Z}u^i(h){\cal C}^i_{I^j_i}(h|a,\beta^{-I^j_i}).\end{equation}}

\noindent For $h=\langle a_1,\ldots,a_L\rangle\in H$, let {\small
\begin{equation}\setlength{\abovedisplayskip}{1.2pt}
\setlength{\belowdisplayskip}{1.2pt}
\label{edsgpeeq1}{\cal Y}^i(h|\beta)=\mathop{\prod\limits_{k=q(h)}^{L-1}}\limits_{P(\langle a_1,\ldots,a_k\rangle)\ne i}
\beta^{P(\langle a_1,\ldots,a_k\rangle)}_{\langle a_1,\ldots,a_k\rangle}(a_{k+1}),\end{equation}} 

\noindent where $q(h)$ is the smallest index such that the subhistory $\hat h=\langle a_{q(h)+1},\ldots,a_L\rangle$ of $h$ entirely belongs to the smallest subgame containing $a_L$. For $i\in N$ and $j\in M_i$, we have ${\cal Y}^i(I^j_i|\beta)=\sum\limits_{h\in I^j_i}{\cal Y}^i(h|\beta)$. Clearly, ${\cal S}^i(h|\beta)={\cal Y}^i(h|\beta)$ if $\Gamma$ contains no subgame. 
\begin{definition}[\bf A Characterization of Subgame Perfect Equilibrium]\label{edsgpe1}{\em A behavioral strategy profile $\beta^*$ is a subgame perfect equilibrium if $\beta^*$ satisfies together with $(\tilde\beta^*,\mu^*)$ the properties: \newline
\noindent (i) $\beta^{*i}_{I^j_i}(a')=0$ for any $i\in N$, $j\in M_i$ and $a',a''\in A(I^j_i)$ with
$u^i((a'',\varrho^i_{I^j_i}(\beta^{*-I^j_i},\tilde\beta^*))\Diamond I^j_i)>u^i((a',\varrho^i_{I^j_i}(\beta^{*-I^j_i},\tilde\beta^*))\Diamond I^j_i)$;\newline 
\noindent (ii) $\tilde\beta^{*i}_{I^j_i}(a')=0$ for any $i\in N$, $j\in M_i$ and $a',a''\in A(I^j_i)$ with
$u^i(a'',\varrho^i_{I^j_i}(\beta^{*-I^j_i},\tilde\beta^*),\mu^*|I^j_i)> u^i(a',\varrho^i_{I^j_i}(\beta^{*-I^j_i},\tilde\beta^*),\mu^*|I^j_i)$; and\newline 
\noindent (iii) $\mu^*=(\mu^{*i}_{I^j_i}(h):i\in N,j\in M_i,h\in I^j_i)$ is a solution to the system,  
{\small\begin{equation}\setlength{\abovedisplayskip}{1.2pt}
\setlength{\belowdisplayskip}{1.2pt}
\label{sgbseq1}\begin{array}{l}
{\cal Y}^i(I^j_i|\beta^*)\mu^i_{I^j_i}(h)={\cal Y}^i(h|\beta^*),\;i\in N,j\in M_i,h\in I^j_i,\\
\sum\limits_{h'\in I^j_i}\mu^i_{I^j_i}(h')=1,\;0\le \mu^i_{I^j_i}(h),\;i\in N,j\in M_i,h\in I^j_i.
\end{array}\end{equation}}
}\end{definition}
We will illustrate with one example how one can employ Definition~\ref{edsgpe1} to analytically determine all subgame perfect equilibria for small extensive-form games in Section~\ref{examples}.  

One can prove in a similar way to Theorem~\ref{ednethm1} the following conclusion.
\begin{theorem}{\em Definition~\ref{edsgpe1} and Definition~\ref{sgped1} of subgame perfect equilibrium are equivalent.}
\end{theorem}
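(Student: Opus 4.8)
The plan is to reduce the statement to Theorem~\ref{ednethm1} by applying that result subgame by subgame. By Definition~\ref{sgped1}, $\beta^*$ is a subgame perfect equilibrium precisely when its restriction to every subgame $\Gamma'$ of $\Gamma$ is a NashEBS of $\Gamma'$. Hence it suffices to show that conditions (i)--(iii) of Definition~\ref{edsgpe1}, when read off at the information sets lying inside a fixed subgame $\Gamma'$, coincide with conditions (i)--(iii) of Definition~\ref{edne1} applied to the game $\Gamma'$ with $\beta^*$ restricted to $\Gamma'$. The triple $(\beta^*,\tilde\beta^*,\mu^*)$ furnished by Definition~\ref{edsgpe1} is then reused, subgame by subgame, as the witnessing triple in Definition~\ref{edne1}.

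First I would establish the dictionary between the subgame-adapted quantities and the original ones. Fix an information set $I^j_i$ and let $\Gamma'$ be the smallest subgame containing $A(I^j_i)$. The index $g(h)$ in the definition of ${\cal C}^i_{I^j_i}(h|\beta)$ is chosen so that the displayed product runs exactly over the moves of $h$ lying within $\Gamma'$; consequently ${\cal C}^i_{I^j_i}(h|\beta)$ equals the reaching probability $\omega(h|\beta)$ of Eq.~(\ref{omega1}) recomputed with $\Gamma'$ as the ambient game, and $u^i(\beta\Diamond I^j_i)$ equals the quantity $u^i(\beta\land I^j_i)$ of Eq.~(\ref{ispayoffs}) evaluated in $\Gamma'$. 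Likewise the truncation index $q(h)$ makes ${\cal Y}^i(h|\beta)$ the self-independent reaching probability ${\cal S}^i(h|\beta)$ of Eq.~(\ref{sipf1}) recomputed in $\Gamma'$, so that the belief system~(\ref{sgbseq1}) is exactly the self-independent belief system~(\ref{bsne1}) for $\Gamma'$. With this identification, condition (i) of Definition~\ref{edsgpe1} becomes the local rationality requirement of Definition~\ref{edne1} for $\Gamma'$, condition (ii) becomes the extra-profile local rationality requirement for $\Gamma'$, and condition (iii) becomes the self-independent belief requirement for $\Gamma'$.

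Next I would invoke Theorem~\ref{ednethm1} once for each subgame $\Gamma'$. Since every information set of $\Gamma$ belongs to a unique smallest subgame, the full system imposed by Definition~\ref{edsgpe1} decomposes as the disjoint union, over all subgames $\Gamma'$, of the systems characterizing a NashEBS of $\Gamma'$ in the sense of Definition~\ref{edne1}. By Theorem~\ref{ednethm1}, the existence of $(\tilde\beta^*,\mu^*)$ meeting these conditions for $\Gamma'$ is equivalent to the restriction of $\beta^*$ to $\Gamma'$ being a NashEBS of $\Gamma'$. Quantifying over all subgames and appealing to Definition~\ref{sgped1} then yields the equivalence of Definition~\ref{edsgpe1} and Definition~\ref{sgped1}. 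The backward-induction argument inside the proof of Theorem~\ref{ednethm1} is carried out within each $\Gamma'$, using the analogue of Lemma~\ref{ednelm1} for ${\cal Y}$ and ${\cal C}$, which holds because perfect recall is inherited by every subgame.

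The main obstacle will be the bookkeeping in the dictionary step: one must verify that $g(h)$ and $q(h)$ select the correct root of the relevant subgame for every terminal history $h$, so that the products defining ${\cal C}$ and ${\cal Y}$ agree term-for-term with $\omega$ and ${\cal S}$ computed in $\Gamma'$, and that the information sets intersecting $A(I^j_i)$ in $\Gamma$ are precisely those of $\Gamma'$. Extra care is needed when subgames are nested, since a single information set lies in several subgames but the conditions must be imposed relative to the smallest one; here the self-independent structure is essential, because it guarantees that beliefs formed inside a subgame do not depend on play outside it, which is exactly what makes the per-subgame decomposition mutually consistent.
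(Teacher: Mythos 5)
The paper does not actually write out a proof of this theorem; it only remarks that one argues "in a similar way to Theorem~\ref{ednethm1}," and the surrounding text states that Definition~\ref{edsgpe1} is obtained by restricting the NashEBS characterization to every subgame. Your plan---a dictionary identifying ${\cal C}^i_{I^j_i}$ and ${\cal Y}^i$ with $\omega$ and ${\cal S}^i$ recomputed in the relevant subgame, followed by an invocation of Theorem~\ref{ednethm1} subgame by subgame---is the natural way to make that remark precise, and the dictionary itself is sound.

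The step that fails as literally written is the claim that the system of Definition~\ref{edsgpe1} "decomposes as the disjoint union, over all subgames $\Gamma'$, of the systems characterizing a NashEBS of $\Gamma'$." When subgames are nested, Definition~\ref{edne1} applied to a larger subgame $\Gamma''$ imposes conditions at \emph{every} information set of $\Gamma''$, including those lying inside a proper subgame $\Gamma'\subsetneq\Gamma''$, and those conditions use payoffs computed from the root of $\Gamma''$, not the $\Diamond$-payoffs of Definition~\ref{edsgpe1}, which are always taken relative to the smallest containing subgame. So the union over subgames of the Definition~\ref{edne1} systems is strictly larger than the Definition~\ref{edsgpe1} system and the two do not match condition by condition. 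The repair is the observation that for $I^j_i$ with smallest containing subgame $\Gamma'$ and any subgame $\Gamma''\supseteq\Gamma'$, the $\land$-payoff of each $a\in A(I^j_i)$ computed in $\Gamma''$ equals $P$ times the $\Diamond$-payoff, where $P\ge 0$ is the probability of reaching the root of $\Gamma'$ from the root of $\Gamma''$ and is common to all $a$ (similarly ${\cal S}^i$ computed in $\Gamma''$ equals a common factor times ${\cal Y}^i$); hence a strict inequality in $\Gamma''$ forces $P>0$ and the same strict inequality for the $\Diamond$-payoffs, so the conditions of Definition~\ref{edsgpe1} imply those of Definition~\ref{edne1} in every subgame, while the converse direction needs only the smallest subgames, where the witnesses are built by the backward-induction construction~(\ref{ednelp1}) from the ($\Leftarrow$) part of Theorem~\ref{ednethm1} (this glues across subgames because $u^i(a,\varrho^i_{I^j_i}(\beta^{*-I^j_i},\tilde\beta^*),\mu^*|I^j_i)$ is computed from $I^j_i$ onward and is insensitive to the ambient subgame). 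Your appeal to self-independence of beliefs addresses condition (iii) but not this payoff-scaling issue for condition (i), which is the piece still missing; you should also note that $q(h)$ in ${\cal Y}^i$ is defined via the smallest subgame containing $a_L$ rather than the one containing $A(I^j_i)$, which coincide except possibly when a singleton information set is itself the root of a subgame.
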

As a direct result of Definition~\ref{edsgpe1}, we acquire a polynomial system as
a necessary and sufficient condition for a subgame perfect equilibrium, which can be leveraged in the development of a general method for computing such an  equilibrium.
\begin{theorem}\label{sgpethm1}
{\em $\beta^*$ is a subgame perfect equilibrium if and only if there exists $(\tilde\beta^*,\mu^*,\lambda^*,\tilde\lambda^*,\zeta^*,\tilde\zeta^*)$ together with $\beta^*$ satisfying the system, {\small\begin{equation}\setlength{\abovedisplayskip}{1.2pt}
\setlength{\belowdisplayskip}{1.2pt}
\label{nscsgpe1}
\begin{array}{l}
u^i((a,\varrho^i_{I^j_i}(\beta^{-I^j_i},\tilde\beta))\Diamond I^j_i) +\lambda^i_{I^j_i}(a)-\zeta^i_{I^j_i}=0,\;i\in N,j\in M_i,a\in A(I^j_i),\\
u^i(a,\varrho^i_{I^j_i}(\beta^{-I^j_i},\tilde\beta),\mu|I^j_i) +\tilde\lambda^i_{I^j_i}(a)-\tilde\zeta^i_{I^j_i}=0,\;i\in N,j\in M_i,a\in A(I^j_i),\\
{\cal Y}^i(I^j_i|\beta)\mu^i_{I^j_i}(h)={\cal Y}^i(h|\beta),\;0\le \mu^i_{I^j_i}(h),\;i\in N,j\in M_i,h\in I^j_i,\\
\sum\limits_{h\in I^j_i}\mu^i_{I^j_i}(h)=1,\;\sum\limits_{a\in A(I^j_i)}\beta^i_{I^j_i}(a)=1,\;\sum\limits_{a\in A(I^j_i)}\tilde\beta^i_{I^j_i}(a)=1,\;i\in N,j\in M_i,\\
\beta^i_{I^j_i}(a)\lambda^i_{I^j_i}(a)=0,\;\tilde\beta^i_{I^j_i}(a)\tilde\lambda^i_{I^j_i}(a)=0,\\
0\le \beta^i_{I^j_i}(a),\;0\le \lambda^i_{I^j_i}(a),\;0\le \tilde\beta^i_{I^j_i}(a),\;0\le\tilde\lambda^i_{I^j_i}(a),\;i\in N,j\in M_i,a\in A(I^j_i).
\end{array}
\end{equation}}}
\end{theorem}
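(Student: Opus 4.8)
The plan is to follow the same two-sided, KKT-style argument used in the proof of Theorem~\ref{nscthm1}, with Definition~\ref{edsgpe1} taking the place of Definition~\ref{edne1}, the subgame-restricted payoffs $u^i((a,\varrho^i_{I^j_i}(\beta^{-I^j_i},\tilde\beta))\Diamond I^j_i)$ replacing $u^i((a,\varrho^i_{I^j_i}(\beta^{-I^j_i},\tilde\beta))\land I^j_i)$, and the belief system~(\ref{sgbseq1}) built from ${\cal Y}^i$ replacing the system~(\ref{bsne1}) built from ${\cal S}^i$. The system~(\ref{nscsgpe1}) is simply the stationarity together with complementary slackness for the two families of linear optimization problems on the simplices $\triangle^i_{I^j_i}$ that define local rationality in Definition~\ref{edsgpe1}, so the whole proof reduces to translating the support conditions of Definition~\ref{edsgpe1} into complementarity and back.

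For necessity ($\Rightarrow$), I would start from a triple $(\beta^*,\tilde\beta^*,\mu^*)$ satisfying Definition~\ref{edsgpe1} and define the multipliers explicitly by
\[
\zeta^{*i}_{I^j_i}=\max_{a\in A(I^j_i)}u^i((a,\varrho^i_{I^j_i}(\beta^{*-I^j_i},\tilde\beta^*))\Diamond I^j_i),\quad
\tilde\zeta^{*i}_{I^j_i}=\max_{a\in A(I^j_i)}u^i(a,\varrho^i_{I^j_i}(\beta^{*-I^j_i},\tilde\beta^*),\mu^*|I^j_i),
\]
together with $\lambda^{*i}_{I^j_i}(a)=\zeta^{*i}_{I^j_i}-u^i((a,\varrho^i_{I^j_i}(\beta^{*-I^j_i},\tilde\beta^*))\Diamond I^j_i)$ and the analogous $\tilde\lambda^{*i}_{I^j_i}(a)$. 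By construction $\lambda^*,\tilde\lambda^*\ge 0$ and the two stationarity equations in~(\ref{nscsgpe1}) hold with equality, while the third and fourth lines hold because $\mu^*$ solves~(\ref{sgbseq1}). It then remains only to verify the complementary slackness $\beta^{*i}_{I^j_i}(a)\lambda^{*i}_{I^j_i}(a)=0$ and $\tilde\beta^{*i}_{I^j_i}(a)\tilde\lambda^{*i}_{I^j_i}(a)=0$: whenever $\lambda^{*i}_{I^j_i}(a')>0$ there is an action $a''$ strictly better in the $\Diamond$-payoff, so part (i) of Definition~\ref{edsgpe1} forces $\beta^{*i}_{I^j_i}(a')=0$, and the identical reasoning with part (ii) disposes of $\tilde\beta^*$.

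For sufficiency ($\Leftarrow$), I would take any solution of~(\ref{nscsgpe1}) and recover the support conditions of Definition~\ref{edsgpe1}. Multiplying the first stationarity equation by $\beta^{*i}_{I^j_i}(a)$ and the second by $\tilde\beta^{*i}_{I^j_i}(a)$, summing over $a\in A(I^j_i)$, and invoking the complementarity equations and the simplex constraints yields
\[
\zeta^{*i}_{I^j_i}=\sum_{a\in A(I^j_i)}\beta^{*i}_{I^j_i}(a)\,u^i((a,\varrho^i_{I^j_i}(\beta^{*-I^j_i},\tilde\beta^*))\Diamond I^j_i)
\]
and the corresponding identity for $\tilde\zeta^{*i}_{I^j_i}$. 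Since $\lambda^*\ge 0$ and $\tilde\lambda^*\ge 0$, these values coincide with the two maxima, so any action with $\lambda^{*i}_{I^j_i}(a')>0$ (respectively $\tilde\lambda^{*i}_{I^j_i}(a')>0$) is strictly suboptimal and, by complementarity, carries zero probability under $\beta^*$ (respectively $\tilde\beta^*$); that is exactly parts (i) and (ii) of Definition~\ref{edsgpe1}, and part (iii) is the third and fourth lines of~(\ref{nscsgpe1}). Hence $\beta^*$ is a subgame perfect equilibrium.

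I expect no substantive obstacle beyond bookkeeping, since the argument is a verbatim adaptation of the proof of Theorem~\ref{nscthm1}. The one point demanding care is confirming that the subgame-restricted belief system~(\ref{sgbseq1}), built from ${\cal Y}^i$ rather than ${\cal S}^i$, is the correct companion to the $\Diamond$-payoffs; this is guaranteed by the fact that Definition~\ref{edsgpe1} is obtained by restricting the Nash characterization to every subgame, so the explicit multiplier construction and the equivalence between the support conditions and complementary slackness transfer unchanged.
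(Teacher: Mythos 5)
Your proposal is correct and matches the paper's intent exactly: the paper gives no separate argument for this theorem, stating only that "the proof of Theorem~\ref{sgpethm1} is essentially the same as that of Theorem~\ref{nscthm1}," and your explicit multiplier construction and complementarity argument is precisely that adaptation, with the $\Diamond$-payoffs and the ${\cal Y}^i$-based system~(\ref{sgbseq1}) substituted in the right places.
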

The proof of Theorem~\ref{sgpethm1} is essentially the same as that of Theorem~\ref{nscthm1}. 
Let ${\cal C}^i_{I^j_i}(I^j_i|\beta)=\sum\limits_{h\in I^j_i}{\cal C}^i_{I^j_i}(h|\beta)$. 
When ${\cal C}^i_{I^j_i}(I^j_i|\beta^*)>0$, we have ${\cal Y}^i(I^j_i|\beta^*)>0$ and consequently, $u^i((a,\varrho^i_{I^j_i}(\beta^{*-I^j_i},\tilde\beta^*))\Diamond I^j_i)={\cal C}^i_{I^j_i}(I^j_i|\beta^*) u^i(a,\varrho^i_{I^j_i}(\beta^{*-I^j_i},\tilde\beta^*),\mu^*|I^j_i)$ since $\mu^{*i}_{I^j_i}(h)=\frac{{\cal Y}^i(h|\beta^*)}{{\cal Y}^i(I^j_i|\beta^*)}=\frac{{\cal C}^i_{I^j_i}(h|\beta^*)}{{\cal C}^i_{I^j_i}(I^j_i|\beta^*)}$. Therefore, Definition~\ref{edsgpe1} can be equivalently rewritten as follows.
\begin{definition}[\bf A Characterization of Subgame Perfect Equilibrium under Conditional Expected Payoffs]\label{edsgpe2}{\em A behavioral strategy profile $\beta^*$ is a subgame perfect equilibrium if $\beta^*$ together with $(\tilde\beta^*,\mu^*)$ satisfies the properties: \\ (i) $\beta^{*i}_{I^j_i}(a')=0$ for any $i\in N$, $j\in M_i$ and $a',a''\in A(I^j_i)$ with ${\cal C}^i_{I^j_i}(I^j_i|\beta^*)>0$ and
$u^i(a'',\varrho^i_{I^j_i}(\beta^{*-I^j_i},\tilde\beta^*),\mu^*|I^j_i)> u^i(a',\varrho^i_{I^j_i}(\beta^{*-I^j_i},\tilde\beta^*),\mu^*|I^j_i)$;\\
 (ii) $\tilde\beta^{*i}_{I^j_i}(a')=0$ for any $i\in N$, $j\in M_i$ and $a',a''\in A(I^j_i)$ with
$u^i(a'',\varrho^i_{I^j_i}(\beta^{*-I^j_i},\tilde\beta^*),\mu^*|I^j_i)> u^i(a',\varrho^i_{I^j_i}(\beta^{*-I^j_i},\tilde\beta^*),\mu^*|I^j_i)$; and \\
(iii)
 $\mu^*=(\mu^{*i}_{I^j_i}(h):i\in N,j\in M_i,h\in I^j_i)$ is a solution to the system~(\ref{sgbseq1}).}
\end{definition}
Comparing Definition~\ref{edsgpe2} with Definition~\ref{edsgpe1}, one can see that Definition~\ref{edsgpe2} requires only the computation of $u^i(a, \varrho^i_{I^j_i}(\beta^{-I^j_i},\tilde\beta), \mu|I^j_i)$ whereas Definition~\ref{edsgpe1} requires the computation of both $u^i((a, \varrho^i_{I^j_i}(\beta^{-I^j_i},\tilde\beta))\Diamond I^j_i)$ and $u^i(a, \varrho^i_{I^j_i}(\beta^{-I^j_i},\tilde\beta), \mu|I^j_i)$.
As a corollary of Theorem~\ref{sgpethm1}, we come to the following conclusion.
\begin{corollary}\label{sgpeco1}
{\em $\beta^*$ is a subgame perfect equilibrium if and only if there exists $(\tilde\beta^*,\mu^*,\lambda^*$, $\tilde\lambda^*,\zeta^*,\tilde\zeta^*)$ together with $\beta^*$ satisfying the system, {\small \begin{equation}\setlength{\abovedisplayskip}{1.2pt}
\setlength{\belowdisplayskip}{1.2pt}
\label{nscsgpe2}
\begin{array}{l}
{\cal C}^i_{I^j_i}(I^j_i|\beta)u^i(a,\varrho^i_{I^j_i}(\beta^{-I^j_i},\tilde\beta),\mu|I^j_i) +\lambda^i_{I^j_i}(a)-\zeta^i_{I^j_i}=0,\;i\in N,j\in M_i,a\in A(I^j_i),\\
u^i(a,\varrho^i_{I^j_i}(\beta^{-I^j_i},\tilde\beta),\mu|I^j_i) +\tilde\lambda^i_{I^j_i}(a)-\tilde\zeta^i_{I^j_i}=0,\;i\in N,j\in M_i,a\in A(I^j_i),\\
{\cal Y}^i(I^j_i|\beta)\mu^i_{I^j_i}(h)={\cal Y}^i(h|\beta),\;0\le \mu^i_{I^j_i}(h),\;i\in N,j\in M_i,h\in I^j_i,\\
\sum\limits_{h\in I^j_i}\mu^i_{I^j_i}(h)=1,\;\sum\limits_{a\in A(I^j_i)}\beta^i_{I^j_i}(a)=1,\;\sum\limits_{a\in A(I^j_i)}\tilde\beta^i_{I^j_i}(a)=1,\;i\in N,j\in M_i,\\
\beta^i_{I^j_i}(a)\lambda^i_{I^j_i}(a)=0,\;\tilde\beta^i_{I^j_i}(a)\tilde\lambda^i_{I^j_i}(a)=0,\\
0\le \beta^i_{I^j_i}(a),\;0\le \lambda^i_{I^j_i}(a),\;0\le \tilde\beta^i_{I^j_i}(a),\;0\le\tilde\lambda^i_{I^j_i}(a),\;i\in N,j\in M_i,a\in A(I^j_i).
\end{array}
\end{equation}}}
\end{corollary}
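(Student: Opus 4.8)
The plan is to prove Corollary~\ref{sgpeco1} by the same complementary-slackness construction used for Theorem~\ref{sgpethm1}, but anchored to Definition~\ref{edsgpe2} rather than to Definition~\ref{edsgpe1}. Since Definition~\ref{edsgpe2} is an exact rewriting of Definition~\ref{edsgpe1} via the identity $u^i((a,\varrho^i_{I^j_i}(\beta^{-I^j_i},\tilde\beta))\Diamond I^j_i)={\cal C}^i_{I^j_i}(I^j_i|\beta)u^i(a,\varrho^i_{I^j_i}(\beta^{-I^j_i},\tilde\beta),\mu|I^j_i)$ valid whenever ${\cal C}^i_{I^j_i}(I^j_i|\beta)>0$, and since Definition~\ref{edsgpe1} characterizes subgame perfection, it suffices to show that system~(\ref{nscsgpe2}) is equivalent to Definition~\ref{edsgpe2}. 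The second group of equations, the belief equations, and the simplex and complementarity constraints of~(\ref{nscsgpe2}) are literally those of~(\ref{nscsgpe1}); only the first group is altered, the $\Diamond$-payoff being replaced by ${\cal C}^i_{I^j_i}(I^j_i|\beta)$ times the conditional payoff.

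For the ($\Rightarrow$) direction, starting from a triple $(\beta^*,\tilde\beta^*,\mu^*)$ meeting Definition~\ref{edsgpe2}, I would reuse verbatim the slack variables $\tilde\zeta^*,\tilde\lambda^*$ of Theorem~\ref{sgpethm1} for the unchanged second group, and for the first group set $\zeta^{*i}_{I^j_i}={\cal C}^i_{I^j_i}(I^j_i|\beta^*)\max_{a\in A(I^j_i)}u^i(a,\varrho^i_{I^j_i}(\beta^{*-I^j_i},\tilde\beta^*),\mu^*|I^j_i)$ together with the nonnegative gaps $\lambda^{*i}_{I^j_i}(a)=\zeta^{*i}_{I^j_i}-{\cal C}^i_{I^j_i}(I^j_i|\beta^*)u^i(a,\varrho^i_{I^j_i}(\beta^{*-I^j_i},\tilde\beta^*),\mu^*|I^j_i)$. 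The two equation groups and all sign constraints then hold by construction, and the belief equations hold because $\mu^*$ solves~(\ref{sgbseq1}). For the complementarity $\beta^{*i}_{I^j_i}(a)\lambda^{*i}_{I^j_i}(a)=0$: if $\lambda^{*i}_{I^j_i}(a')>0$ then necessarily ${\cal C}^i_{I^j_i}(I^j_i|\beta^*)>0$ and $a'$ is strictly suboptimal in conditional payoff, so property (i) of Definition~\ref{edsgpe2} forces $\beta^{*i}_{I^j_i}(a')=0$; property (ii) handles $\tilde\lambda^{*i}_{I^j_i}$ exactly as in Theorem~\ref{sgpethm1}.

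For the ($\Leftarrow$) direction, given any solution of~(\ref{nscsgpe2}) I would multiply the first group by $\beta^{*i}_{I^j_i}(a)$, sum over $a\in A(I^j_i)$, and use the complementarity equations to eliminate the multipliers, obtaining $\zeta^{*i}_{I^j_i}={\cal C}^i_{I^j_i}(I^j_i|\beta^*)\sum_{a\in A(I^j_i)}\beta^{*i}_{I^j_i}(a)u^i(a,\varrho^i_{I^j_i}(\beta^{*-I^j_i},\tilde\beta^*),\mu^*|I^j_i)$ and the analogous identity for the second group. On any $I^j_i$ with ${\cal C}^i_{I^j_i}(I^j_i|\beta^*)>0$, dividing out the positive scalar and invoking $\lambda^*\ge 0$ shows that $\zeta^{*i}_{I^j_i}/{\cal C}^i_{I^j_i}(I^j_i|\beta^*)$ is the maximal conditional payoff and that every supported action attains it, which is property (i) of Definition~\ref{edsgpe2}; the second group delivers property (ii) as before, and the belief equations coincide with~(\ref{sgbseq1}). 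Hence $\beta^*$ is a subgame perfect equilibrium.

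I expect the only genuinely new bookkeeping, relative to Theorem~\ref{sgpethm1}, to be the degenerate case ${\cal C}^i_{I^j_i}(I^j_i|\beta^*)=0$. There the first group of~(\ref{nscsgpe2}) collapses to $\lambda^{*i}_{I^j_i}(a)=\zeta^{*i}_{I^j_i}$ for every $a\in A(I^j_i)$; since the probabilities $\beta^{*i}_{I^j_i}(a)$ sum to one, complementarity forces $\lambda^{*i}_{I^j_i}(a)=\zeta^{*i}_{I^j_i}=0$ for all $a$, so the first group imposes no constraint whatsoever on $\beta^{*i}_{I^j_i}$. One must check that this is exactly consistent with property (i) of Definition~\ref{edsgpe2}, whose restriction activates only when ${\cal C}^i_{I^j_i}(I^j_i|\beta^*)>0$. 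Confirming the harmlessness of this vanishing-factor case is the single point requiring care; everything else is inherited directly from the proof of Theorem~\ref{sgpethm1}.
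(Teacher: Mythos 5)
Your proposal is correct and follows essentially the route the paper intends: the corollary is obtained from Theorem~\ref{sgpethm1} by substituting the identity $u^i((a,\varrho^i_{I^j_i}(\beta^{-I^j_i},\tilde\beta))\Diamond I^j_i)={\cal C}^i_{I^j_i}(I^j_i|\beta)\,u^i(a,\varrho^i_{I^j_i}(\beta^{-I^j_i},\tilde\beta),\mu|I^j_i)$ (valid once $\mu$ solves the belief system~(\ref{sgbseq1}) and ${\cal C}^i_{I^j_i}(I^j_i|\beta)>0$) into the first group of~(\ref{nscsgpe1}), which is exactly the rewriting the paper performs when passing from Definition~\ref{edsgpe1} to Definition~\ref{edsgpe2}. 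Your explicit check of the degenerate case ${\cal C}^i_{I^j_i}(I^j_i|\beta^*)=0$, where both first groups collapse to $\lambda^{*i}_{I^j_i}(a)=\zeta^{*i}_{I^j_i}=0$ and impose no restriction, is a detail the paper leaves implicit but is handled correctly.
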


To further boost the applications of subgame perfect equilibrium, we will utilize Theorem~\ref{sgpethm1} or Corollary~\ref{sgpeco1} to develop differentiable path-following methods to compute such an equilibrium in Section~\ref{dpm}. 

\section{Illustrative Examples\label{examples}}

This section illustrates through examples how one can employ Definition~\ref{edne1}, Definition~\ref{edne2}, Definition~\ref{edsgpe1} or Definition~\ref{edsgpe2} to analytically find all NashEBSs and SGPEBSs. 
These games are deliberately chosen to highlight the indispensable role of the extra behavioral strategy profile $\tilde\beta$ in the equivalent definitions of NashEBS  for achieving global rationality in Definition~\ref{ned1} through local sequential rationality.
 
 \begin{figure}[H]
    \centering
    \begin{minipage}{0.49\textwidth}
        \centering
        \includegraphics[width=0.8\textwidth, height=0.15\textheight]{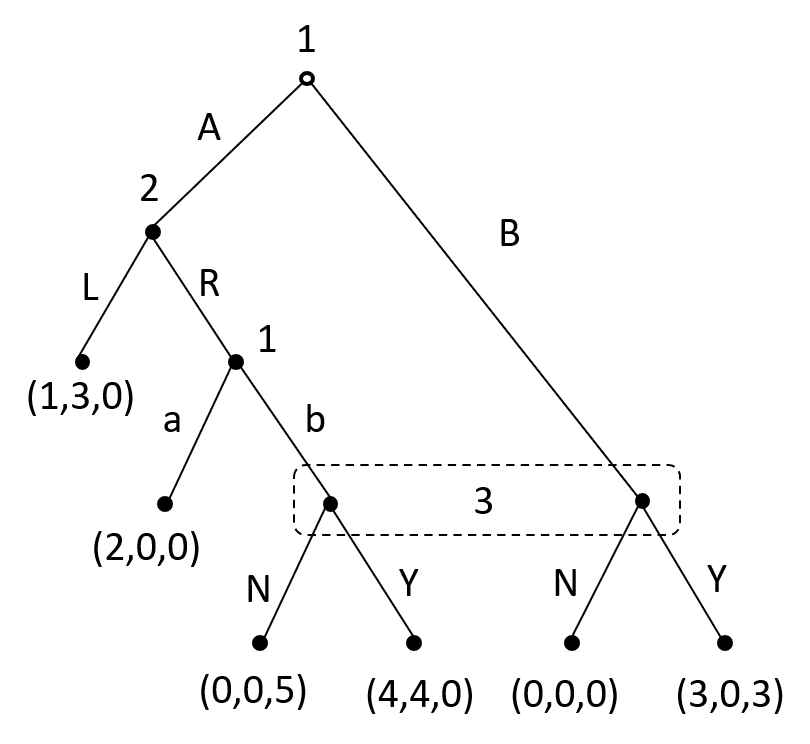}
        % first figure itself
                \caption{\label{fexm1}\scriptsize An Extensive-Form Game}
\end{minipage}\hfill
    \begin{minipage}{0.49\textwidth}
        \centering
        \includegraphics[width=0.8\textwidth, height=0.15\textheight]{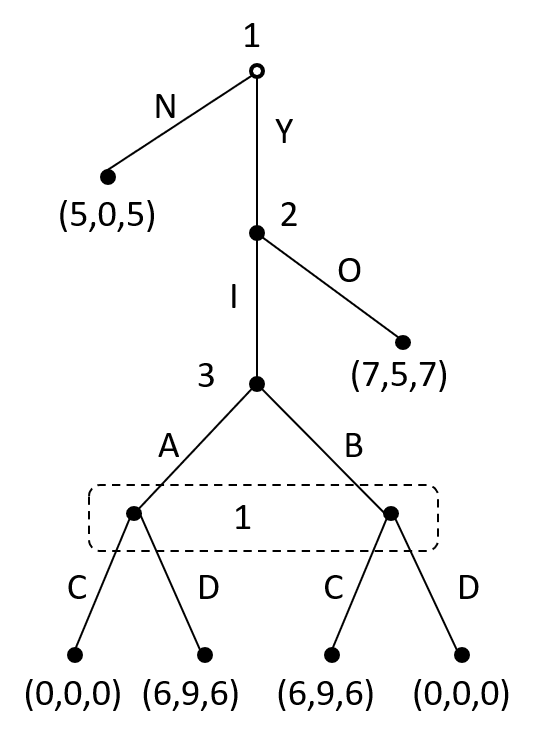}
        % first figure itself
\caption{\label{fexm2}\scriptsize  An Extensive-Form Game}\end{minipage}
 \end{figure}

 {\small
\begin{example} {\em Consider the game in Fig.~\ref{fexm1}. 
The information sets consist of $I^1_1=\{\emptyset\}$, $I^2_1=\{\langle A, R \rangle\}$, $I^1_2=\{\langle A\rangle\}$, and $I^1_3=\{\langle A, R,b
\rangle, \langle B
\rangle\}$. We denote by $(\beta,\tilde\beta,\mu)$ a triple satisfying Definition~\ref{edne1}. Each NashEBS is represented in the form $\big((\beta^1_{I^1_1}(A),\beta^1_{I^1_1}(B)), (\beta^1_{I^2_1}(a), \beta^1_{I^2_1}(b))$, $(\beta^2_{I^1_2}(L),\beta^2_{I^1_2}(R))$, $(\beta^3_{I^1_3}(N),\beta^3_{I^1_3}(Y))\big)$. The expected payoffs and conditional expected payoffs at $(\beta,\tilde\beta,\mu)$ are given by
{\footnotesize
\[\setlength{\abovedisplayskip}{1.2pt}
\setlength{\belowdisplayskip}{1.2pt}
\begin{array}{l}
u^1((A,\varrho^1_{I^1_1}(\beta^{-I^1_1},\tilde\beta))\land I^1_1)=\beta^2_{I^1_2}(L)+2\beta^2_{I^1_2}(R)\tilde\beta^1_{I^2_1}(a)+4\beta^2_{I^1_2}(R)\tilde\beta^1_{I^2_1}(b)\beta^3_{I^1_3}(Y),\\

u^1((B,\varrho^1_{I^1_1}(\beta^{-I^1_1},\tilde\beta))\land I^1_1)=3\beta^3_{I^1_3}(Y),\\

u^1((a,\varrho^1_{I^2_1}(\beta^{-I^2_1},\tilde\beta))\land I^2_1) = 2\beta^1_{I^1_1}(A)\beta^2_{I^1_2}(R),\;

u^1((b,\varrho^1_{I^2_1}(\beta^{-I^2_1},\tilde\beta))\land I^2_1) = 4\beta^1_{I^1_1}(A)\beta^2_{I^1_2}(R)\beta^3_{I^1_3}(Y),\\

u^2((L,\varrho^2_{I^1_2}(\beta^{-I^1_2},\tilde\beta))\land I^1_2)=3\beta^1_{I^1_1}(A),\;

u^2((R,\varrho^2_{I^1_2}(\beta^{-I^1_2},\tilde\beta))\land I^1_2)=4\beta^1_{I^1_1}(A)\beta^1_{I^2_1}(b)\beta^3_{I^1_3}(Y),\\

u^3((N,\varrho^3_{I^1_3}(\beta^{-I^1_3},\tilde\beta))\land I^1_3)=5\beta^1_{I^1_1}(A)\beta^2_{I^1_2}(R)\beta^1_{I^2_1}(b),\;

u^3((Y,\varrho^3_{I^1_3}(\beta^{-I^1_3},\tilde\beta))\land I^1_3)=3\beta^1_{I^1_1}(B),\\

u^1(A,\varrho^1_{I^1_1}(\beta^{-I^1_1},\tilde\beta),\mu|I^1_1)=\beta^2_{I^1_2}(L)+2\beta^2_{I^1_2}(R)\tilde\beta^1_{I^2_1}(a)+4\beta^2_{I^1_2}(R)\tilde\beta^1_{I^2_1}(b)\beta^3_{I^1_3}(Y),\\

u^1(B,\varrho^1_{I^1_1}(\beta^{-I^1_1},\tilde\beta),\mu| I^1_1)=3\beta^3_{I^1_3}(Y),\\

u^1(a,\varrho^1_{I^2_1}(\beta^{-I^2_1},\tilde\beta),\mu| I^2_1) = 2,\;

u^1(b,\varrho^1_{I^2_1}(\beta^{-I^2_1},\tilde\beta)| I^2_1) = 4\beta^3_{I^1_3}(Y),\\

u^2(L,\varrho^2_{I^1_2}(\beta^{-I^1_2},\tilde\beta),\mu| I^1_2)=3,\;

u^2(R,\varrho^2_{I^1_2}(\beta^{-I^1_2},\tilde\beta),\mu| I^1_2)=4\beta^1_{I^2_1}(b)\beta^3_{I^1_3}(Y),\\

u^3(N,\varrho^3_{I^1_3}(\beta^{-I^1_3},\tilde\beta),\mu| I^1_3)=5\mu^3_{I^1_3}(\langle A,R,b\rangle|\beta),\;

u^3(Y,\varrho^3_{I^1_3}(\beta^{-I^1_3},\tilde\beta),\mu| I^1_3)=3\mu^3_{I^1_3}(\langle B\rangle|\beta).
\end{array}\]
}

\noindent
{\bf Case (1)}. Suppose that $u^3((N,\varrho^3_{I^1_3}(\beta^{-I^1_3},\tilde\beta))\land I^1_3)>u^3((Y,\varrho^3_{I^1_3}(\beta^{-I^1_3},\tilde\beta))\land I^1_3)$. Then, $\beta^3_{I^1_3}(Y)=0$, $\beta^1_{I^1_1}(A)>0$, $\beta^2_{I^1_2}(R)>0$, and $\beta^1_{I^2_1}(b)>0$. Thus, $u^2((L,\varrho^2_{I^1_2}(\beta^{-I^1_2},\tilde\beta))\land I^1_2)>u^2((R,\varrho^2_{I^1_2}(\beta^{-I^1_2},\tilde\beta))\land I^1_2)$ and consequently,  $\beta^2_{I^1_2}(R)=0$. A contradiction occurs. The case is excluded.
\newline
{\bf Case (2)}. Suppose that $u^3((Y,\varrho^3_{I^1_3}(\beta^{-I^1_3},\tilde\beta))\land I^1_3)>u^3((N,\varrho^3_{I^1_3}(\beta^{-I^1_3},\tilde\beta))\land I^1_3)$. Then, $\beta^3_{I^1_3}(N)=0$ and  $\beta^1_{I^1_1}(B)>0$. Thus, $u^1(b,\varrho^1_{I^2_1}(\beta^{-I^2_1},\tilde\beta),\mu|I^2_1)>u^1(a,\varrho^1_{I^2_1}(\beta^{-I^2_1},\tilde\beta),\mu| I^2_1)$ and $u^1((B,\varrho^1_{I^1_1}(\beta^{-I^1_1},\tilde\beta))\land I^1_1)\ge u^1((A,\varrho^1_{I^1_1}(\beta^{-I^1_1},\tilde\beta))\land I^1_1)$. Therefore, $\tilde\beta^1_{I^2_1}(a)=0$ and $\beta^2_{I^1_2}(R)\le\frac{2}{3}$. \newline
{\bf (a)}. Assume that $\beta^2_{I^1_2}(R)<\frac{2}{3}$. Then, $\beta^1_{I^1_1}(A)=0$. The game has a class of Nash equilibria given by $(B, (1-\beta^1_{I^2_1}(b),\beta^1_{I^2_1}(b)), (1-\beta^2_{I^1_2}(R),\beta^2_{I^1_2}(R)), Y)$ with $\beta^2_{I^1_2}(R)<\frac{2}{3}$.
\newline
{\bf (b)}. Assume that $\beta^2_{I^1_2}(R)=\frac{2}{3}$.
\newline
{\bf (i)}. Consider the situation that $\beta^1_{I^1_1}(A)>0$. Then, $u^1((b,\varrho^1_{I^2_1}(\beta^{-I^2_1},\tilde\beta))\land I^2_1)>u^1((a,\varrho^1_{I^2_1}(\beta^{-I^2_1},\tilde\beta))\land I^2_1)$ and consequently,  $\beta^1_{I^2_1}(a)=0$. Thus, $u^2((R,\varrho^2_{I^1_2}(\beta^{-I^1_2}, \tilde\beta))\land I^1_2)>u^2((L,\varrho^2_{I^1_2}(\beta^{-I^1_2},\tilde\beta))\land I^1_2)$ and accordingly, $\beta^2_{I^1_2}(L)=0$. A contradiction occurs. The situation cannot arise.
\newline
{\bf (ii)}. Consider the situation that $\beta^1_{I^1_1}(A)=0$. The game has a class of Nash equilibria given by $(B, (1-\beta^1_{I^2_1}(b),\beta^1_{I^2_1}(b)), (\frac{1}{3},\frac{2}{3}), Y)$.
\newline
{\bf Case (3)}. Suppose that $u^3((Y,\varrho^3_{I^1_3}(\beta^{-I^1_3},\tilde\beta))\land I^1_3)=u^3((N,\varrho^3_{I^1_3}(\beta^{-I^1_3},\tilde\beta))\land I^1_3)$. Then, $\beta^1_{I^1_1}(A)>0$, which implies $u^1((A,\varrho^1_{I^1_1}(\beta^{-I^1_1},\tilde\beta))\land I^1_1)\ge u^1((B,\varrho^1_{I^1_1}(\beta^{-I^1_1},\tilde\beta))\land I^1_1)$.
\newline
{\bf (a)}. Suppose that $u^1(a,\varrho^1_{I^2_1}(\beta^{-I^2_1},\tilde\beta),\mu| I^2_1)>u^1(b,\varrho^1_{I^2_1}(\beta^{-I^2_1},\tilde\beta),\mu| I^2_1)$. Then, $\tilde\beta^1_{I^2_1}(b)=0$ and $\beta^3_{I^1_3}(Y)<\frac{1}{2}$. Thus, $u^2((L,\varrho^2_{I^1_2}(\beta^{-I^1_2},\tilde\beta))\land I^1_2)>u^2((R,\varrho^2_{I^1_2}(\beta^{-I^1_2},\tilde\beta))\land I^1_2)$ and accordingly, $\beta^2_{I^1_2}(R)=0$, which implies $\beta^1_{I^1_1}(B)=0$. 
\newline
{\bf (i)}. Assume that $\beta^3_{I^1_3}(Y)<\frac{1}{3}$.
Then, $u^1((A,\varrho^1_{I^1_1}(\beta^{-I^1_1},\tilde\beta))\land I^1_1)>u^1((B,\varrho^1_{I^1_1}(\beta^{-I^1_1},\tilde\beta))\land I^1_1)$ and consequently,  $\beta^1_{I^1_1}(B)=0$. The game has a class of Nash equilibria given by $(A, (\beta^1_{I^2_1}(b),1-\beta^1_{I^2_1}(b)), L, (1-\beta^3_{I^1_3}(Y), \beta^3_{I^1_3}(Y)))$ with $\beta^3_{I^1_3}(Y)<\frac{1}{3}$.\newline
{\bf (ii)}. Assume that $\beta^3_{I^1_3}(Y)=\frac{1}{3}$.
Then, $u^1((A,\varrho^1_{I^1_1}(\beta^{-I^1_1},\tilde\beta))\land I^1_1)=u^1((B,\varrho^1_{I^1_1}(\beta^{-I^1_1}, \tilde\beta))\land I^1_1)$. The game has a class of Nash equilibria given by $(A, (\beta^1_{I^2_1}(b),1-\beta^1_{I^2_1}(b)), L, (\frac{2}{3},\frac{1}{3}))$.
\newline
{\bf (b)}. Suppose that $u^1(b,\varrho^1_{I^2_1}(\beta^{-I^2_1},\tilde\beta)| I^2_1)>u^1(a,\varrho^1_{I^2_1}(\beta^{-I^2_1},\tilde\beta)| I^2_1)$. Then, $\tilde\beta^1_{I^2_1}(a)=0$ and $\beta^3_{I^1_3}(Y)>\frac{1}{2}$. 
\newline
{\bf (i)}. Assume that $u^2((L,\varrho^2_{I^1_2}(\beta^{-I^1_2},\tilde\beta))\land I^1_2)>u^2((R,\varrho^2_{I^1_2}(\beta^{-I^1_2},\tilde\beta))\land I^1_2)$. Then, $\beta^2_{I^1_2}(R)=0$. Thus,
$u^1((B,\varrho^1_{I^1_1}(\beta^{-I^1_1},\tilde\beta))\land I^1_1)>u^1((A,\varrho^1_{I^1_1}(\beta^{-I^1_1},\tilde\beta))\land I^1_1)$. A contradiction occurs. The assumption cannot arise.
\newline
{\bf (ii)}. Assume that $u^2((R,\varrho^2_{I^1_2}(\beta^{-I^1_2},\tilde\beta))\land I^1_2)>u^2((L,\varrho^2_{I^1_2}(\beta^{-I^1_2},\tilde\beta))\land I^1_2)$. Then, $\beta^2_{I^1_2}(L)=0$. Thus,
$u^1((b,\varrho^1_{I^2_1}(\beta^{-I^2_1},\tilde\beta))\land I^2_1)>u^1((a,\varrho^1_{I^2_1}(\beta^{-I^2_1},\tilde\beta))\land I^2_1)$ and consequently,  $\beta^1_{I^2_1}(a)=0$. Therefore, $\beta^3_{I^1_3}(Y)>\frac{3}{4}$ and $5\beta^1_{I^1_1}(A)-3\beta^1_{I^1_1}(B)=0$. Hence, $u^1((A,\varrho^1_{I^1_1}(\beta^{-I^1_1},\tilde\beta))\land I^1_1)=u^1((B,\varrho^1_{I^1_1}(\beta^{-I^1_1},\tilde\beta))\land I^1_1)$, which implies $4\beta^3_{I^1_3}(Y)-3\beta^3_{I^1_3}(Y)=0$. A contradiction occurs and the assumption cannot be sustained.
\newline
{\bf (iii)}. Assume that $u^2((R,\varrho^2_{I^1_2}(\beta^{-I^1_2},\tilde\beta))\land I^1_2)=u^2((L,\varrho^2_{I^1_2}(\beta^{-I^1_2},\tilde\beta))\land I^1_2)$. Then, $\beta^1_{I^2_1}(b)\beta^3_{I^1_3}(Y)=\frac{3}{4}$ and consequently, $\beta^1_{I^2_1}(b)>0$. This together with $u^1((A,\varrho^1_{I^1_1}(\beta^{-I^1_1},\tilde\beta))\land I^1_1)\ge u^1((B,\varrho^1_{I^1_1}(\beta^{-I^1_1},\tilde\beta))\land I^1_1)$ deduces that $\beta^2_{I^1_2}(R)>0$. Therefore, $u^1((b,\varrho^1_{I^2_1}(\beta^{-I^2_1},\tilde\beta))\land I^2_1)>u^1((a,\varrho^1_{I^2_1}(\beta^{-I^2_1},\tilde\beta))\land I^2_1)$ and consequently,  $\beta^1_{I^2_1}(a)=0$. 
The game has a Nash equilibrium given by the unique solution to the system,
$\beta^2_{I^1_2}(L)+4\beta^2_{I^1_2}(R)\beta^3_{I^1_3}(Y)-3\beta^3_{I^1_3}(Y)=0$,
$4\beta^1_{I^1_1}(A)\beta^3_{I^1_3}(Y)-3\beta^1_{I^1_1}(A)=0$,
$5\beta^1_{I^1_1}(A)\beta^2_{I^1_2}(R)-3\beta^1_{I^1_1}(B)=0$,
that is, 
$((\frac{24}{49},\frac{25}{49}), b, (\frac{3}{8},\frac{5}{8}), (\frac{1}{4},\frac{3}{4}))$. 
\newline
{\bf (c)}. Suppose that $u^1(a,\varrho^1_{I^2_1}(\beta^{-I^2_1},\tilde\beta)| I^2_1)=u^1(b,\varrho^1_{I^2_1}(\beta^{-I^2_1},\tilde\beta)| I^2_1)$.
Then, $\beta^3_{I^1_3}(Y)=\frac{1}{2}$. Thus, $u^2((L,\varrho^2_{I^1_2}(\beta^{-I^1_2},\tilde\beta))\land I^1_2)>u^2((R,\varrho^2_{I^1_2}(\beta^{-I^1_2},\tilde\beta))\land I^1_2)$ and accordingly, $\beta^2_{I^1_2}(R)=0$. 
Therefore, $u^1((B,\varrho^1_{I^1_1}(\beta^{-I^1_1},\tilde\beta))\land I^1_1)>u^1((A,\varrho^1_{I^1_1}(\beta^{-I^1_1},\tilde\beta))\land I^1_1)$ and consequently,  $\beta^1_{I^1_1}(A)=0$. A contradiction occurs and the assumption is excluded.

The cases (1)-(3) together show that the game has three types of Nash equilibria given by\newline
Type 1: $(B, (\beta^1_{I^2_1}(a),1-\beta^1_{I^2_1}(a)), (1-\beta^2_{I^1_2}(R),\beta^2_{I^1_2}(R)), Y)$ with $\beta^2_{I^1_2}(R)\le\frac{2}{3}$.\newline
Type 2: $(A, (\beta^1_{I^2_1}(a),1-\beta^1_{I^2_1}(a)), L, (1-\beta^3_{I^1_3}(Y), \beta^3_{I^1_3}(Y)))$ with $\beta^3_{I^1_3}(Y)\le\frac{1}{3}$.
\newline
Type 3: $((\frac{24}{49}, \frac{25}{49}), b, (\frac{3}{8},\frac{5}{8}), (\frac{1}{4}, \frac{3}{4}))$.
}
\end{example}

\begin{example} {\em Consider the game in Fig.~\ref{fexm2}. The information sets consist of $I^1_1=\{\emptyset\}$, $I^2_1=\{\langle Y, I, A\rangle, \langle Y, I, B\rangle\}$, $I^1_2=\{\langle Y\rangle\}$, and $I^1_3=\{\langle Y, I\rangle\}$. 
We denote by $(\beta, \mu,\tilde\beta)$ a triple meeting the properties in Definition~\ref{edne2}. Each NashEBS is presented in the form of $\beta=((\beta^1_{I^1_1}(N), \beta^1_{I^1_1}(Y)), (\beta^1_{I^2_1}(C), \beta^1_{I^2_1}(D)), (\beta^2_{I^1_2}(I), \beta^2_{I^1_2}(O)), (\beta^3_{I^1_3}(A), \beta^3_{I^1_3}(B)))$. The conditional expected payoffs at $(\beta, \mu,\tilde\beta)$ on $I^j_i$ are given by
{\small \[\setlength{\abovedisplayskip}{1.2pt}
\setlength{\belowdisplayskip}{1.2pt}\begin{array}{l}
\omega(I^1_1|\beta)=1,\; \omega(I^2_1|\beta)=\beta^1_{I^1_1}(Y)\beta^2_{I^1_2}(I),\;\omega(I^1_2|\beta)=\beta^1_{I^1_1}(Y),\; \omega(I^1_3|\beta)=\omega(I^2_1|\beta),\;
 {\cal S}^1(I^2_1|\beta)=\beta^2_{I^1_2}(I),\\

u^1(N,\varrho^1_{I^1_1}(\beta^{-I^1_1},\tilde\beta),\mu|I^1_1)=5,\; u^1(Y,\varrho^1_{I^1_1}(\beta^{-I^1_1},\tilde\beta),\mu|I^1_1)=6\beta^2_{I^1_2}(I)(\beta^3_{I^1_3}(A)\tilde\beta^1_{I^2_1}(D)+\beta^3_{I^1_3}(B)\tilde\beta^1_{I^2_1}(C)) +7\beta^2_{I^1_2}(O),\\

u^1(C,\varrho^1_{I^2_1}(\beta^{-I^2_1},\tilde\beta),\mu|I^2_1)=6\mu^1_{I^2_1}(\langle Y, I, B\rangle),\; u^1(D,\varrho^1_{I^2_1}(\beta^{-I^2_1},\tilde\beta),\mu|I^2_1)=6\mu^1_{I^2_1}(\langle Y, I, A\rangle),\\

u^2(I,\varrho^2_{I^1_2}(\beta^{-I^1_2},\tilde\beta),\mu|I^1_2)=9(\beta^3_{I^1_3}(A)\beta^1_{I^2_1}(D) + \beta^3_{I^1_3}(B)\beta^1_{I^2_1}(A)),\; u^2(O,\varrho^2_{I^1_2}(\beta^{-I^1_2},\tilde\beta),\mu|I^1_2)=5,\\

u^3(A,\varrho^3_{I^1_3}(\beta^{-I^1_3},\tilde\beta),\mu|I^1_3)= 6\beta^1_{I^2_1}(D),\; u^3(B,\varrho^3_{I^1_3}(\beta^{-I^1_3},\tilde\beta),\mu|I^1_3)=6\beta^1_{I^2_1}(C),
\end{array}\]}where ${\cal S}^1(I^2_1|\beta)\mu^1_{I^2_1}(\langle Y, I, B\rangle)= {\cal S}^1(\langle Y, I, B\rangle|\beta)$ and ${\cal S}^1(I^2_1|\beta)\mu^1_{I^2_1}(\langle Y, I, A\rangle)= {\cal S}^1(\langle Y, I, A\rangle|\beta)$.

\noindent
{\bf Case (1)}. Suppose that $\omega(I^2_1|\beta)(u^1(C,\varrho^1_{I^2_1}(\beta^{-I^2_1},\tilde\beta),\mu|I^2_1)-u^1(D,\varrho^1_{I^2_1}(\beta^{-I^2_1},\tilde\beta),\mu|I^2_1))>0$. Then, $\beta^1_{I^2_1}(D)=0$, $\tilde\beta^1_{I^2_1}(D)=0$, $\omega(I^2_1|\beta)>0$, and $\frac{1}{2}-\mu^1_{I^2_1}(\langle Y, I, A\rangle)>0$. Thus, $u^3(B,\varrho^3_{I^1_3}(\beta^{-I^1_3},\tilde\beta),\mu|I^1_3)>u^3(A,\varrho^3_{I^1_3}(\beta^{-I^1_3},\tilde\beta),\mu|I^1_3)$ and consequently, $\beta^3_{I^1_3}(A)=0$. Therefore,
  $u^1(Y,\varrho^1_{I^1_1}(\beta^{-I^1_1},\tilde\beta),\mu|I^1_1)>u^1(N,\varrho^1_{I^1_1}(\beta^{-I^1_1},\tilde\beta),\mu|I^1_1)$ and $\omega(I^1_2|\beta)(u^2(I,\varrho^2_{I^1_2}(\beta^{-I^1_2},\tilde\beta),\mu|I^1_2) -u^2(O,\varrho^2_{I^1_2}(\beta^{-I^1_2},\tilde\beta),\mu|I^1_2))>0$. Accordingly, $\beta^1_{I^1_1}(N)=0$ and $\beta^2_{I^1_2}(O)=0$. The game has a NashEBS given by $(Y, C, I, B)$. 
  
 \noindent {\bf Case (2)}. Suppose that $\omega(I^2_1|\beta)(u^1(D,\varrho^1_{I^2_1}(\beta^{-I^2_1},\tilde\beta),\mu|I^2_1)-u^1(C,\varrho^1_{I^2_1}(\beta^{-I^2_1},\tilde\beta),\mu|I^2_1))>0$. Then, $\beta^1_{I^2_1}(C)=0$, $\tilde\beta^1_{I^2_1}(C)=0$, $\omega(I^2_1|\beta)>0$, and $\frac{1}{2}-\mu^1_{I^2_1}(\langle Y, I, B\rangle)>0$. Thus, $u^3(A,\varrho^3_{I^1_3}(\beta^{-I^1_3},\tilde\beta),\mu|I^1_3)>u^3(B,\varrho^3_{I^1_3}(\beta^{-I^1_3},\tilde\beta),\mu|I^1_3)$ and consequently, $\beta^3_{I^1_3}(B)=0$. Therefore,
  $u^1(Y,\varrho^1_{I^1_1}(\beta^{-I^1_1},\tilde\beta),\mu|I^1_1)>u^1(N,\varrho^1_{I^1_1}(\beta^{-I^1_1},\tilde\beta),\mu|I^1_1)$ and $\omega(I^1_2|\beta)(u^2(I,\varrho^2_{I^1_2}(\beta^{-I^1_2},\tilde\beta),\mu|I^1_2) -u^2(O,\varrho^2_{I^1_2}(\beta^{-I^1_2},\tilde\beta),\mu|I^1_2))>0$. Accordingly, $\beta^1_{I^1_1}(N)=0$ and $\beta^2_{I^1_2}(O)=0$. The game has a NashEBS given by $(Y, D, I, A)$. 
  
  \noindent {\bf Case (3)}. Suppose that $\omega(I^2_1|\beta)(u^1(D,\varrho^1_{I^2_1}(\beta^{-I^2_1},\tilde\beta),\mu|I^2_1)-u^1(C,\varrho^1_{I^2_1}(\beta^{-I^2_1},\tilde\beta),\mu|I^2_1))=0$. Then at least one of $\omega(I^2_1|\beta)=0$ 
  and $\frac{1}{2}-\mu^1_{I^2_1}(\langle Y, I, B\rangle)=0$ holds. 
  
  \noindent {\bf (a)}. Assume that  $\omega(I^2_1|\beta)=0$. Then at least one of $\beta^1_{I^1_1}(Y)=0$ and $\beta^2_{I^1_2}(I)=0$ holds. 
  
  \noindent {\bf (i)}. Consider the scenario that $\beta^1_{I^1_1}(Y)=0$. Then, $u^1(N,\varrho^1_{I^1_1}(\beta^{-I^1_1},\tilde\beta),\mu|I^1_1)\ge u^1(Y,\varrho^1_{I^1_1}(\beta^{-I^1_1},\tilde\beta),\mu|I^1_1)$ and consequently,  $\beta^2_{I^1_2}(O)\le\frac{5}{7}$. Thus, ${\cal S}^1(I^2_1|\beta)=\beta^2_{I^1_2}(I)>0$.
    
  \noindent $\bullet$ Postulate that $\beta^3_{I^1_3}(A)>\frac{1}{2}$. Then, $u^1(D,\varrho^1_{I^2_1}(\beta^{-I^2_1},\tilde\beta),\mu|I^2_1)>u^1(C,\varrho^1_{I^2_1}(\beta^{-I^2_1},\tilde\beta),\mu|I^2_1)$ and consequently, $\tilde\beta^1_{I^2_1}(C)=0$. Thus, $\beta^2_{I^1_2}(I)(1+6\beta^3_{I^1_3}(B))\ge 2$.
  The game has a class of NashEBSs given by $(N, (\beta^1_{I^2_1}(C),1-\beta^1_{I^2_1}(C)), (\beta^2_{I^1_2}(I), 1-\beta^2_{I^1_2}(I)), (\beta^3_{I^1_3}(A),1-\beta^3_{I^1_3}(A)))$ with $\beta^2_{I^1_2}(I)(1+6\beta^3_{I^1_3}(B))\ge 2$ and $\beta^3_{I^1_3}(B)<\frac{1}{2}$.

\noindent $\bullet$ Postulate that $\beta^3_{I^1_3}(A)<\frac{1}{2}$. Then, $u^1(C,\varrho^1_{I^2_1}(\beta^{-I^2_1},\tilde\beta),\mu|I^2_1)>u^1(D,\varrho^1_{I^2_1}(\beta^{-I^2_1},\tilde\beta),\mu|I^2_1)$ and consequently, $\tilde\beta^1_{I^2_1}(D)=0$. Thus, $\beta^2_{I^1_2}(I)(1+6\beta^3_{I^1_3}(A))\ge 2$.
  The game has a class of NashEBSs given by $(N, (\beta^1_{I^2_1}(C),1-\beta^1_{I^2_1}(C)), (\beta^2_{I^1_2}(I), 1-\beta^2_{I^1_2}(I)), (\beta^3_{I^1_3}(A),1-\beta^3_{I^1_3}(A)))$ with $\beta^2_{I^1_2}(I)(1+6\beta^3_{I^1_3}(A))\ge 2$ and $\beta^3_{I^1_3}(A)<\frac{1}{2}$.
    
    \noindent $\bullet$ Postulate that $\beta^3_{I^1_3}(A)=\frac{1}{2}$.  Then, $\beta^2_{I^1_2}(I)\ge \frac{1}{2}$.
    The game has a class of NashEBSs given by $(N, (\beta^1_{I^2_1}(C),1-\beta^1_{I^2_1}(C)), (\beta^2_{I^1_2}(I), 1-\beta^2_{I^1_2}(I)), (\frac{1}{2},\frac{1}{2}))$ with $\beta^2_{I^1_2}(I)\ge \frac{1}{2}$.
     
  \noindent {\bf (ii)}. Consider the scenario that $\beta^1_{I^1_1}(Y)>0$. Then, $\beta^2_{I^1_2}(I)=0$. Thus, $u^1(Y,\varrho^1_{I^1_1}(\beta^{-I^1_1},\tilde\beta),\mu|I^1_1)>u^1(N,\varrho^1_{I^1_1}(\beta^{-I^1_1},\tilde\beta),\mu|I^1_1)$ and  $u^2(O,\varrho^2_{I^1_2}(\beta^{-I^1_2},\tilde\beta),\mu|I^1_2)-u^2(I,\varrho^2_{I^1_2}(\beta^{-I^1_2},\tilde\beta),\mu|I^1_2)\ge 0$. Consequently, $\beta^1_{I^1_1}(N)=0$ and $\beta^3_{I^1_3}(A)\beta^1_{I^2_1}(D) + \beta^3_{I^1_3}(B)\beta^1_{I^2_1}(C)\le\frac{5}{9}$. The game has a class of NashEBSs given by $(Y, (\beta^1_{I^2_1}(C),1-\beta^1_{I^2_1}(C)), O, (\beta^3_{I^1_3}(A),1-\beta^3_{I^1_3}(A)))$ with $\beta^3_{I^1_3}(A)\beta^1_{I^2_1}(D) + \beta^3_{I^1_3}(B)\beta^1_{I^2_1}(C)\le\frac{5}{9}$.
  
  \noindent {\bf (b)}.  Assume that  $\omega(I^2_1|\beta)>0$.  Then, $\frac{1}{2}-\mu^1_{I^2_1}(\langle Y, I, B\rangle)=0$ and consequently, $\beta^3_{I^1_3}(A)=\frac{1}{2}$. Thus, $\omega(I^1_2|\beta)(u^2(O,\varrho^2_{I^1_2}(\beta^{-I^1_2},\tilde\beta),\mu|I^1_2) -u^2(I,\varrho^2_{I^1_2}(\beta^{-I^1_2},\tilde\beta),\mu|I^1_2))>0$ and accordingly, $\beta^2_{I^1_2}(I)=0$. A contradiction occurs. The assumption cannot be sustained.
    
  The cases (1)-(3) together show that the game has six classes of NashEBSs given by
  \\
  (1). $(Y, C, I, B)$.\\
  (2). $(Y, D, I, A)$.\\
  (3). $(N, (\beta^1_{I^2_1}(C),1-\beta^1_{I^2_1}(C)), (\beta^2_{I^1_2}(I), 1-\beta^2_{I^1_2}(I)), (\beta^3_{I^1_3}(A),1-\beta^3_{I^1_3}(A)))$ with $\beta^2_{I^1_2}(I)(1+6\beta^3_{I^1_3}(B))\ge 2$ and $\beta^3_{I^1_3}(B)<\frac{1}{2}$.\\
  (4). $(N, (\beta^1_{I^2_1}(C),1-\beta^1_{I^2_1}(C)), (\beta^2_{I^1_2}(I), 1-\beta^2_{I^1_2}(I)), (\beta^3_{I^1_3}(A),1-\beta^3_{I^1_3}(A)))$ with $\beta^2_{I^1_2}(I)(1+6\beta^3_{I^1_3}(A))\ge 2$ and $\beta^3_{I^1_3}(A)<\frac{1}{2}$.\\
  (5). $(N, (\beta^1_{I^2_1}(C),1-\beta^1_{I^2_1}(C)), (\beta^2_{I^1_2}(I), 1-\beta^2_{I^1_2}(I)), (\frac{1}{2},\frac{1}{2}))$ with $\beta^2_{I^1_2}(I)\ge \frac{1}{2}$.\\
  (6). $(Y, (\beta^1_{I^2_1}(C),1-\beta^1_{I^2_1}(C)), O, (\beta^3_{I^1_3}(A),1-\beta^3_{I^1_3}(A)))$ with $\beta^3_{I^1_3}(A)\beta^1_{I^2_1}(D) + \beta^3_{I^1_3}(B)\beta^1_{I^2_1}(C)\le\frac{5}{9}$.
     }
\end{example}

\begin{example}{\em Consider the game in Fig.~\ref{fexm2}. The information sets consist of $I^1_1=\{\emptyset\}$, $I^2_1=\{\langle Y, I, A \rangle, \langle Y, I, B \rangle\}$, $I^1_2=\{\langle Y\rangle\}$, and $I^1_3=\{\langle Y, I
\rangle\}$. We denote by $(\beta,\tilde\beta,\mu)$ a triple satisfying the properties in Definition~\ref{edsgpe1}. Each subgame perfect equilibrium is represented in the form $\big((\beta^1_{I^1_1}(N),\beta^1_{I^1_1}(Y)), (\beta^1_{I^2_1}(C)$, $\beta^1_{I^2_1}(D)),  (\beta^2_{I^1_2}(I),\beta^2_{I^1_2}(O)), (\beta^3_{I^1_3}(A),\beta^3_{I^1_3}(B))\big)$.  The expected payoffs and conditional expected payoffs at $(\beta,\tilde\beta,\mu)$ on $I^j_i$ are given by
{\small \[\setlength{\abovedisplayskip}{1.2pt}
\setlength{\belowdisplayskip}{1.2pt}
\begin{array}{l}
u^1((N, \varrho^1_{I^1_1}(\beta^{-I^1_1},\tilde\beta))\Diamond I^1_1)=5,\\

 u^1((Y, \varrho^1_{I^1_1}(\beta^{-I^1_1},\tilde\beta))\Diamond I^1_1)=6\beta^2_{I^1_2}(I)(\beta^3_{I^1_3}(A)\tilde\beta^1_{I^2_1}(D)
+ \beta^3_{I^1_3}(B)\tilde\beta^1_{I^2_1}(C))+7\beta^2_{I^1_2}(O),\\

u^1((C, \varrho^1_{I^2_1}(\beta^{-I^2_1},\tilde\beta))\Diamond I^2_1)=6\beta^3_{I^1_3}(B),\; u^1((D, \varrho^1_{I^2_1}(\beta^{-I^2_1},\tilde\beta))\Diamond I^2_1)=6\beta^3_{I^1_3}(A),\\

u^2((I, \varrho^2_{I^1_2}(\beta^{-I^1_2},\tilde\beta))\Diamond I^1_2) = 9(\beta^3_{I^1_3}(A)\beta^1_{I^2_1}(D)
+ \beta^3_{I^1_3}(B)\beta^1_{I^2_1}(C)),\\

u^2((O, \varrho^2_{I^1_2}(\beta^{-I^1_2},\tilde\beta))\Diamond I^1_2) = 5,\\

u^3((A, \varrho^3_{I^1_3}(\beta^{-I^1_3},\tilde\beta))\Diamond I^1_3) = 6\beta^1_{I^2_1}(D),\; u^3((B, \varrho^3_{I^1_3}(\beta^{-I^1_3},\tilde\beta))\Diamond I^1_3) = 6\beta^1_{I^2_1}(C),\\

u^1(N, \varrho^1_{I^1_1}(\beta^{-I^1_1},\tilde\beta), \mu| I^1_1)=5,\\

 u^1(Y, \varrho^1_{I^1_1}(\beta^{-I^1_1},\tilde\beta), \mu| I^1_1)=6\beta^2_{I^1_2}(I)(\beta^3_{I^1_3}(A)\tilde\beta^1_{I^2_1}(D)
+ \beta^3_{I^1_3}(B)\tilde\beta^1_{I^2_1}(C))+7\beta^2_{I^1_2}(O),\\

u^1(C, \varrho^1_{I^2_1}(\beta^{-I^2_1},\tilde\beta), \mu| I^2_1)=6\mu^1_{I^2_1}(\langle Y,I,B\rangle),\; u^1(D, \varrho^1_{I^2_1}(\beta^{-I^2_1},\tilde\beta), \mu| I^2_1)=6\mu^1_{I^2_1}(\langle Y,I,A\rangle),\\

u^2(I, \varrho^2_{I^1_2}(\beta^{-I^1_2},\tilde\beta), \mu| I^1_2) = 9(\beta^3_{I^1_3}(A)\beta^1_{I^2_1}(D)
+ \beta^3_{I^1_3}(B)\beta^1_{I^2_1}(C)),\\

u^2(O, \varrho^2_{I^1_2}(\beta^{-I^1_2},\tilde\beta), \mu| I^1_2) = 5,\\

u^3(A, \varrho^3_{I^1_3}(\beta^{-I^1_3},\tilde\beta), \mu| I^1_3) = 6\beta^1_{I^2_1}(D),\; u^3(B, \varrho^3_{I^1_3}(\beta^{-I^1_3},\tilde\beta), \mu| I^1_3) = 6\beta^1_{I^2_1}(C).
\end{array}\]}

\noindent {\bf Case (1)}. Suppose that $u^1((C, \varrho^1_{I^2_1}(\beta^{-I^2_1},\tilde\beta))\Diamond I^2_1)>u^1((D, \varrho^1_{I^2_1}(\beta^{-I^2_1},\tilde\beta))\Diamond I^2_1)$. Then, $\beta^1_{I^2_1}(D)=0$ and $\beta^3_{I^1_3}(B)>\beta^3_{I^1_3}(A)$. Thus, $u^3((B, \varrho^3_{I^1_3}(\beta^{-I^1_3},\tilde\beta))\Diamond I^1_3)>u^3((A, \varrho^3_{I^1_3}(\beta^{-I^1_3},\tilde\beta))\Diamond I^1_3)$ and consequently, $\beta^3_{I^1_3}(A)=0$. Therefore, $u^1(C, \varrho^1_{I^2_1}(\beta^{-I^2_1},\tilde\beta), \mu| I^2_1)>u^1(D, \varrho^1_{I^2_1}(\beta^{-I^2_1},\tilde\beta), \mu| I^2_1)$ and $u^2((I, \varrho^2_{I^1_2}(\beta^{-I^1_2}$, $\tilde\beta))\Diamond I^1_2)>u^2((O, \varrho^2_{I^1_2}(\beta^{-I^1_2},\tilde\beta))\Diamond I^1_2)$, which lead to $\tilde\beta^1_{I^2_1}(D)=0$ and 
$\beta^2_{I^1_2}(O)=0$. Hence, $u^1((Y, \varrho^1_{I^1_1}(\beta^{-I^1_1}$, $\tilde\beta))\Diamond I^1_1)>u^1((N, \varrho^1_{I^1_1}(\beta^{-I^1_1},\tilde\beta))\Diamond I^1_1)$ and accordingly, $\beta^1_{I^1_1}(N)=0$. The game has a subgame perfect equilibrium given by $(Y, C, I, B)$. 
\newline
{\bf Case (2)}. Suppose that $u^1((D, \varrho^1_{I^2_1}(\beta^{-I^2_1},\tilde\beta))\Diamond I^2_1)>u^1((C, \varrho^1_{I^2_1}(\beta^{-I^2_1},\tilde\beta))\Diamond I^2_1)$. Then, $\beta^1_{I^2_1}(C)=0$ and $\beta^3_{I^1_3}(A)>\beta^3_{I^1_3}(B)$. Thus, $u^3((A, \varrho^3_{I^1_3}(\beta^{-I^1_3},\tilde\beta))\Diamond I^1_3)>u^3((B, \varrho^3_{I^1_3}(\beta^{-I^1_3},\tilde\beta))\Diamond I^1_3)$ and consequently, $\beta^3_{I^1_3}(B)=0$. Therefore, $u^1(D, \varrho^1_{I^2_1}(\beta^{-I^2_1},\tilde\beta), \mu| I^2_1)>u^1(C, \varrho^1_{I^2_1}(\beta^{-I^2_1},\tilde\beta), \mu| I^2_1)$ and $u^2((I, \varrho^2_{I^1_2}(\beta^{-I^1_2}$, $\tilde\beta))\Diamond I^1_2)>u^2((O, \varrho^2_{I^1_2}(\beta^{-I^1_2},\tilde\beta))\Diamond I^1_2)$, which lead to $\tilde\beta^1_{I^2_1}(A)=0$ and 
$\beta^2_{I^1_2}(O)=0$. Hence, $u^1((Y, \varrho^1_{I^1_1}(\beta^{-I^1_1}$, $\tilde\beta))\Diamond I^1_1)>u^1((N, \varrho^1_{I^1_1}(\beta^{-I^1_1},\tilde\beta))\Diamond I^1_1)$ and accordingly, $\beta^1_{I^1_1}(N)=0$. The game has a subgame perfect equilibrium given by $(Y, D, I, A)$. 
\newline
{\bf Case (3)}. Suppose that $u^1((C, \varrho^1_{I^2_1}(\beta^{-I^2_1},\tilde\beta))\Diamond I^2_1)=u^1((D, \varrho^1_{I^2_1}(\beta^{-I^2_1},\tilde\beta))\Diamond I^2_1)$. Then, $\beta^3_{I^1_3}(A)=\beta^3_{I^1_3}(B)$. Thus, $u^3((A, \varrho^3_{I^1_3}(\beta^{-I^1_3},\tilde\beta))\Diamond I^1_3)=u^3((B, \varrho^3_{I^1_3}(\beta^{-I^1_3},\tilde\beta))\Diamond I^1_3)$ and consequently, $\beta^1_{I^2_1}(C)=\beta^1_{I^2_1}(D)$. Therefore, $u^2((O, \varrho^2_{I^1_2}(\beta^{-I^1_2},\tilde\beta))\Diamond I^1_2)>u^2((I, \varrho^2_{I^1_2}(\beta^{-I^1_2},\tilde\beta))\Diamond I^1_2)$, which leads to  
$\beta^2_{I^1_2}(I)=0$. Hence, $u^1((Y, \varrho^1_{I^1_1}(\beta^{-I^1_1},\tilde\beta))\Diamond I^1_1)>u^1((N, \varrho^1_{I^1_1}(\beta^{-I^1_1},\tilde\beta))\Diamond I^1_1)$ and accordingly, $\beta^1_{I^1_1}(N)=0$. The game has a subgame perfect equilibrium given by $(Y, (\frac{1}{2},\frac{1}{2}), O, (\frac{1}{2},\frac{1}{2}))$. 

The cases (1)-(3) together bring us that the game has three subgame perfect equilibria given by $(Y, C, I, B)$, $(Y, D, I, A)$,  and $(Y, (\frac{1}{2},\frac{1}{2}), O, (\frac{1}{2},\frac{1}{2}))$. 
}\end{example}
}
%To further illustrate the strength of Definition~\ref{edne1}, we provide two additional examples in Appendix A.

\section{\large Semi-Sequential Equilibrium and Subgame Perfect Semi-Sequential Equilibrium}

The characterization of NashEBS consists of two behavioral strategy profiles, the original behavioral strategy profile and an extra behavioral strategy profile. When these two behavioral strategy profiles become identical, we attain a strict refinement of NashEBS, which is named as semi-sequential equilibrium.
\begin{definition}[\bf Semi-Sequential Equilibrium]
{\em An assessment $(\beta^*,\mu^*)$ is a semi-sequential equilibrium if $\beta^{*i}_{I^j_i}(a')=0$ for any $i\in N$, $j\in M_i$, and $a',a''\in A(I^j_i)$ with $u^i(a'', \beta^{*-I^j_i}, \mu^*|I^j_i)>u^i(a', \beta^{*-I^j_i}, \mu^*|I^j_i)$, where $\mu^*$ is a solution to the system~(\ref{bsne1}).
}
\end{definition}
When two behavioral strategy profiles in the characterization of subgame perfect equilibrium become identical, we get a strict refinement of subgame perfect equilibrium, which is named as subgame perfect semi-sequential equilibrium. 
\begin{definition}[\bf Subgame Perfect Semi-Sequential Equilibrium]
{\em An assessment $(\beta^*,\mu^*)$ is a subgame perfect semi-sequential equilibrium if $\beta^{*i}_{I^j_i}(a')=0$ for any $i\in N$, $j\in M_i$, and $a',a''\in A(I^j_i)$ with $u^i(a'', \beta^{*-I^j_i}, \mu^*|I^j_i)>u^i(a', \beta^{*-I^j_i}, \mu^*|I^j_i)$, where $\mu^*$ is a solution to the system~(\ref{sgbseq1}).
}
\end{definition}
One can show that there always exists a semi-sequential equilibrium and a subgame perfect semi-sequential equilibrium in a finite extensive-form game with perfect recall. 

\begin{figure}[H]
    \centering
    \begin{minipage}{0.49\textwidth}
        \centering
        \includegraphics[width=0.80\textwidth, height=0.12\textheight]{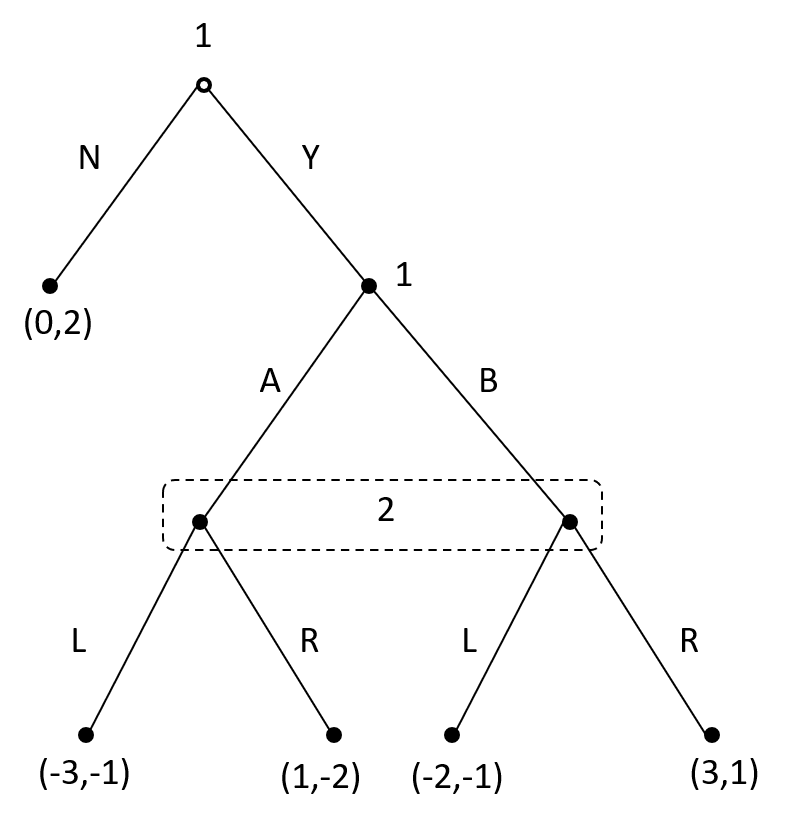}
        \caption{\label{TFig3}\scriptsize An Extensive-Form Game from Mas-Colell et al.~\cite{Mas-Colell et al. (1995)}}
\end{minipage}\hfill
    \begin{minipage}{0.49\textwidth}
        \centering
        \includegraphics[width=0.80\textwidth, height=0.12\textheight]{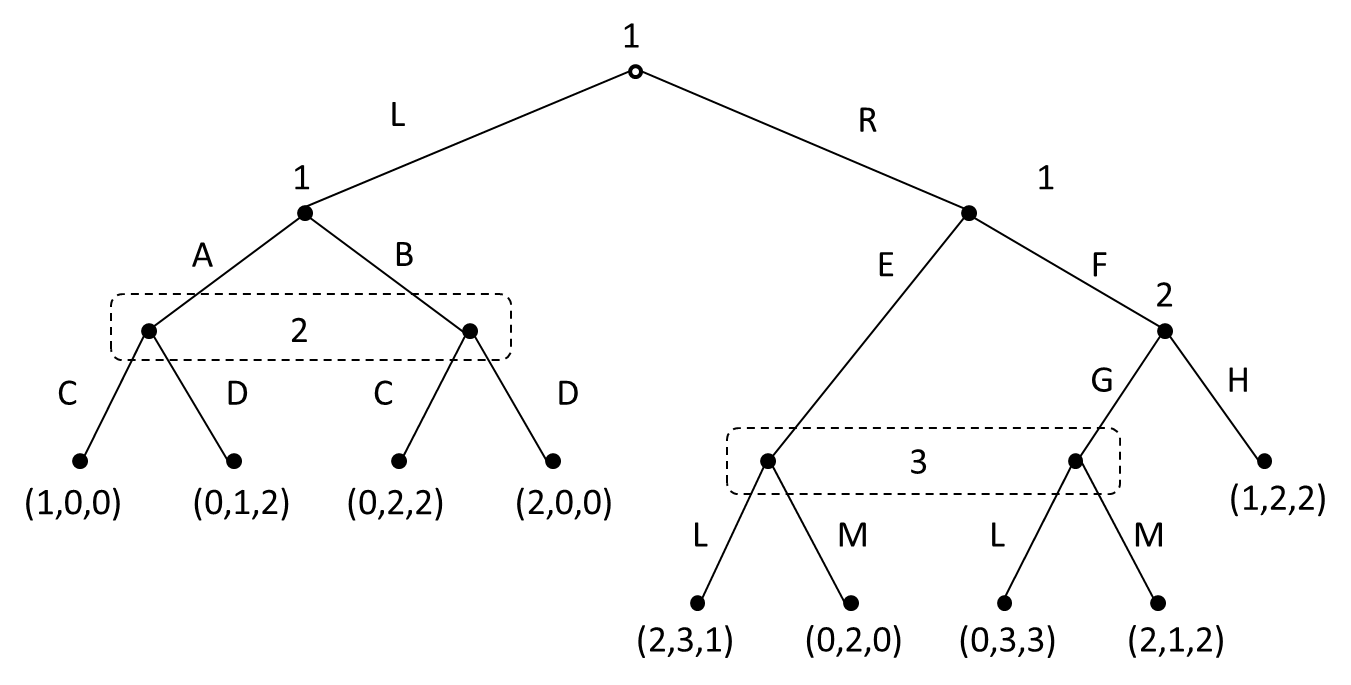}
        \caption{\label{TFig4}\scriptsize An Extensive-Form Game from Bonanno~\cite{Bonanno (2018)}}
\end{minipage}
  \end{figure}

Consider the game in Fig.~\ref{TFig3}, which has two classes of NashEBS given by (1). $(Y, B, R)$; and (2). $(N, (\beta^1_{I^2_1}(A),1- \beta^1_{I^2_1}(A)), (\beta^2_{I^1_2}(L), 1-\beta^2_{I^1_2}(L)))$ with $0\le \beta^1_{I^2_1}(A)\le 1$ and $\beta^2_{I^1_2}(L)\ge\frac{3}{5}$. The game has three classes of semi-sequential equilibria given by (1).
$(Y, B, R)$ with $\mu^2_{I^1_2}(\langle Y, A\rangle)=0$; (2). $(N, B, L)$ with $\mu^2_{I^1_2}(\langle Y, A\rangle)>\frac{2}{3}$; and
(3). $(N, B, (\beta^2_{I^1_2}(L),1-\beta^2_{I^1_2}(L)))$ with $\beta^2_{I^1_2}(L)\ge \frac{3}{5}$ and $\mu^2_{I^1_2}(\langle Y, A\rangle)=\frac{2}{3}$. Comparing the set of NashEBSs with the set of semi-sequential equilibria for the game in Fig.~\ref{TFig3}, one can draw the conclusion that semi-sequential equilibrium is indeed a strict refinement of NashEBS.

 Consider the game in Fig.~\ref{TFig4}, which has infinitely many subgame perfect equilibria given by $(R, (\frac{2}{3},\frac{1}{3}), E, (\frac{2}{3},\frac{1}{3}), (\beta^1_{I^1_1}(G), 1-\beta^1_{I^1_1}(G)), L)$ with $0\le \beta^1_{I^1_1}(G)\le 1$. The game has a unique subgame perfect semi-sequential equilibrium given by
 $(R, (\frac{2}{3},\frac{1}{3}), E, (\frac{2}{3},\frac{1}{3}), G, L)$ with $\mu^3_{I^1_3}(\langle R, E\rangle)=1$.  Comparing the set of NashEBSs with the set of semi-sequential equilibria for the game in Fig.~\ref{TFig4}, one can draw the conclusion that subgame perfect semi-sequential equilibrium is indeed a strict refinement of subgame perfect equilibrium.

\section{\large Differentiable Path-Following Methods for Computing a Nash Equilibrium and a Subgame Perfect Equilibrium\label{dpm}}

To further demonstrate the applications of Definition~\ref{edne1} and Definition~\ref{edsgpe1}, we will exploit in this section the systems~(\ref{nsrne1}) and~(\ref{nscsgpe1}) to develop differentiable path-following methods to compute a NashEBS and a subgame perfect equilibrium.\footnote{A general framework for establishing such a differentiable path-following method can be described as follows.
Step 1: Constitute with an extra variable $t\in (0,1]$ an artificial extensive-form game $\Gamma(t)$ in which each player at each of his information sets solves a convex optimization problem. The artificial game should continuously deform from a trivial game to the target game as $t$ descends from one to zero. $\Gamma(1)$ should have a unique equilibrium, which can be easily computed, and every convergent sequence of equilibria of $\Gamma(t_k)$, $k=1,2,\ldots$, with $\lim\limits_{k\to\infty}t_k=0$ should yield a desired equilibrium at its limit.
Step 2: Apply the optimality conditions to the convex optimization problems in the artificial game to acquire from the equilibrium condition an equilibrium system.
Step 3: Verify that the closure of the set of solutions of the equilibrium system contains a path-connected component that intersects both the levels of $t=1$ and $t=0$.  
Step 4: Ensure through an application of the Transversality Theorem in Eaves and Schmedders~\cite{Eaves and Schmedders (1999)} the existence of a smooth path that starts from the unique equilibrium at $t=1$ and approaches a desired equilibrium as $t\to 0$.
Step 5: Adopt a standard predictor-corrector method for numerically tracing the smooth path to a desired equilibrium.
} 
 Let $\eta^0=(\eta^{0i}_{I^j_i}:i\in N,j\in M_i)$  be a given vector with $\eta^{0i}_{I^j_i}=(\eta^{0i}_{I^j_i}(a):a\in A(I^j_i))^\top$ such that $0<\eta^{0i}_{I^j_i}(a)$ and $\tau^i_{I^j_i}(\eta^0)=\sum\limits_{a\in A(I^j_i)}\eta^{0i}_{I^j_i}(a)< 1$. For $t\in [0,1]$,
let $\varpi(\beta,t)=(\varpi(\beta^q_{I^l_q},t):q\in N,l\in M_q)$ with $\varpi(\beta^q_{I^l_q},t)=(\varpi(\beta^q_{I^l_q}(a),t):a\in A(I^l_q))^\top$, where \[\setlength{\abovedisplayskip}{1.2pt}
\setlength{\belowdisplayskip}{1.2pt}\varpi(\beta^q_{I^l_q}(a),t)=(1-t^2(1-t^2)\tau^q_{I^l_q}(\eta^0))\beta^q_{I^l_q}(a)+t^2(1-t^2)\eta^{0q}_{I^l_q}(a).\]
Let $\beta^0$ and $\tilde\beta^0$ be two given totally mixed behavioral strategy profiles. Let $\xi^0=(\xi^{0i}_{I^j_i}(h):i\in N, j\in M_i,h\in I^j_i)$ with $\xi^{0i}_{I^j_i}(h)={\cal S}^i(h|\beta^0)/{\cal S}^i(I^j_i|\beta^0)$ and $\mu^0=\xi^0$. These vectors will be employed to characterize the starting point of smooth paths in the following developments. 

\subsection{Logarithmic-Barrier Smooth Paths}
For $t\in (0,1]$, we constitute with $\varpi(\beta,t)$ a logarithmic-barrier extensive-form game $\Gamma_L(t)$ in which player $i$ at his information set $I^j_i$ solves against a given $(\hat\beta,\hat{\tilde\beta},\hat\mu)$ the strictly convex optimization problem,
{\footnotesize
\begin{equation}\setlength{\abovedisplayskip}{1.2pt}
\setlength{\belowdisplayskip}{1.2pt}
\label{logbop1}
\begin{array}{rl}
\max\limits_{\beta^i_{I^j_i},\;\tilde\beta^i_{I^j_i},\;\mu^i_{I^j_i},\;\xi^i_{I^j_i}} & (1-t)\sum\limits_{a\in A(I^j_i)}(\beta^i_{I^j_i}(a)u^i((a,\varrho^i_{I^j_i}(\varpi(\hat\beta^{-I^j_i},t),\hat{\tilde\beta}))\land I^j_i)\\
& +\tilde\beta^i_{I^j_i}(a)u^i(a,\varrho^i_{I^j_i}(\varpi(\hat\beta^{-I^j_i},t),\hat{\tilde\beta}),\hat\mu| I^j_i))\\
& +t\sum\limits_{a\in A(I^j_i)}(\beta^{0i}_{I^j_i}(a)\ln\beta^i_{I^j_i}(a)+\tilde\beta^{0i}_{I^j_i}(a)\ln\tilde\beta^i_{I^j_i}(a))+t\sum\limits_{h\in I^j_i}\xi^{0i}_{I^j_i}(h)\ln\xi^i_{I^j_i}(h)\\
\text{s.t.} & \sum\limits_{a\in A(I^j_i)}\beta^i_{I^j_i}(a)=1,\;\sum\limits_{a\in A(I^j_i)}\tilde\beta^i_{I^j_i}(a)=1,\;\sum\limits_{h\in I^j_i}\mu^i_{I^j_i}(h)=1,\\
& ((1-t){\cal S}^i(I^j_i|\varpi(\hat\beta,t)+t)\mu^i_{I^j_i}(h)-\xi^i_{I^j_i}(h)=(1-t){\cal S}^i(h|\varpi(\hat\beta,t)),\;h\in I^j_i.
\end{array}
\end{equation}
}An application of the optimality conditions to the problem~(\ref{logbop1}) together with the equilibrium condition of $(\beta,\tilde\beta,\mu)=(\hat\beta,\hat{\tilde\beta},\hat\mu)$ yields the equilibrium system of $\Gamma_L(t)$,
{\footnotesize
\begin{equation}\setlength{\abovedisplayskip}{1.2pt}
\setlength{\belowdisplayskip}{1.2pt}
\label{Mlogbes1}
\begin{array}{l}
(1-t)u^i((a,\varrho^i_{I^j_i}(\varpi(\beta^{-I^j_i},t),\tilde\beta))\land I^j_i) +t\beta^{0i}_{I^j_i}(a)/\beta^i_{I^j_i}(a)-\zeta^i_{I^j_i}=0,\;i\in N,j\in M_i,a\in A(I^j_i),\\
(1-t)u^i(a,\varrho^i_{I^j_i}(\varpi(\beta^{-I^j_i},t),\tilde\beta),\mu|I^j_i) +t\tilde\beta^{0i}_{I^j_i}(a)/\tilde\beta^i_{I^j_i}(a)-\tilde\zeta^i_{I^j_i}=0,\;i\in N,j\in M_i,a\in A(I^j_i),\\
t\frac{\xi^{0i}_{I^j_i}(h)}{\xi^i_{I^j_i}(h)}((1-t)\omega^i(I^j_i|\varpi(\beta,t))+t)-\sigma^i_{I^j_i}=0,\;i\in N,j\in M_i,h\in I^j_i,\\
  ((1-t){\cal S}^i(I^j_i|\varpi(\beta,t))+t)\mu^i_{I^j_i}(h)-\xi^i_{I^j_i}(h)-(1-t){\cal S}^i(h|\varpi(\beta,t))=0,\;i\in N,j\in M_i,h\in I^j_i,\\
\sum\limits_{a\in A(I^j_i)}\beta^i_{I^j_i}(a)=1,\;\sum\limits_{a\in A(I^j_i)}\tilde\beta^i_{I^j_i}(a)=1,\; \sum\limits_{h\in I^j_i}\mu^i_{I^j_i}(h)=1,\;i\in N,j\in M_i,\\
0<\beta^i_{I^j_i}(a),\;0<\tilde\beta^i_{I^j_i}(a),\;0<\xi^i_{I^j_i}(h),\;i\in N, j\in M_i, a\in A(I^j_i), h\in I^j_i.
\end{array}
\end{equation}}Multiplying $\beta^i_{I^j_i}(a)$, $\tilde\beta^i_{I^j_i}(a)$, and $\xi^i_{I^j_i}(h)$ to the equations in the first, second, and third groups in the system~(\ref{Mlogbes1}), respectively, we come to a polynomial system,  
{\footnotesize
\begin{equation}\setlength{\abovedisplayskip}{1.2pt}
\setlength{\belowdisplayskip}{1.2pt}
\label{Mlogbes2}
\begin{array}{l}
(1-t)\beta^i_{I^j_i}(a)u^i((a,\varrho^i_{I^j_i}(\varpi(\beta^{-I^j_i},t),\tilde\beta))\land I^j_i) +t\beta^{0i}_{I^j_i}(a)-\beta^i_{I^j_i}(a)\zeta^i_{I^j_i}=0,\;i\in N,j\in M_i,a\in A(I^j_i),\\
(1-t)\tilde\beta^i_{I^j_i}(a)u^i(a,\varrho^i_{I^j_i}(\varpi(\beta^{-I^j_i},t),\tilde\beta),\mu|I^j_i) +t\tilde\beta^{0i}_{I^j_i}(a)-\tilde\beta^i_{I^j_i}(a)\tilde\zeta^i_{I^j_i}=0,\;i\in N,j\in M_i,a\in A(I^j_i),\\
t\xi^{0i}_{I^j_i}(h)((1-t){\cal S}^i(I^j_i|\varpi(\beta,t))+t)-\nu^i_{I^j_i}\xi^i_{I^j_i}(h)=0,\;i\in N,j\in M_i,h\in I^j_i,\\
  ((1-t){\cal S}^i(I^j_i|\varpi(\beta,t))+t)\mu^i_{I^j_i}(h)-\xi^i_{I^j_i}(h)-(1-t){\cal S}^i(h|\varpi(\beta,t))=0,\;i\in N,j\in M_i,h\in I^j_i,\\
\sum\limits_{a\in A(I^j_i)}\beta^i_{I^j_i}(a)=1,\;\sum\limits_{a\in A(I^j_i)}\tilde\beta^i_{I^j_i}(a)=1,\; \sum\limits_{h\in I^j_i}\mu^i_{I^j_i}(h)=1,\;i\in N,j\in M_i,\\
0<\beta^i_{I^j_i}(a),\;0<\tilde\beta^i_{I^j_i}(a),\;0<\xi^i_{I^j_i}(h),\;i\in N, j\in M_i, a\in A(I^j_i), h\in I^j_i.
\end{array}
\end{equation}}Taking the sum of equations in the first and second groups of the system~(\ref{Mlogbes2}) over $A(I^j_i)$ and the sum of equations in the third group of the system~(\ref{Mlogbes2}) over $I^j_i$, respectively, we acquire from a division operation the system, {\footnotesize
\begin{equation}\setlength{\abovedisplayskip}{1.2pt}
\setlength{\belowdisplayskip}{1.2pt}
\label{Mlogbes3}
\begin{array}{l}
\zeta^i_{I^j_i}=\frac{1}{\sum\limits_{a'\in A(I^j_i)}\beta^i_{I^j_i}(a')}((1-t)\sum\limits_{a'\in A(I^j_i)}\beta^i_{I^j_i}(a')u^i((a',\varrho^i_{I^j_i}(\varpi(\beta^{-I^j_i},t),\tilde\beta))\land I^j_i) +t),\;i\in N,j\in M_i,\\

\tilde\zeta^i_{I^j_i}=\frac{1}{\sum\limits_{a'\in A(I^j_i)}\tilde\beta^i_{I^j_i}(a')}((1-t)\sum\limits_{a'\in A(I^j_i)}\tilde\beta^i_{I^j_i}(a')u^i(a',\varrho^i_{I^j_i}(\varpi(\beta^{-I^j_i},t),\tilde\beta),\mu|I^j_i) +t),\;i\in N,j\in M_i,\\

\nu^i_{I^j_i}=\frac{1}{\sum\limits_{h'\in I^j_i}\xi^i_{I^j_i}(h')}t((1-t){\cal S}^i(I^j_i|\varpi(\beta,t))+t)\sum\limits_{h'\in I^j_i}\xi^{0i}_{I^j_i}(h'),\;i\in N,j\in M_i.
\end{array}
\end{equation}}Let $a^0_{I^j_i}$ be a given reference action in $A(I^j_i)$ and $h^0_{I^j_i}$ a given reference history of $I^j_i$. Substituting $\zeta^i_{I^j_i}$, $\tilde\zeta^i_{I^j_i}$, and $\nu^i_{I^j_i}$ of the system~(\ref{Mlogbes3}) into the system~(\ref{Mlogbes2}), we arrive at an equivalent system with fewer variables,
{\footnotesize
\begin{equation}\setlength{\abovedisplayskip}{1.2pt}
\setlength{\belowdisplayskip}{1.2pt}
\label{Mlogbes4}
\begin{array}{l}
(1-t)\beta^i_{I^j_i}(a)\sum\limits_{a'\in A(I^j_i)}\beta^i_{I^j_i}(a')(u^i((a,\varrho^i_{I^j_i}(\varpi(\beta^{-I^j_i},t),\tilde\beta))\land I^j_i)-u^i((a',\varrho^i_{I^j_i}(\varpi(\beta^{-I^j_i},t),\tilde\beta))\land I^j_i) )\\
\hspace{4.2cm}+t(\beta^{0i}_{I^j_i}(a)\sum\limits_{a'\in A(I^j_i)}\beta^i_{I^j_i}(a')-\beta^i_{I^j_i}(a))=0,\;
i\in N,j\in M_i,a\in A(I^j_i)\backslash\{a^0_{I^j_i}\},\\
(1-t)\tilde\beta^i_{I^j_i}(a)\sum\limits_{a'\in A(I^j_i)}\tilde\beta^i_{I^j_i}(a')(u^i(a,\varrho^i_{I^j_i}(\varpi(\beta^{-I^j_i},t),\tilde\beta),\mu|I^j_i)-u^i(a',\varrho^i_{I^j_i}(\varpi(\beta^{-I^j_i},t),\tilde\beta),\mu |I^j_i) )\\
\hspace{4.2cm}+t(\tilde\beta^{0i}_{I^j_i}(a)\sum\limits_{a'\in A(I^j_i)}\tilde\beta^i_{I^j_i}(a')-\tilde\beta^i_{I^j_i}(a))=0,\;i\in N,j\in M_i,a\in A(I^j_i)\backslash\{a^0_{I^j_i}\},\\
\xi^{0i}_{I^j_i}(h)\sum\limits_{h'\in I^j_i}\xi^i_{I^j_i}(h')-\xi^i_{I^j_i}(h)\sum\limits_{h'\in I^j_i}\xi^{0i}_{I^j_i}(h')=0,\;i\in N,j\in M_i,h\in I^j_i\backslash\{h^0_{I^j_i}\},\\
  ((1-t){\cal S}^i(I^j_i|\varpi(\beta,t))+t)\mu^i_{I^j_i}(h)-\xi^i_{I^j_i}(h)-(1-t){\cal S}^i(h|\varpi(\beta,t))=0,\;i\in N,j\in M_i,h\in I^j_i,\\
\sum\limits_{a\in A(I^j_i)}\beta^i_{I^j_i}(a)=1,\;\sum\limits_{a\in A(I^j_i)}\tilde\beta^i_{I^j_i}(a)=1,\; \sum\limits_{h\in I^j_i}\mu^i_{I^j_i}(h)=1,\;i\in N,j\in M_i,\\
0<\beta^i_{I^j_i}(a),\;0<\tilde\beta^i_{I^j_i}(a),\;0<\xi^i_{I^j_i}(h),\;i\in N, j\in M_i, a\in A(I^j_i), h\in I^j_i.
\end{array}
\end{equation}}

\noindent When $t=1$, the system~(\ref{Mlogbes4}) has a unique solution given by $(\beta^*(1), \tilde\beta^*(1),\mu^*(1), \xi^*(1))$ with $\beta^{*i}_{I^j_i}(1; a)=\beta^{0i}_{I^j_i}(a)$, $\tilde\beta^{*i}_{I^j_i}(1; a)=\tilde\beta^{0i}_{I^j_i}(a)$, $\mu^{*i}_{I^j_i}(1;h)=\mu^{0i}_{I^j_i}(h)$, and $\xi^{*i}_{I^j_i}(1;h)=\xi^{0i}_{I^j_i}(h)$.

Let $\widetilde{\mathscr{S}}_L$ be the set of all $(\beta,\tilde\beta,\mu,\xi, t)$ satisfying the system~(\ref{Mlogbes4}) with $t>0$ and $\mathscr{S}_L$ the closure of $\widetilde{\mathscr{S}}_L$. One can get that $\mathscr{S}_L$ is a nonempty compact set. We denote by $\{(\beta^k,\tilde\beta^k,\mu^k,\xi^k,  t_k)\in\widetilde{\mathscr{S}}_L,k=1,2,\ldots\}$ a convergent sequence with $0<t_k\le 1$ and $(\beta^*,\tilde\beta^*,\mu^*,\xi^*, 0)=\lim\limits_{k\to\infty}(\beta^k,\tilde\beta^k,\mu^k,\xi^k,t_k)$. Taking the sum of equations in the fourth group of the system~(\ref{Mlogbes4}), we have $\sum\limits_{h\in I^j_i}\xi^i_{I^j_i}(h)=t$ due to the result of $\sum\limits_{h\in I^j_i}\mu^i_{I^j_i}(h)=1$. Then, $\lim\limits_{k\to\infty}\sum\limits_{h\in I^j_i}\xi^{ki}_{I^j_i}(h)=\lim\limits_{k\to\infty}\theta(t_k)=0$. Thus, $\xi^*=0$. Therefore,
as $k\to\infty$, since the system~(\ref{Mlogbes4}) is a polynomial system, we conclude that $(\beta^*,\tilde\beta^*,\mu^*)$ satisfies the system~(\ref{nsrne1}). Hence it follows from Theorem~\ref{nscthm1} that $\beta^*$ is 
 a Nash equilibrium.

Let $\widetilde{\mathscr{E}}_L$ denote the set of all $(\beta,\tilde\beta,\mu,t)$ satisfying the system~(\ref{Mlogbes4}) with $t>0$ and $\mathscr{E}_L$ the closure of $\widetilde{\mathscr{E}}_L$. 
An application of a well-known fixed point theorem in Mas-Colell~\cite{Mas-Colell (1974)} (see also, Herings~\cite{Herings (2000)}) shows that  $\mathscr{E}_L$ contains a unique connected component $\mathscr{E}_L^c$ such that $\mathscr{E}_L^c\cap(\triangle\times\triangle\times\Xi\times\{0\})\ne\emptyset$ and $\mathscr{E}_L^c\cap(\triangle\times\triangle\times\Xi\times\{1\})\ne\emptyset$.
Let $m_0=\sum\limits_{i\in N}\sum\limits_{j\in M_i}|A(I^j_i)|$ and $p_0=\sum\limits_{i\in N,\;j\in M_i}|I^j_i|$. 
%We denote by $\tilde{\cal G}_0$ the set of all $(\beta,\tilde\beta,\mu,\xi, t)$ satisfying the system~(\ref{Mlogbes4}) with $0<t\le 1$ and by ${\cal G}_0$ the closure of $\tilde{\cal G}_0$.
Let $g_0(\beta,\tilde\beta,\mu,\xi,t)$ denote the left-hand sides of equations in the system~(\ref{Mlogbes4}). Subtracting a perturbation term of $t(1-t)\alpha\in\mathbb{R}^{2m_0+2p_0}$ from $g_0(\beta,\tilde\beta,\mu,\xi,t)$, we arrive at the system,
\(g_0(\beta,\tilde\beta,\mu,\xi,t)-t(1-t)\alpha=0.\)
Let $g(\beta,\tilde\beta,\mu,\xi,t;\alpha)=g_0(\beta,\tilde\beta,\mu,\xi,t)-t(1-t)\alpha$. For any given $\alpha\in\mathbb{R}^{2m_0+2p_0}$, we denote $g_\alpha(\beta,\tilde\beta,\mu,\xi,t)=g(\beta,\tilde\beta,\mu,\xi,t;\alpha)$.
Let $\tilde {\cal G}_\alpha=\{(\beta,\tilde\beta,\mu,\xi,t)|g_\alpha(\beta,\tilde\beta,\mu,\xi,t)=0\text{ with }0< t\le 1\}$ and ${\cal G}_\alpha$ the closure of $\tilde {\cal G}_\alpha$.
One can obtain that ${\cal G}_\alpha$ is a compact set and $g(\beta,\tilde\beta,\mu,\xi,t;\alpha)$ is continuously differentiable on $\mathbb{R}^{2m_0}\times\mathbb{R}^{2p_0}\times (0,1)\times \mathbb{R}^{2m_0+2p_0}$ with $D_{\alpha}g(\beta,\tilde\beta,\mu,\xi,t;\alpha)=t(1-t)I_{2m_0+2p_0}$, where $I_{2m_0+2p_0}$ is an identity matrix. It is easy to see that, as $0<t<1$, $D_{\alpha}g(\beta,\tilde\beta,\mu,\xi,t;\alpha)$ is nonsingular. Furthermore, when $t=1$, $D_{\beta,\tilde\beta,\mu,\xi}g_0(\beta,\tilde\beta,\mu,\xi,1)$ is nonsingular. When $\|\alpha\|$ is sufficiently small,  the continuity of $g(\beta,\tilde\beta,\mu,\xi,t;\alpha)$ ensures us that there is a unique connected component in ${\cal G}_\alpha$ intersecting both $\mathbb{R}^{2m_0}\times\mathbb{R}^{2p_0}\times\{1\}$ and $\mathbb{R}^{2m_0}\times\mathbb{R}^{2p_0}\times\{0\}$.
These results together with the Transversality Theorem in Eaves and Schmedders~\cite{Eaves and Schmedders (1999)} lead us to the following conclusion.
 For generic choice of $\alpha$ with sufficiently small $\|\alpha\|$, there exists a smooth path  $P_{\alpha}\subseteq {\cal G}_{\alpha}$ that starts from the unique solution $(\beta^*(1),\tilde\beta^*(1),\mu^*(1),\xi^*(1),1)$ on the level of $t=1$ and ends at a Nash equilibrium of $\Gamma$ on the target level of $t=0$.

As an application of Definition~\ref{edsgpe1}, we secure in a similar way to the above developments an equilibrium system that specifies a logarithmic-barrier smooth path to a subgame perfect equilibrium, which is as follows, 
 {\footnotesize
\begin{equation}\setlength{\abovedisplayskip}{1.2pt}
\setlength{\belowdisplayskip}{1.2pt}
\label{Msglogbes1}
\begin{array}{l}
(1-t)\beta^i_{I^j_i}(a)\sum\limits_{a'\in A(I^j_i)}\beta^i_{I^j_i}(a')(u^i((a,\varrho^i_{I^j_i}(\varpi(\beta^{-I^j_i},t),\tilde\beta))\Diamond I^j_i)\\
\hspace{0.8cm}-\sum\limits_{a'\in A(I^j_i)}\beta^i_{I^j_i}(a')u^i((a',\varrho^i_{I^j_i}(\varpi(\beta^{-I^j_i},t),\tilde\beta))\Diamond I^j_i) )+t(\beta^{0i}_{I^j_i}(a)\sum\limits_{a'\in A(I^j_i)}\beta^i_{I^j_i}(a')-\beta^i_{I^j_i}(a))=0,\\
\hspace{10.4cm}i\in N,j\in M_i,a\in A(I^j_i)\backslash\{a^0_{I^j_i}\},\\
(1-t)\tilde\beta^i_{I^j_i}(a)\sum\limits_{a'\in A(I^j_i)}\tilde\beta^i_{I^j_i}(a')(u^i(a,\varrho^i_{I^j_i}(\varpi(\beta^{-I^j_i},t),\tilde\beta),\mu|I^j_i)\\
\hspace{0.8cm}-\sum\limits_{a'\in A(I^j_i)}\tilde\beta^i_{I^j_i}(a')u^i(a',\varrho^i_{I^j_i}(\varpi(\beta^{-I^j_i},t),\tilde\beta),\mu |I^j_i))+t(\tilde\beta^{0i}_{I^j_i}(a)\sum\limits_{a'\in A(I^j_i)}\tilde\beta^i_{I^j_i}(a')-\tilde\beta^i_{I^j_i}(a))=0,\\
\hspace{10.4cm}i\in N,j\in M_i,a\in A(I^j_i)\backslash\{a^0_{I^j_i}\},\\
\xi^{0i}_{I^j_i}(h)\sum\limits_{h'\in I^j_i}\xi^i_{I^j_i}(h')-\xi^i_{I^j_i}(h)\sum\limits_{h'\in I^j_i}\xi^{0i}_{I^j_i}(h')=0,\;i\in N,j\in M_i,h\in I^j_i\backslash\{h^0_{I^j_i}\},\\
  ((1-t){\cal Y}^i(I^j_i|\varpi(\beta,t))+t)\mu^i_{I^j_i}(h)-\xi^i_{I^j_i}(h)-(1-t){\cal Y}^i(h|\varpi(\beta,t))=0,\;i\in N,j\in M_i,h\in I^j_i,\\
\sum\limits_{a\in A(I^j_i)}\beta^i_{I^j_i}(a)=1,\;\sum\limits_{a\in A(I^j_i)}\tilde\beta^i_{I^j_i}(a)=1,\; \sum\limits_{h\in I^j_i}\mu^i_{I^j_i}(h)=1,\;i\in N,j\in M_i,\\
0<\beta^i_{I^j_i}(a),\;0<\tilde\beta^i_{I^j_i}(a),\;0<\xi^i_{I^j_i}(h),\;i\in N, j\in M_i, a\in A(I^j_i), h\in I^j_i.
\end{array}
\end{equation}
}

One can adopt a standard predictor-corrector method as outlined in Eaves and Schmedders~\cite{Eaves and Schmedders (1999)}  for numerically tracing the smooth paths specified by the system~(\ref{Mlogbes4}) and the system~(\ref{Msglogbes1}) to a NashEBS and a subgame perfect equilibrium, respectively. The smooth paths specified by the system~(\ref{Mlogbes4}) and the system~(\ref{Msglogbes1}) for the games in Figs.~\ref{fexm1}-\ref{fexm2} are given in Figs.~\ref{fexm1MlogbA}-\ref{fexm2sgMlogbB}.
 \begin{figure}[H]
    \centering
    \begin{minipage}{0.49\textwidth}
        \centering
        \includegraphics[width=0.90\textwidth, height=0.12\textheight]{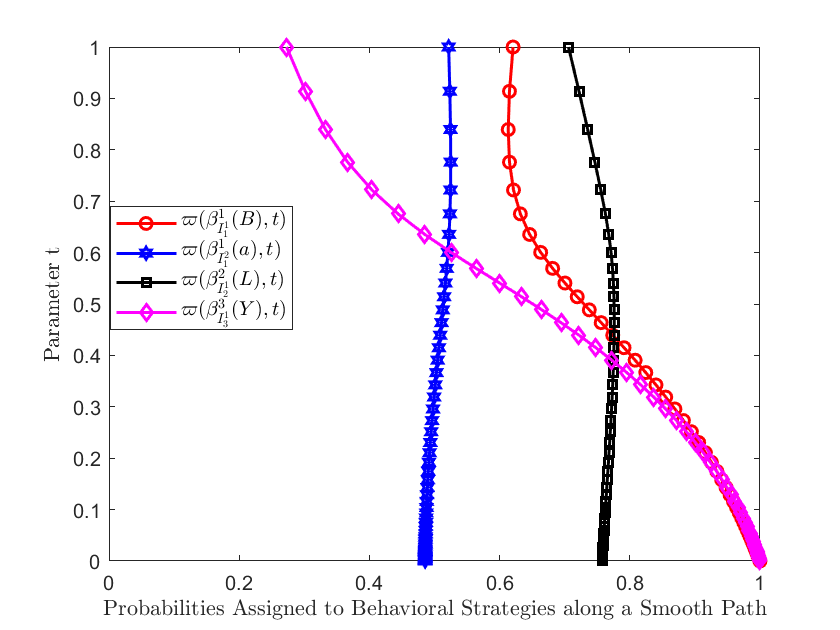}
        \caption{\label{fexm1MlogbA}\scriptsize The Smooth Path of $\varpi(\beta,t)$ Specified by the System~(\ref{Mlogbes4}) for the Game in Fig.~\ref{fexm1}}
\end{minipage}\hfill
    \begin{minipage}{0.49\textwidth}
        \centering
        \includegraphics[width=0.90\textwidth, height=0.12\textheight]{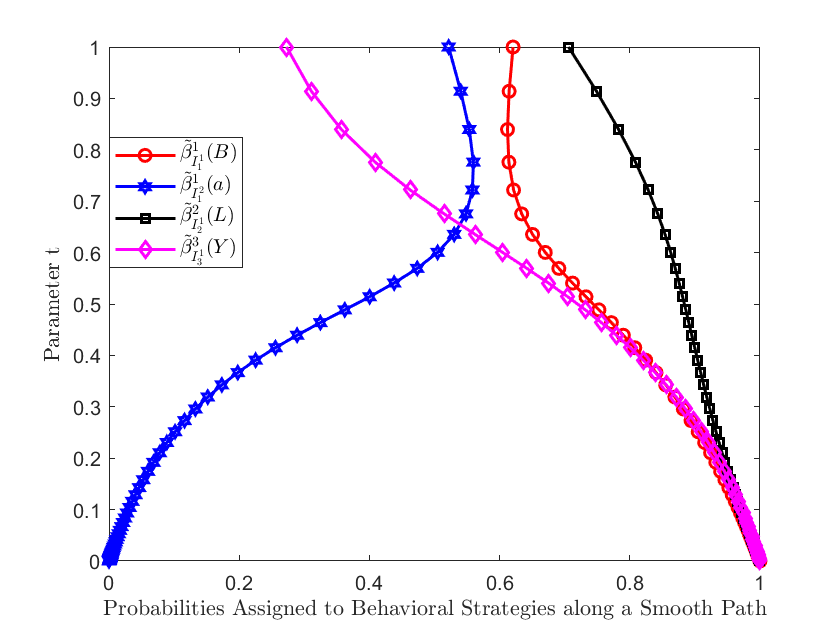}
        \caption{\label{fexm1MlogbB}\scriptsize The Smooth Path of $\tilde\beta$ Specified by the System~(\ref{Mlogbes4}) for the Game in Fig.~\ref{fexm1}}
\end{minipage}
  \end{figure}
  
  \begin{figure}[H]
    \centering
    \begin{minipage}{0.49\textwidth}
        \centering
        \includegraphics[width=0.90\textwidth, height=0.12\textheight]{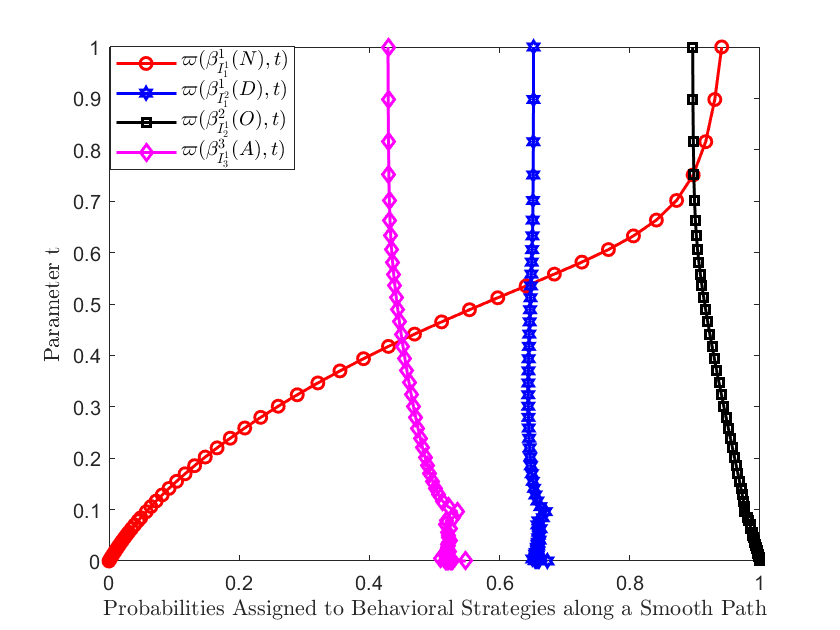}
        \caption{\label{fexm2MlogbA}\scriptsize The Smooth Path of $\varpi(\beta,t)$ Specified by the System~(\ref{Mlogbes4}) for the Game in Fig.~\ref{fexm2}}
               \end{minipage}\hfill
    \begin{minipage}{0.49\textwidth}
        \centering
        \includegraphics[width=0.90\textwidth, height=0.12\textheight]{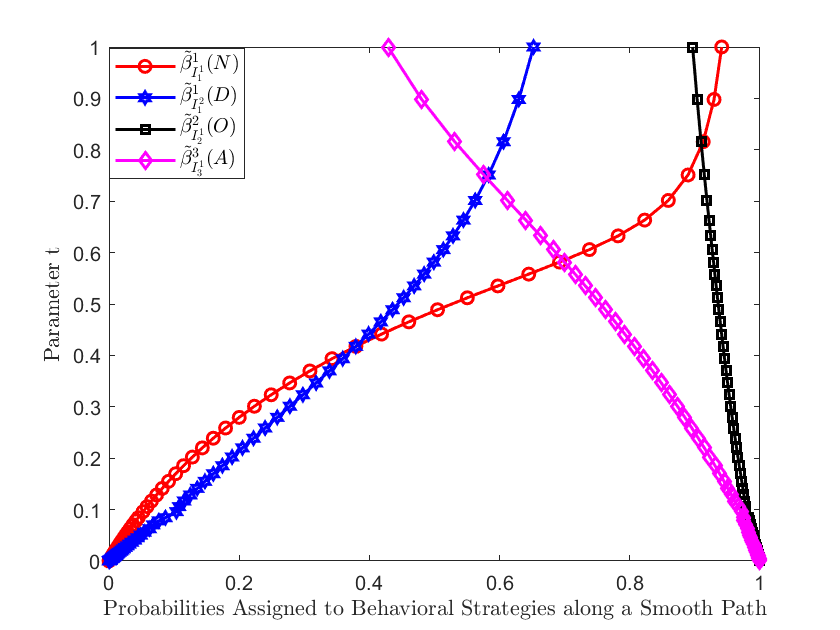}
        \caption{\label{fexm2MlogbB}\scriptsize The Smooth Path of $\tilde\beta$ Specified by the System~(\ref{Mlogbes4}) for the Game in Fig.~\ref{fexm2}}
\end{minipage}
\end{figure}

  \begin{figure}[H]
    \centering
    \begin{minipage}{0.49\textwidth}
        \centering
        \includegraphics[width=0.90\textwidth, height=0.12\textheight]{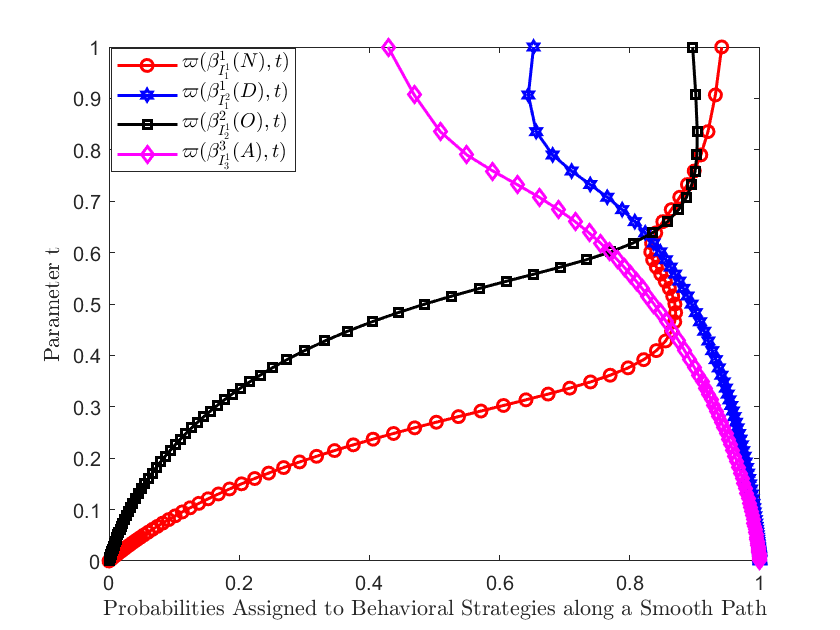}
        \caption{\label{fexm2sgMlogbA}\scriptsize The Smooth Path of $\varpi(\beta,t)$ Specified by the System~(\ref{Msglogbes1}) for the Game in Fig.~\ref{fexm2}}
               \end{minipage}\hfill
    \begin{minipage}{0.49\textwidth}
        \centering
        \includegraphics[width=0.90\textwidth, height=0.12\textheight]{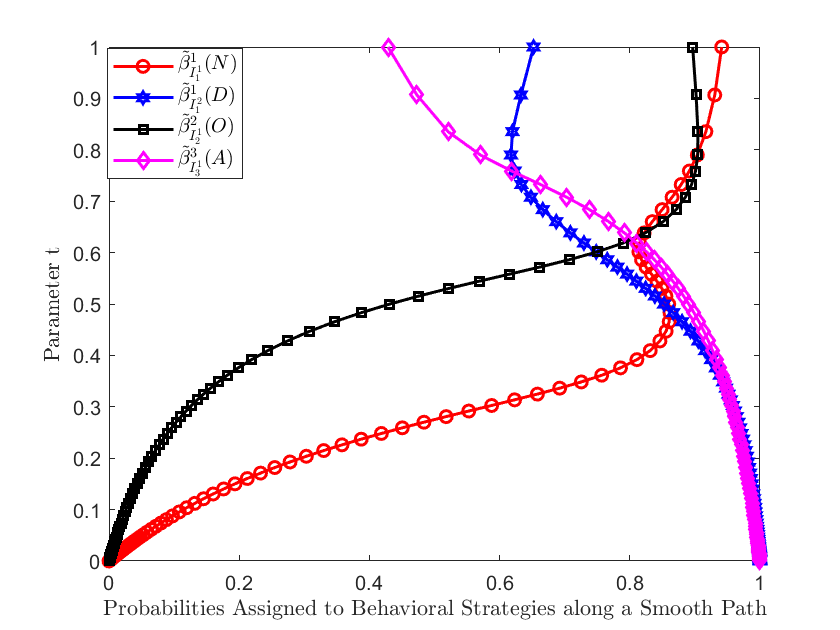}
        \caption{\label{fexm2sgMlogbB}\scriptsize The Smooth Path of $\tilde\beta$ Specified by the System~(\ref{Msglogbes1}) for the Game in Fig.~\ref{fexm2}}
\end{minipage}
\end{figure}

One can observe from Figs.~\ref{fexm1MlogbA} and~\ref{fexm1MlogbB} and from Figs.~\ref{fexm2MlogbA} and~\ref{fexm2MlogbB} distinct trends in the two smooth paths of $\varpi(\beta,t)$ and $\tilde\beta$, which lead to diverse solutions. 
Figs.~\ref{fexm2MlogbA} and~\ref{fexm2sgMlogbA} demonstrate that the paths of $\varpi(\beta,t)$ specified by the system~(\ref{Mlogbes4}) and the system~(\ref{Msglogbes1}) start from the same point but lead to different solutions: The path in Fig.~\ref{fexm2MlogbA} ends at a NashEBS but not subgame perfect, while that in Fig.~\ref{fexm2sgMlogbA} reaches a subgame perfect equilibrium.

\subsection{Convex-Quadratic-Penalty Smooth Paths}

In this subsection, we present alternative schemes to construct smooth paths to a NashEBS and a subgame perfect equilibrium, which can be regarded as exterior-point approaches. For $t\in (0,1]$, we constitute with $\varpi(\beta,t)$ a convex-quadratic-penalty extensive-form game $\Gamma_C(t)$ in which player $i$ at his information set $I^j_i$ solves against a given $(\hat\beta,\hat{\tilde\beta},\hat\mu)$ the strictly convex optimization problem,
{\footnotesize
\begin{equation}\setlength{\abovedisplayskip}{1.2pt}
\setlength{\belowdisplayskip}{1.2pt}
\label{cqupop1}
\begin{array}{rl}
\max\limits_{\beta^i_{I^j_i},\;\tilde\beta^i_{I^j_i},\;\mu^i_{I^j_i},\xi^i_{I^j_i}} & (1-t)\sum\limits_{a\in A(I^j_i)}(\beta^i_{I^j_i}(a)u^i((a,\varrho^i_{I^j_i}(\varpi(\hat\beta^{-I^j_i},t),\hat{\tilde\beta}))\land I^j_i)\\
& +\tilde\beta^i_{I^j_i}(a)u^i(a,\varrho^i_{I^j_i}(\varpi(\hat\beta^{-I^j_i},t),\hat{\tilde\beta}),\hat\mu| I^j_i))\\
& -\frac{1}{2}t\sum\limits_{a\in A(I^j_i)}((\beta^i_{I^j_i}(a)-\beta^{0i}_{I^j_i}(a))^2+(\tilde\beta^i_{I^j_i}(a)-\tilde\beta^{0i}_{I^j_i}(a))^2)
 -\frac{1}{2}t\sum\limits_{h\in I^j_i}(\xi^i_{I^j_i}(h)-\xi^{0i}_{I^j_i}(h))^2\\
\text{s.t.} & \sum\limits_{a\in A(I^j_i)}\beta^i_{I^j_i}(a)=1,\;\sum\limits_{a\in A(I^j_i)}\tilde\beta^i_{I^j_i}(a)=1,\;\sum\limits_{h\in I^j_i}\mu^i_{I^j_i}(h)=1,\\
& ((1-t){\cal S}^i(I^j_i|\varpi(\hat\beta,t))+t)\mu^i_{I^j_i}(h)-\xi^i_{I^j_i}(h)=(1-t){\cal S}^i(h|\varpi(\hat\beta,t)),\;h\in I^j_i,\\
& 0\le \beta^i_{I^j_i}(a),\;0\le\tilde\beta^i_{I^j_i}(a),\;0\le \xi^i_{I^j_i}(h),\;a\in A(I^j_i),h\in I^j_i.
\end{array}
\end{equation}
}An application of the optimality conditions to the problem~(\ref{cqupop1}) together with the equilibrium condition of $(\beta,\tilde\beta,\mu)=(\hat\beta,\hat{\tilde\beta},\hat\mu)$ yields the equilibrium system of $\Gamma_C(t)$,
{\footnotesize\begin{equation}\setlength{\abovedisplayskip}{1.2pt}
\setlength{\belowdisplayskip}{1.2pt}
\label{cqupes1}
\begin{array}{l}
(1-t)u^i((a,\varrho^i_{I^j_i}(\varpi(\beta^{-I^j_i},t),\tilde\beta))\land I^j_i)+ \lambda^i_{I^j_i}(a)-t(\beta^i_{I^j_i}(a)-\beta^{0i}_{I^j_i}(a))-\zeta^i_{I^j_i}=0,\\
\hspace{10.6cm}i\in N,j\in M_i,a\in A(I^j_i),\\
(1-t)u^i(a,\varrho^i_{I^j_i}(\varpi(\beta^{-I^j_i},t),\tilde\beta),\mu|I^j_i)+\tilde\lambda^i_{I^j_i}(a) -t(\tilde\beta^i_{I^j_i}(a)-\tilde\beta^{0i}_{I^j_i}(a))-\tilde\zeta^i_{I^j_i}=0,\\
\hspace{10.6cm}i\in N,j\in M_i,a\in A(I^j_i),\\
(\rho^i_{I^j_i}(h)-t(\xi^i_{I^j_i}(h)-\xi^{0i}_{I^j_i}(h)))((1-t){\cal S}^i(I^j_i|\varpi(\beta,t))+t)-\sigma^i_{I^j_i}=0,\;i\in N,j\in M_i,h\in I^j_i,\\
  ((1-t){\cal S}^i(I^j_i|\varpi(\beta,t))+t)\mu^i_{I^j_i}(h)-\xi^i_{I^j_i}(h)-(1-t){\cal S}^i(h|\varpi(\beta,t))=0,\;i\in N,j\in M_i,h\in I^j_i,\\
\sum\limits_{a\in A(I^j_i)}\beta^i_{I^j_i}(a)=1,\;\sum\limits_{a\in A(I^j_i)}\tilde\beta^i_{I^j_i}(a)=1,\; \sum\limits_{h\in I^j_i}\mu^i_{I^j_i}(h)-1=0,\;i\in N,j\in M_i,\\
 \beta^i_{I^j_i}(a)\lambda^i_{I^j_i}(a)=0,\;\tilde\beta^i_{I^j_i}(a)\tilde\lambda^i_{I^j_i}(a)=0,\;\rho^i_{I^j_i}(h)\xi^i_{I^j_i}(h)=0,\;
 0\le \beta^i_{I^j_i}(a),\;0\le\tilde\beta^i_{I^j_i}(a),\;0\le \xi^i_{I^j_i}(h),\\
 
 0\le \lambda^i_{I^j_i}(a),\;0\le\tilde\lambda^i_{I^j_i}(a),\;0\le \rho^i_{I^j_i}(h),\;
 i\in N,j\in M_i,
 a\in A(I^j_i),h\in I^j_i.
\end{array}
\end{equation}}To secure from the system~(\ref{cqupes1}) a smooth path to a Nash equilibrium, we need to eliminate those complementarity equations and inequalities in the system~(\ref{cqupes1}) through the variable transformations outlined in Herings and Peeters~\cite{Herings and Peeters (2001)}.
Let $\phi_1(v)=(\frac{v+\sqrt{v^2}}{2})^2$ and $\phi_2(v)=(\frac{v-\sqrt{v^2}}{2})^2$. Then, $\phi_1(v)\phi_2(v)=0$. Let $\beta^i_{I^j_i}(z;a)=\phi_1(z^i_{I^j_i}(a))$, $\lambda^i_{I^j_i}(z;a)=\phi_2(z^i_{I^j_i}(a))$,  $\tilde\beta^i_{I^j_i}(\tilde z;a)=\phi_1(\tilde z^i_{I^j_i}(a))$,  $\tilde\lambda^i_{I^j_i}(\tilde z;a)=\phi_2(\tilde z^i_{I^j_i}(a))$, $\xi^i_{I^j_i}(w; h)=\phi_1(w^i_{I^j_i}(h))$, and $\rho^i_{I^j_i}(w; h)=\phi_2(w^i_{I^j_i}(h))$. Substituting $\beta^i_{I^j_i}(z;a)$, $\lambda^i_{I^j_i}(z;a)$, $\tilde\beta^i_{I^j_i}(\tilde z;a)$, $\tilde\lambda^i_{I^j_i}(\tilde z;a)$, $\xi^i_{I^j_i}(w; h)$, and $\rho^i_{I^j_i}(w; h)$ into the system~(\ref{cqupes1}) for  $\beta^i_{I^j_i}(a)$, $\lambda^i_{I^j_i}(a)$, $\tilde\beta^i_{I^j_i}(a)$, $\tilde\lambda^i_{I^j_i}(a)$, $\xi^i_{I^j_i}(h)$, and $\rho^i_{I^j_i}(h)$, we come to the system,
{\footnotesize
\begin{equation}\setlength{\abovedisplayskip}{1.2pt}
\setlength{\belowdisplayskip}{1.2pt}
\label{cqupes2a}
\begin{array}{l}
(1-t)u^i((a,\varrho^i_{I^j_i}(\varpi(\beta^{-I^j_i}(z),t),\tilde\beta(\tilde z)))\land I^j_i)+ \lambda^i_{I^j_i}(z;a)-t(\beta^i_{I^j_i}(z;a)-\beta^{0i}_{I^j_i}(a))-\zeta^i_{I^j_i}=0,\\
\hspace{11.6cm}i\in N,j\in M_i,a\in A(I^j_i),\\
(1-t)u^i(a,\varrho^i_{I^j_i}(\varpi(\beta^{-I^j_i}(z),t),\tilde\beta(\tilde z)),\mu|I^j_i)+\tilde\lambda^i_{I^j_i}(\tilde z; a) -t(\tilde\beta^i_{I^j_i}(\tilde z; a)-\tilde\beta^{0i}_{I^j_i}(a))-\tilde\zeta^i_{I^j_i}=0,\\
\hspace{11.6cm}\;i\in N,j\in M_i,a\in A(I^j_i),\\
(\rho^i_{I^j_i}(w; h)-t(\xi^i_{I^j_i}(w; h)-\xi^{0i}_{I^j_i}(h)))((1-t){\cal S}^i(I^j_i|\varpi(\beta(z),t))+t)-\sigma^i_{I^j_i}=0,\;i\in N,j\in M_i,h\in I^j_i,\\
  ((1-t){\cal S}^i(I^j_i|\varpi(\beta(z),t))+t)\mu^i_{I^j_i}(h)-\xi^i_{I^j_i}(w; h)-(1-t){\cal S}^i(h|\varpi(\beta(z),t))=0,\;i\in N,j\in M_i,h\in I^j_i,\\
\sum\limits_{a\in A(I^j_i)}\beta^i_{I^j_i}(z;a)=1,\;\sum\limits_{a\in A(I^j_i)}\tilde\beta^i_{I^j_i}(\tilde z; a)=1,\; \sum\limits_{h\in I^j_i}\mu^i_{I^j_i}(h)-1=0,\;i\in N,j\in M_i.
\end{array}
\end{equation}
}

\noindent Let $a^0_{I^j_i}$ be a given reference action in $A(I^j_i)$ and $h^0_{I^j_i}$ a given reference history in $I^j_i$. As a result of subtractions, we get from the system~(\ref{cqupes2a}) an equivalent system with fewer variables,
{\footnotesize
\begin{equation}\setlength{\abovedisplayskip}{1.2pt}
\setlength{\belowdisplayskip}{1.2pt}
\label{cqupes2}
\begin{array}{l}
(1-t)(u^i((a,\varrho^i_{I^j_i}(\varpi(\beta^{-I^j_i}(z),t),\tilde\beta(\tilde z)))\land I^j_i)-u^i((a^0_{I^j_i},\varrho^i_{I^j_i}(\varpi(\beta^{-I^j_i}(z),t),\tilde\beta(\tilde z)))\land I^j_i))\\
\hspace{3cm}
+ \lambda^i_{I^j_i}(z;a)-\lambda^i_{I^j_i}(z;a^0_{I^j_i})-t(\beta^i_{I^j_i}(z;a)-\beta^i_{I^j_i}(z;a^0_{I^j_i})-(\beta^{0i}_{I^j_i}(a)-\beta^{0i}_{I^j_i}(a^0_{I^j_i}))=0,\\
\hspace{10.6cm}i\in N,j\in M_i,a\in A(I^j_i)\backslash\{a^0_{I^j_i}\},\\

(1-t)(u^i(a,\varrho^i_{I^j_i}(\varpi(\beta^{-I^j_i}(z),t),\tilde\beta(\tilde z)),\mu|I^j_i)-u^i(a^0_{I^j_i},\varrho^i_{I^j_i}(\varpi(\beta^{-I^j_i}(z),t),\tilde\beta(\tilde z)),\mu|I^j_i))
\\
\hspace{3cm}+\tilde\lambda^i_{I^j_i}(\tilde z; a)-\tilde\lambda^i_{I^j_i}(\tilde z; a^0_{I^j_i})
 -t(\tilde\beta^i_{I^j_i}(\tilde z; a)-\tilde\beta^i_{I^j_i}(\tilde z; a^0_{I^j_i})-(\tilde\beta^{0i}_{I^j_i}(a)- \tilde\beta^{0i}_{I^j_i}(a^0_{I^j_i})))=0,\\
\hspace{10.6cm}i\in N,j\in M_i,a\in A(I^j_i)\backslash\{a^0_{I^j_i}\},\\
\rho^i_{I^j_i}(w; h)-\rho^i_{I^j_i}(w; h^0_{I^j_i})-t(\xi^i_{I^j_i}(w; h)-\xi^i_{I^j_i}(w; h^0_{I^j_i})-(\xi^{0i}_{I^j_i}(h)-\xi^{0i}_{I^j_i}(h^0_{I^j_i})))=0,\\
\hspace{10.6cm}i\in N,j\in M_i,h\in I^j_i\backslash\{h^0_{I^j_i}\},\\
  ((1-t)\omega^i(I^j_i|\varpi(\beta(z),t))+t)\mu^i_{I^j_i}(h)-\xi^i_{I^j_i}(w; h)-\omega^i(h|\varpi(\beta(z),t))=0,\; i\in N,j\in M_i,h\in I^j_i,\\
\sum\limits_{a\in A(I^j_i)}\beta^i_{I^j_i}(z;a)=1,\;\sum\limits_{a\in A(I^j_i)}\tilde\beta^i_{I^j_i}(\tilde z; a)=1,\; \sum\limits_{h\in I^j_i}\mu^i_{I^j_i}(h)-1=0,\;i\in N,j\in M_i.
\end{array}
\end{equation}
}When $t=1$, the system~(\ref{cqupes2}) has a unique solution given by $(z^*(1),\tilde z^*(1),\mu^*(1),w^*(1))$ with $z^{*i}_{I^j_i}(1; a)=\sqrt{\beta^{0i}_{I^j_i}(a)}$, $\tilde z^{*i}_{I^j_i}(1; a)=\sqrt{\tilde\beta^{0i}_{I^j_i}(a)}$, $\mu^{*i}_{I^j_i}(1; h)=\mu^{0i}_{I^j_i}(h)$, and $w^{*i}_{I^j_i}(1; h)=\sqrt{\xi^{0i}_{I^j_i}(h)}$.
Following a similar argument as that in Section 4.1, one can attain that the system~(\ref{cqupes2}) specifies a unique smooth path that starts from $(z^*(1),\tilde z^*(1),\mu^*(1),w^*(1))$ on the level of $t=1$ and ends at a Nash equilibrium on the level of $t=0$.

As a result of Definition~\ref{edsgpe1}, we attain in a similar way to the above development an equilibrium system that specifies a convex-quadratic-penalty smooth path to a subgame perfect equilibrium,
{\footnotesize
\begin{equation}\setlength{\abovedisplayskip}{1.2pt}
\setlength{\belowdisplayskip}{1.2pt}
\label{sgcqupes1}
\begin{array}{l}
(1-t)(u^i((a,\varrho^i_{I^j_i}(\varpi(\beta^{-I^j_i}(z),t),\tilde\beta(\tilde z)))\Diamond I^j_i)-u^i((a^0_{I^j_i},\varrho^i_{I^j_i}(\varpi(\beta^{-I^j_i}(z),t),\tilde\beta(\tilde z)))\Diamond I^j_i))\\
\hspace{3cm}
+ \lambda^i_{I^j_i}(z;a)-\lambda^i_{I^j_i}(z;a^0_{I^j_i})-t(\beta^i_{I^j_i}(z;a)-\beta^i_{I^j_i}(z;a^0_{I^j_i})-(\beta^{0i}_{I^j_i}(a)-\beta^{0i}_{I^j_i}(a^0_{I^j_i}))=0,\\
\hspace{10.6cm}i\in N,j\in M_i,a\in A(I^j_i)\backslash\{a^0_{I^j_i}\},\\

(1-t)(u^i(a,\varrho^i_{I^j_i}(\varpi(\beta^{-I^j_i}(z),t),\tilde\beta(\tilde z)),\mu|I^j_i)-u^i(a^0_{I^j_i},\varrho^i_{I^j_i}(\varpi(\beta^{-I^j_i}(z),t),\tilde\beta(\tilde z)),\mu|I^j_i))
\\
\hspace{3cm}+\tilde\lambda^i_{I^j_i}(\tilde z; a)-\tilde\lambda^i_{I^j_i}(\tilde z; a^0_{I^j_i})-t(\tilde\beta^i_{I^j_i}(\tilde z; a)-\tilde\beta^i_{I^j_i}(\tilde z; a^0_{I^j_i})-(\tilde\beta^{0i}_{I^j_i}(a)- \tilde\beta^{0i}_{I^j_i}(a^0_{I^j_i})))=0,\\
\hspace{10.6cm}i\in N,j\in M_i,a\in A(I^j_i)\backslash\{a^0_{I^j_i}\},\\
\rho^i_{I^j_i}(w; h)-\rho^i_{I^j_i}(w; h^0_{I^j_i})-t(\xi^i_{I^j_i}(w; h)-\xi^i_{I^j_i}(w; h^0_{I^j_i})-(\xi^{0i}_{I^j_i}(h)-\xi^{0i}_{I^j_i}(h^0_{I^j_i})))=0,\\
\hspace{10.6cm}i\in N,j\in M_i,h\in I^j_i\backslash\{h^0_{I^j_i}\},\\
  ((1-t){\cal Y}(I^j_i|\varpi(\beta(z),t))+t)\mu^i_{I^j_i}(h)-\xi^i_{I^j_i}(w; h)-(1-t)\varsigma^i(h|\varpi(\beta(z),t))=0,\\
  \hspace{11cm} i\in N,j\in M_i,h\in I^j_i,\\
\sum\limits_{a\in A(I^j_i)}\beta^i_{I^j_i}(z;a)=1,\;\sum\limits_{a\in A(I^j_i)}\tilde\beta^i_{I^j_i}(\tilde z; a)=1,\; \sum\limits_{h\in I^j_i}\mu^i_{I^j_i}(h)-1=0,\;i\in N,j\in M_i.
\end{array}
\end{equation}
}

A standard predictor-corrector method can be utilized to numerically follow the smooth paths specified by the system~(\ref{cqupes2}) and the system~(\ref{sgcqupes1}) to a NashEBS and a subgame perfect equilibrium, respectively. The smooth paths specified by the system~(\ref{cqupes2}) and the system~(\ref{sgcqupes1}) for the games in Figs.~\ref{fexm1}-\ref{fexm2} are given in Figs.~\ref{fexm1cqupA}-\ref{fexm2sgcqupB}.
 \begin{figure}[H]
    \centering
    \begin{minipage}{0.49\textwidth}
        \centering
        \includegraphics[width=0.90\textwidth, height=0.12\textheight]{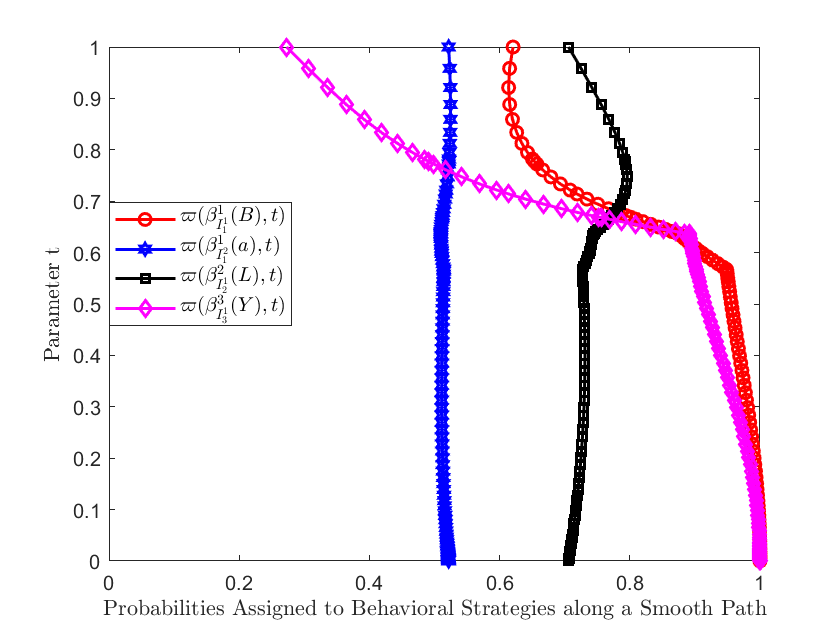}
        \caption{\label{fexm1cqupA}\scriptsize The Smooth Path of $\varpi(\beta,t)$ Specified by the System~(\ref{cqupes2}) for the Game in Fig.~\ref{fexm1}}
\end{minipage}\hfill
    \begin{minipage}{0.49\textwidth}
        \centering
        \includegraphics[width=0.90\textwidth, height=0.12\textheight]{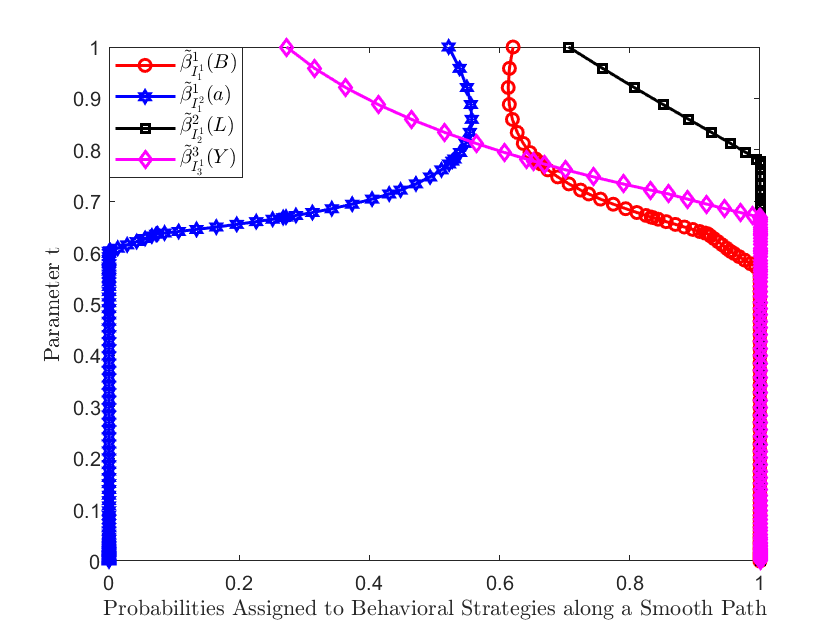}
        \caption{\label{fexm1cqupB}\scriptsize The Smooth Path of $\tilde\beta$ Specified by the System~(\ref{cqupes2}) for the Game in Fig.~\ref{fexm1}}
\end{minipage}
  \end{figure}
  
  \begin{figure}[H]
    \centering
    \begin{minipage}{0.49\textwidth}
        \centering
        \includegraphics[width=0.90\textwidth, height=0.12\textheight]{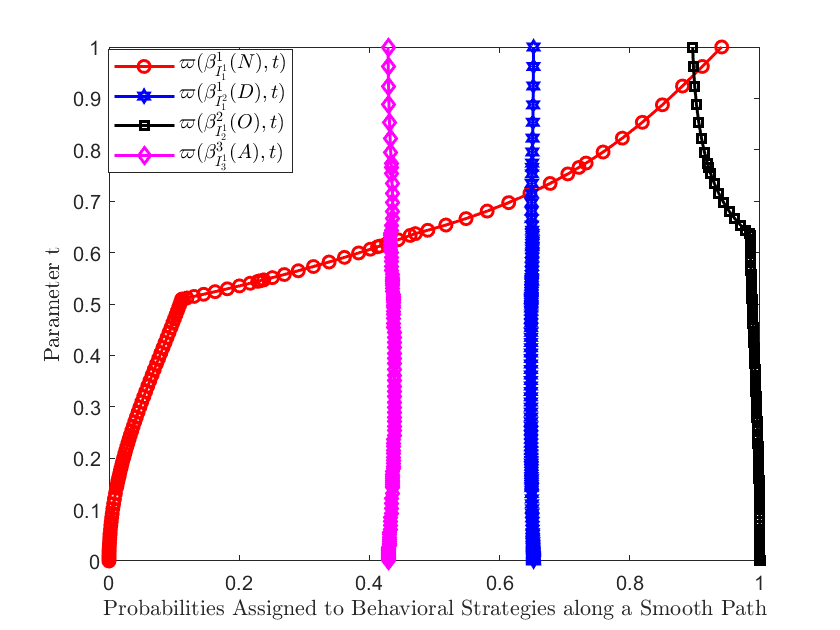}
        \caption{\label{fexm2cqupA}\scriptsize The Smooth Path of $\varpi(\beta,t)$ Specified by the System~(\ref{cqupes2}) for the Game in Fig.~\ref{fexm2}}
               \end{minipage}\hfill
    \begin{minipage}{0.49\textwidth}
        \centering
        \includegraphics[width=0.90\textwidth, height=0.12\textheight]{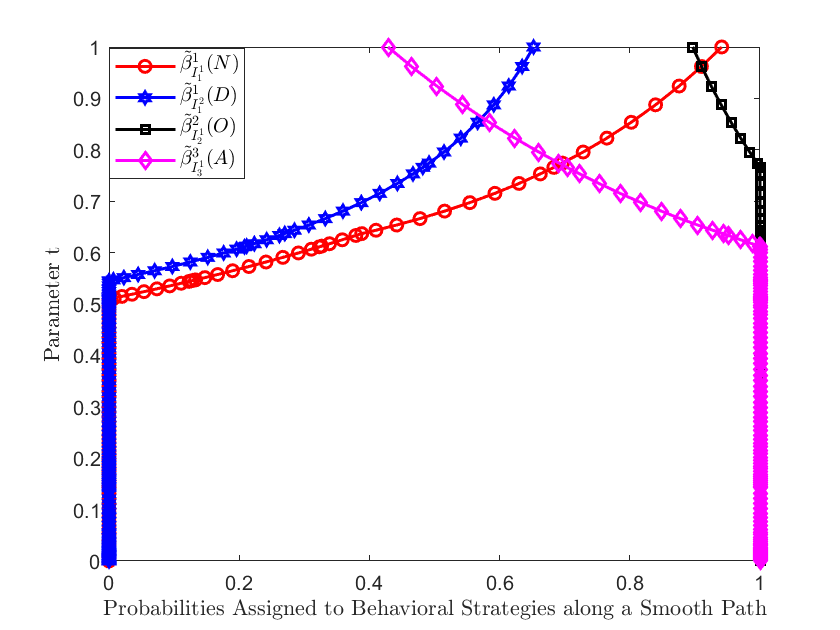}
        \caption{\label{fexm2cqupB}\scriptsize The Smooth Path of $\tilde\beta$ Specified by the System~(\ref{cqupes2}) for the Game in Fig.~\ref{fexm2}}
\end{minipage}
\end{figure}

  \begin{figure}[H]
    \centering
    \begin{minipage}{0.49\textwidth}
        \centering
        \includegraphics[width=0.90\textwidth, height=0.12\textheight]{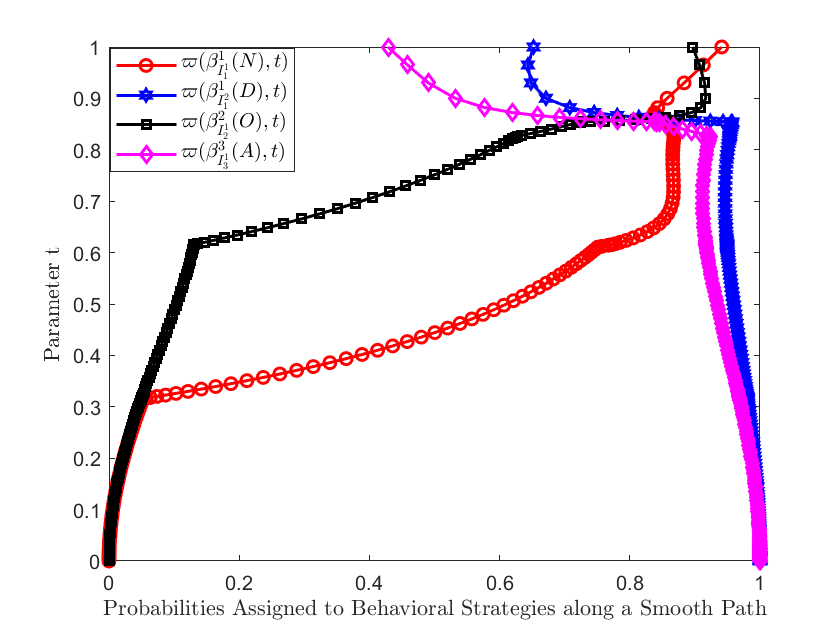}
        \caption{\label{fexm2sgcqupA}\scriptsize The Smooth Path of $\varpi(\beta,t)$ Specified by the System~(\ref{sgcqupes1}) for the Game in Fig.~\ref{fexm2}}
               \end{minipage}\hfill
    \begin{minipage}{0.49\textwidth}
        \centering
        \includegraphics[width=0.90\textwidth, height=0.12\textheight]{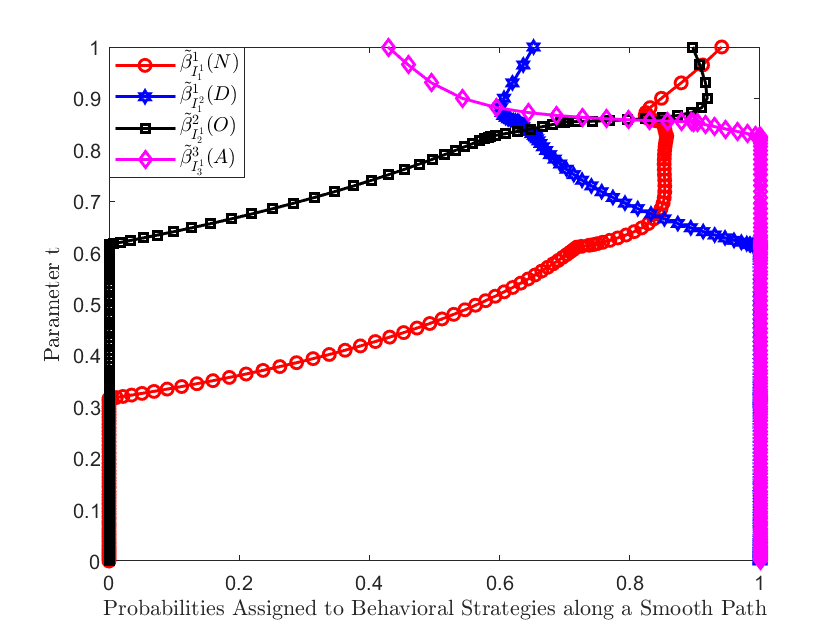}
        \caption{\label{fexm2sgcqupB}\scriptsize The Smooth Path of $\tilde\beta$ Specified by the System~(\ref{sgcqupes1}) for the Game in Fig.~\ref{fexm2}}
\end{minipage}
\end{figure}

One can observe from Figs.~\ref{fexm1cqupA} and~\ref{fexm1cqupB} and from Figs.~\ref{fexm2cqupA} and~\ref{fexm2cqupB} distinct trends in the two smooth paths of $\varpi(\beta,t)$ and $\tilde\beta$, which lead to diverse solutions. 
Figs.~\ref{fexm2cqupA} and~\ref{fexm2sgcqupA} demonstrate that the paths of $\varpi(\beta,t)$ specified by the system~(\ref{cqupes2}) and the system~(\ref{sgcqupes1}) start from the same point but lead to different solutions: The path in Fig.~\ref{fexm2cqupA} ends at a NashEBS but not subgame perfect, while that in Fig.~\ref{fexm2sgcqupA} reaches a subgame perfect equilibrium.

\section{Numerical Experiments}

We have adapted a standard predictor-corrector method as outlined in Eaves and Schmedders~\cite{Eaves and Schmedders (1999)} for numerically tracing the smooth paths specified by the systems~(\ref{Mlogbes4}) and~(\ref{cqupes2}) in this paper. The predictor-corrector methods have been coded in MATLAB.\footnote{The parameter values in the method are set as follows: the predictor step size $=$ $0.1*10^{0.2\ln{t}}$ and accuracy of a starting point for a corrector step $=$  $0.1*10^{0.5\ln{t}}$. The methods terminate as soon as the criterion of $t<10^{-5}$ is met.} The computation was carried out on Windows Server for Intel(R) Xeon(R) Gold 6426Y  2.50GHz  (2 processors) RAM 512GB.

%%%%%%%%%%%%%%%%%%%%%%%%%%%%%%%%%%%%%%%%%%%%

To show the efficiency of the smooth paths specified by the systems~(\ref{Mlogbes4}) and~(\ref{cqupes2}), we have employed the methods to compute NashEBSs for three types of randomly generated large games, namely Type A, Type B, and Type C.\footnote{ In games with structure A, both players $1$ and $2$ have one information set. The number of information sets for player $i$ with $i\ge 3$ equals $\prod_{k=1}^{i-2}|A(I_{k})|$. In games with structure B, player $1$ has one information set. The number of information sets for player $i$ with $i\ge 2$ equals $|A(I_{i-1})|$. A Type C game shares a similar structure with a Type A game, with the distinction that Type C games consist of multiple layers, denoted as $L$. Within each layer, actions are sequentially taken by players from player $1$ to player $n$.} 
%When a game has three players and each player at an information set has three actions, the game trees of Type A and Type B are illustrated in Fig.~\ref{RG1} and  Fig.~\ref{RG2}, respectively.
In our numerical experiments, each player has the same number of actions in all of his information sets, and each payoff along a terminal history is an integer uniformly drawn from $-10$ to $10$ and is assigned to zero with a random probability between $0$ and $50\%$. Within each game characterized by an explicit choice of $(n,m_i,|A(I_i)|)$ and $(n,m_i,|A(I_i)|, L)$, 10 examples were generated and solved. We denote by LOGM the method with the system~(\ref{Mlogbes4}) and by CQPM the method with the system~(\ref{cqupes2}). The efficiency of the methods is measured with the number of iterations and computational time.
The computational time (in seconds) and the number of iterations are presented in Tables~\ref{Table0}-\ref{Table2}.\footnote{The character ``-" in the tables denotes the failure of a method to find a Nash equilibrium of a game within $4.32\times 10^4$ seconds or $10^5$ iterations.} The numerical outcomes presented in Tables~\ref{Table0}-\ref{Table2} indicate that the LOGM surpasses the CQPM in terms of both computational time and number of iterations. Furthermore, the LOGM exhibits superior performance and can be effectively employed to compute NashEBSs in large-scale $n$-person extensive-form games.

\begin{table}[H]
\linespread{1} 
\scriptsize
\centering
\caption{{\footnotesize Numerical Performances of LOGM and CQPM for Type A Extensive-Form Games}}\label{Table0}
\begin{tabular*}{\hsize}{@{}@{\extracolsep{\fill}}cc|cc|cc@{}}
\hline
\multicolumn{2}{c}{} & \multicolumn{2}{c}{Computational Time} & \multicolumn{2}{c}{Number of Iterations}\\
 $n,m_i,|A(I_i)|$ & & LOGM & CQPM & LOGM & CQPM\\
\hline
  $3,(1,1,2),(2,10,10)$ & avg & 157.02 & 2298.92 & 112 & 1365 \\
 & min & 123.88 & 855.50 & 96 & 535\\
 & max & 273.85 & 5448.19 & 151 & 3948\\
 $3,(1,1,2),(2,15,15)$ & avg & 338.11 & 10768.05 & 732 & 2104\\
 & min & 196.99 & 1406.75 & 95 & 574\\
 & max & 668.95 & 27890.59 & 4800 & 4920\\
 $3,(1,1,2),(2,20,20)$ & avg & 503.87 & 11129.05 & 141 & 6196\\
 & min & 324.70 & 2238.75 & 96 & 514\\
 & max & 1006.04 & 20213.94 & 240 & 12916\\
 $3,(1,1,2),(2,25,25)$ & avg & 1042.86 & 13149.27 & 190 & 2341\\
 & min & 462.41 & 3866.67 & 93 & 544\\
 & max & 2711.59 & 30341.14 & 622 & 7639\\
\hline
\end{tabular*}
\end{table}

\begin{table}[H]
\linespread{1} 
\scriptsize
\centering
\caption{{\footnotesize Numerical Performances of LOGM and CQPM for Type B Extensive-Form Games}}\label{Table1}
\begin{tabular*}{\hsize}{@{}@{\extracolsep{\fill}}cc|cc|cc@{}}
\hline
\multicolumn{2}{c}{} & \multicolumn{2}{c}{Computational Time} & \multicolumn{2}{c}{Number of Iterations}\\
 $n,m_i,|A(I_i)|$ & & LOGM & CQPM & LOGM & CQPM\\
\hline
 $4,(1,2,2,5),(2,2,5,3)$ & avg & 230.39 & 3066.12 & 141 & 1424 \\
 & min & 141.07 & 715.89 & 106 & 500\\
 & max & 477.66 & 18296.47 & 195 & 5977\\
 $5,(1,2,2,2,5),(2,2,2,5,3)$ & avg & 1618.79 & 19275.86 & 320 & 3745\\
 & min & 692.54 & 4006.13 & 119 & 637\\
 & max & 2706.27 & 30889.78 & 625 & 7450\\
 $6,(1,2,2,2,2,5),(2,2,2,2,5,3)$ & avg & 5948.34 & - & 237 & -\\
 & min & 3287.39 & - & 144 & -\\
 & max & 12020.56 & - & 436 & -\\
 $7,(1,2,2,2,2,2,5),(2,2,2,2,2,5,3)$ & avg & 21856.21 & - & 255 & -\\
 & min & 18343.27 & - & 166 & -\\
 & max & 30634.37 & - & 414 & -\\
\hline
\end{tabular*}
\end{table}

\begin{table}[H]
\linespread{1} 
\scriptsize
\centering
\caption{{\footnotesize Numerical Performances of LOGM and CQPM for Type C Extensive-Form Games}}\label{Table2}
\begin{tabular*}{\hsize}{@{}@{\extracolsep{\fill}}cc|cc|cc@{}}
\hline
\multicolumn{2}{c}{} & \multicolumn{2}{c}{Computational Time} & \multicolumn{2}{c}{Number of Iterations}\\
 $n,m_i,|A(I_i)|,L$ & & LOGM & CQPM & LOGM & CQPM\\
\hline
 $2,(11,21),(2,2),3$ & avg & 469.29 & 5316.91 & 164 & 1364 \\
 & min & 367.55 & 1967.63 & 133 & 585\\
 & max & 670.59 & 17324.77 & 244 & 4312\\
 $2,(43,85),(2,2),4$ & avg & 7644.55 & - & 283 & -\\
 & min & 4192.12 & - & 184 & -\\
 & max & 14800.82 & - & 504 & -\\
 $2,(4,10),(3,3),2$ & avg & 2645.92 & 14742.55 & 249 & 1471\\
 & min & 1833.45 & 5516.90 & 182 & 457\\
 & max & 4458.17 & 30331.55 & 426 & 2889\\
 $3,(5,9,18),(2,2,2),2$ & avg & 2483.75 & 11852.13 & 237 & 1030\\
 & min & 1802.18 & 6478.21 & 174 & 520\\
 & max & 3441.77 & 17018.64 & 335 & 2255\\
\hline
\end{tabular*}
\end{table}

%%%%%%%%%%%%%%%%%%%%%%%%%%%%%%%%%%%%%%%%%%%%

\section{Conclusions}

In this paper, we have presented a characterization of Nash equilibrium in behavioral strategies by introducing an extra behavioral strategy profile and beliefs, which meets the principles of local sequential rationality and self-independent belief systems. Our characterization offers a necessary and sufficient condition for identifying a NashEBS in the form of a polynomial system. The advantages of our approach are notable, as it enables the analytical determination of all NashEBSs in various small-scale extensive-form games. Additionally, a characterization of subgame perfect equilibrium in behavioral strategies is achieved. Moreover, to further boost practical applications of NashEBS and subgame perfect equilibrium, we have capitalized on our characterizations to develop differentiable path-following methods to compute such equilibria. Comprehensive numerical results have been provided to demonstrate the efficiency of the methods. 
The idea of this paper can be exploited to characterize Reny's~\cite{Reny (1992)} weak sequential rationality through the introduction of an extra behavioral strategy profile and an application of local sequential rationality.

{\footnotesize

}
\end{document}